\newtheorem{theorem}{Theorem}
\newtheorem{lemma}{Lemma}
\newtheorem{proposition}{Proposition}
\newtheorem{corollary}{Corollary}
\newtheorem{remark}{Remark}
\begin{document}
%
% paper title
% Titles are generally capitalized except for words such as a, an, and, as,
% at, but, by, for, in, nor, of, on, or, the, to and up, which are usually
% not capitalized unless they are the first or last word of the title.
% Linebreaks \\ can be used within to get better formatting as desired.
% Do not put math or special symbols in the title.
\title{Capacity Bounds for Diamond Networks with an Orthogonal Broadcast Channel}
%
%
% author names and IEEE memberships
% note positions of commas and nonbreaking spaces ( ~ ) LaTeX will not break
% a structure at a ~ so this keeps an author's name from being broken across
% two lines.
% use \thanks{} to gain access to the first footnote area
% a separate \thanks must be used for each paragraph as LaTeX2e's \thanks
% was not built to handle multiple paragraphs
%
\author{Shirin~Saeedi Bidokhti,~\IEEEmembership{Member,~IEEE,}
      and  Gerhard~Kramer,~\IEEEmembership{Fellow,~IEEE,}% <-this % stops a space
\thanks{S. Saeedi Bidokhti and G. Kramer are with the Department for Electrical and Computer Engineering, Technische Universit\"{a}t M\"{u}nchen, Germany (shirin.saeedi@tum.de, gerhard.kramer@tum.de).}% <-this % stops a space
\thanks{The authors are supported by the German Federal Ministry of Education and Research in the framework of the Alexander von Humboldt-Professorship. The work of S. Saeedi Bidokhti was also supported by the Swiss National Science Foundation Fellowship no. 146617.}% <-this % stops a space
%\thanks{Manuscript received April 19, 2005; revised September 17, 2014.}
\thanks{The material in this paper was presented in part at the 2014 IEEE International Symposium on Information Theory, Honolulu, USA.}}

\maketitle

% As a general rule, do not put math, special symbols or citations
% in the abstract or keywords.
\begin{abstract}
A class of diamond networks is studied where the broadcast component is orthogonal and modeled by two independent bit-pipes. New upper and lower bounds on the capacity are derived. The proof technique for the upper bound generalizes bounding techniques of Ozarow for the Gaussian multiple description problem (1981) and Kang and Liu for the Gaussian diamond network (2011). The lower bound is based on Marton's coding technique and superposition coding. The bounds are evaluated for Gaussian and binary adder multiple access channels (MACs). For Gaussian MACs, both the lower and upper bounds strengthen the Kang-Liu bounds and establish capacity for interesting ranges of bit-pipe capacities. {For binary adder MACs, the capacity is established for all ranges of bit-pipe capacities.} 
\end{abstract}

% Note that keywords are not normally used for peerreview papers.
%\begin{IEEEkeywords}
%IEEEtran, journal, \LaTeX, paper, template.
%\end{IEEEkeywords}

% For peer review papers, you can put extra information on the cover
% page as needed:
% \ifCLASSOPTIONpeerreview
% \begin{center} \bfseries EDICS Category: 3-BBND \end{center}
% \fi
%
% For peerreview papers, this IEEEtran command inserts a page break and
% creates the second title. It will be ignored for other modes.
\IEEEpeerreviewmaketitle

\section{Introduction}
% The very first letter is a 2 line initial drop letter followed
% by the rest of the first word in caps.
% 
% form to use if the first word consists of a single letter:
% \IEEEPARstart{A}{demo} file is g..
% 
% form to use if you need the single drop letter followed by
% normal text (unknown if ever used by IEEE):
% \IEEEPARstart{A}{}demo file is ....
% 
% Some journals put the first two words in caps:
% \IEEEPARstart{T}{his demo} file is ....
% 
% Here we have the typical use of a "T" for an initial drop letter
% and "HIS" in caps to complete the first word.
%\IEEEPARstart{T}{his} demo file is intended to serve as a ``starter file''
%for IEEE journal papers produced under \LaTeX\ using
%IEEEtran.cls version 1.8a and later.
% You must have at least 2 lines in the paragraph with the drop letter
% (should never be an issue)
%I wish you the best of success.

%\subsection{Introduction}
The diamond network \cite{Schein01} is a two-hop network that is a cascade of a broadcast channel (BC) and a multiple access channel (MAC). The  two-relay diamond network has a source communicate with a sink through two relay nodes that do not have information of their own to communicate. The underlying challenge may be described as follows. In order to fully utilize the MAC to the receiver, we would  like to achieve full cooperation at the relay nodes. On the other hand,  to better use the diversity that is offered by the relays, we would like to send independent information to the relay nodes over the BC.

The  problem of finding the capacity of this network is unresolved. Lower and upper bounds  on the capacity are given in \cite{Schein01}. An interesting class of networks is when the  BC and/or MAC are modelled via orthogonal links \cite{TraskovKramer07, KangLiu11,KangUlukus11,SanderovichShamaiSteinbergKramer08}. The problem is solved for linear deterministic relay networks, and the capacity of Gaussian relay networks has been approximated within a constant number of bits \cite{AvestimehrDiggaviTse11}. The capacity of Gaussian diamond networks with $n$ relays is studied in \cite{NiesenDiggavi11,SenguptaHsiangFragouli12,ChernOzgur12}. 
These works propose relaying strategies that achieve the cut-set upper bound up to an additive (or multiplicative) gap. %The cut-set bound is the best known upper bound on the capacity of diamond networks, but it is not tight in general. 

In this paper, we study capacity bounds when there are two relays and the BC is orthogonal, which means that the BC may as well have two independent bit-pipes.
This problem was  studied in \cite{TraskovKramer07} where lower and upper bounds were derived on the capacity. Recently, \cite{KangLiu11} studied  a Gaussian MAC and  derived a new upper bound that constrains the mutual information between the MAC inputs. The bounding technique in \cite{KangLiu11} is motivated by \cite{Ozarow80} that treats the Gaussian multiple description problem. Unfortunately, neither result seems to apply to discrete memoryless channels. 

This paper is organized as follows. We state the problem setup in Section \ref{prel}. In Section \ref{secach}, we improve the achievable rates of \cite{TraskovKramer07} by communicating a common piece of information from the source to both relays using superposition coding and Marton's coding. In Section \ref{secupp}, we prove new capacity upper bounds by generalizing and improving the bounding technique of \cite{KangLiu11}.
Our upper bounds apply to the general class of discrete memoryless MACs, and strictly improve the cut-set bound. We study the bounds for networks with a Gaussian MAC (Section \ref{exGauss}) and a binary adder MAC (Section \ref{exAdd}). For  networks with a Gaussian MAC, we find conditions on the bit-pipe capacities such that the upper and lower bounds meet. {For networks with a binary adder MAC, we find the capacity for all ranges of bit-pipe capacities.}

% needed in second column of first page if using \IEEEpubid
%\IEEEpubidadjcol

%\newpage
\section{Preliminaries}
\label{prel}
\subsection{Notation}
Random variables are denoted by capital letters, e.g. $X$, and their realizations are denoted by small letters, e.g. $x$. The probability mass function (pmf) describing $X$ is denoted by $p_X(x)$ or $p(x)$. The entropy of $X$ is denoted by $H(X)$,  the conditional entropy of $X$ given $Y$ is denoted by $H(X|Y)$, and the mutual information between $X$ and $Y$ is denoted by $I(X;Y)$. Differential entropies are denoted by $h(X)$ and conditional differential entropies are denoted by $h(X|Y)$. Sets are denoted by script letters and matrices are denoted by bold capital letters.  The random sequence $X_1,\ldots,X_n$ is denoted by $X^n$. $\mathcal{T}^n_\epsilon(X)$ denotes the set of sequences $x^n$ that are $\epsilon-$ typical with respect to $P_X(.)$ \cite{ElGamalKim}. When $P_X(.)$ is clear from the context we write $\mathcal{T}^n_\epsilon$.

When $X$ is a Bernoulli random variable with $p_X(0)=q$, its entropy in bits is  $h_2(q)=-q\log_2(q)-(1-q)\log_2(1-q)$. The pair of random variables $(X,Y)$ is said to be a doubly symmetric binary source with parameter $p$ if $p_{XY}(0,0)=p_{XY}(1,1)=\frac{1-p}{2}$, and  $p_{XY}(0,1)=p_{XY}(1,0)=\frac{p}{2}$. Throughout this paper, all logarithms are to the base $2$. For a real number $x$, we denote $\max(x,0)$  by $x^+$. %The capacity of the diamond network is denoted by $C^{\diamond}$. 

\subsection{Model}
\begin{figure}[t!]
\centering
\begin{tikzpicture}[scale=1]
\tikzstyle{every node}=[draw,shape=circle];

\path (-1.2,1.5) node[shape=rectangle] (s0) {$\text{Source}$};
\path (1,1.5) node[shape=rectangle] (s) {$\text{Encoder}$};

\path (2.5,3) node[shape=rectangle] (r2) {$\text{Relay\hspace{-.05cm} 1}$};
\path (2.5,0) node[shape=rectangle] (r1) {$\text{Relay\hspace{-.05cm} 2}$};
\path (5,1.5) node[shape=rectangle,minimum height=3.5cm] (mac) {$\hspace{-.1cm}\begin{array}{c}\text{MAC}\\p(\hspace{-.05cm}y|x_1\hspace{-.05cm},\hspace{-.05cm}x_2\hspace{-.05cm})\end{array}\hspace{-.05cm}$};

\path (7.5,1.5) node[shape=rectangle] (d) {$\text{Decoder}$};
\path (9.51,1.5) node[shape=rectangle] (ds) {$\text{Sink}$};

\path (4.25,0) node[draw=none] (inp1) {};
\path (4.25,3) node[draw=none] (inp2) {};
\path (5.75,1.5) node[draw=none] (out) {};

%\draw[->]
  %  (v0) -- (v1) ;
\draw[->] (s0) --node[draw=none,yshift=.2cm]{$W$} (s);  
\draw[->] (d) --node[draw=none,yshift=.2cm]{$\hat{W}$} (ds);  
\draw[->] (s) --node[left,draw=none]{$V^n_2$} (r1);
\draw[->] (s) --node[left,draw=none]{$V^n_1$} (r2);
\draw[->] (r2) --node[draw=none,yshift=.2cm]{$X^n_1$} (inp2);
\draw[->] (r1) --node[draw=none,yshift=.2cm]{$X^n_2$} (inp1);
\draw[->] (out) --node[draw=none,yshift=.2cm]{$Y^n$} (d);

\end{tikzpicture}
\caption{Problem setup.}
\label{two-user-mac}
\end{figure}

Consider the diamond network in Fig. \ref{two-user-mac}.  A source communicates a message $W$ with $nR$ bits to a sink.  
The source encodes $W$ into the sequence $V_1^n$, which is available at relay $1$, and the sequence $V_2^n$, which is available at relay $2$. $V_1^n$ and $V_2^n$ are such that $H(V_1^n)\leq nC_1$ and $H(V_2^n)\leq nC_2$.
Each relay $i$, $i=1,2$,  maps its received sequence $V_i^n$ into a sequence $X_i^n$ which is sent over a MAC with transition probabilities $p(y|x_1,x_2)$, for each $x_1\in\mathcal{X}_1$, $x_2\in\mathcal{X}_2$, $y\in\mathcal{Y}$.
From the received sequence $Y^n$, the sink decodes an estimate $\hat{W}$ of $W$.

A coding scheme consists of an encoder, two relay mappings, and a decoder, and is said to achieve
the rate $R$ if, by choosing $n$ sufficiently large, we can make the error probability $\Pr(\hat{W}\neq W)$ as small as desired. We are interested in characterizing the largest achievable rate $R$. We refer to the maximum achievable rate as the capacity $C^\diamond$ of the network.

\section{Lower bound}
\label{secach}
Our coding scheme is based on \cite{TraskovKramer07}, but we further send a common message to both relaying nodes. This is  done by rate splitting, superposition coding and Marton's coding and is summarized in the following theorem. %\footnote{A similar lower bound  appears  independently  in \cite{KangLiuChong15}}.
\begin{theorem}
\label{lower bound}
The rate $R$ is achievable if it satisfies the following condition for some pmf $p(u,x_1,x_2,y)=p(u,x_1,x_2) p(y|x_1,x_2)$, and $U\in\mathcal{U}$ with $|\mathcal{U}|\leq \min\{|\mathcal{X}_1||\mathcal{X}_2|+3,|\mathcal{Y}|+4\}$.
\begin{align}
\label{inner}
R \leq  \min  \left\{    \begin{array}{l}C_1+C_2-I(X_1;X_2|U),\\C_2+I(X_1;Y|X_2U),\\C_1+I(X_2;Y|X_1U),\\\frac{1}{2}(C_1 + C_2 + I(X_1X_2;Y|U) - I(X_1;X_2|U) ),\\I(X_1X_2;Y)\end{array}    \right\}
\end{align}
\end{theorem}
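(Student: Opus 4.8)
The plan is to combine rate splitting, superposition coding through a cloud center $U$, and Marton's mutual-covering technique for the two relay inputs, and then to project the resulting rate region onto $R$ by Fourier--Motzkin elimination.

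First I would split the message as $W=(W_0,W_1,W_2)$ with rates $R_0,R_1,R_2$, so that $R=R_0+R_1+R_2$, where $W_0$ is a common message destined for \emph{both} relays. The codebook is generated in two layers: $2^{nR_0}$ cloud centers $u^n(w_0)\sim\prod_i p_U(u_i)$, and, superimposed on each $u^n(w_0)$, two binned subcodebooks $x_1^n(w_0,w_1,l_1)\sim\prod_i p_{X_1|U}(x_{1i}|u_i)$ and $x_2^n(w_0,w_2,l_2)\sim\prod_i p_{X_2|U}(x_{2i}|u_i)$, with $l_1\in[2^{nR_1'}]$ and $l_2\in[2^{nR_2'}]$ indexing the bins. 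Given $(w_0,w_1,w_2)$, the encoder searches the bins for indices $(l_1,l_2)$ making $(u^n,x_1^n,x_2^n)$ jointly typical; by the mutual covering lemma this succeeds with high probability provided
\[ R_1'+R_2' > I(X_1;X_2\mid U). \]
The source then conveys $(w_0,w_1,l_1)$ to relay $1$ and $(w_0,w_2,l_2)$ to relay $2$, which forces the bit-pipe constraints
\[ R_0+R_1+R_1'\le C_1, \qquad R_0+R_2+R_2'\le C_2, \]
and each relay transmits its selected $x_i^n$.

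The decoder performs joint-typicality decoding of $(w_0,w_1,w_2)$ from $y^n$. The key point I would establish is that, because the transmitted pair is drawn from the correlated codebook while the covering overhead is already paid on the bit-pipes, the decoding error events behave exactly as for a multiple-access channel with a common message, giving
\begin{align*}
R_1 &\le I(X_1;Y\mid X_2U), \\
R_2 &\le I(X_2;Y\mid X_1U), \\
R_1+R_2 &\le I(X_1X_2;Y\mid U), \\
R_0+R_1+R_2 &\le I(X_1X_2;Y).
\end{align*}
Collecting the covering condition, the two bit-pipe constraints, and these four packing constraints, I would eliminate $R_0,R_1,R_2,R_1',R_2'$ by Fourier--Motzkin. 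Adding the two pipe constraints and substituting the covering bound gives $2R_0+R_1+R_2\le C_1+C_2-I(X_1;X_2\mid U)$, which with $R_0\ge0$ yields the first bound and, combined with $R_1+R_2\le I(X_1X_2;Y\mid U)$, the fourth; pairing a single pipe constraint with $R_i\le I(X_i;Y\mid X_jU)$ yields the second and third; and the common-message packing bound yields the fifth. Finally I would invoke the support lemma to cap $|\mathcal U|$, preserving $p(x_1,x_2)$ together with the four conditional mutual informations in the bounds (giving $|\mathcal X_1||\mathcal X_2|+3$), or alternatively preserving $p(y)$ together with the five mutual-information functionals (giving $|\mathcal Y|+4$), and take the minimum.

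The hard part will be the decoding error analysis under Marton binning: when a private message is decoded incorrectly, the re-selected binning index generally changes \emph{both} transmitted codewords, so a naive union bound double-counts the covering overhead and would replace $I(X_1;X_2\mid U)$ by $2\,I(X_1;X_2\mid U)$ in the sum-type bounds. The crux is to argue that the selected codeword pairs already carry the target conditional correlation, so that the overhead $I(X_1;X_2\mid U)$ is charged only once---on the bit-pipes---and does not reappear in the packing constraints; handling this coupling carefully is precisely what makes the fourth bound come out with $-I(X_1;X_2\mid U)$ rather than twice that quantity.
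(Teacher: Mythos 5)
Your codebook construction, encoding, covering condition, and bit-pipe constraints match the paper's scheme, but your packing constraints are not what the error analysis yields, and this is where the proof breaks. A wrong candidate pair $(x_1^n(1,\tilde m_1,\tilde m_1'),x_2^n(1,\tilde m_2,\tilde m_2'))$ is generated conditionally independently given $u^n$, so its probability of being jointly typical with $y^n$ is roughly $2^{-n(I(X_1X_2;Y|U)+I(X_1;X_2|U))}$: the extra $I(X_1;X_2|U)$ is the price a wrong pair pays to look jointly typical \emph{with itself}, and the union bound must run over message \emph{and} bin indices. The correct conditions (as in the paper's Appendix A) are therefore $R_1+R_1'< I(X_1;Y|X_2U)+I(X_1;X_2|U)$, $R_2+R_2'< I(X_2;Y|X_1U)+I(X_1;X_2|U)$, $R_1+R_1'+R_2+R_2'< I(X_1X_2;Y|U)+I(X_1;X_2|U)$, and $R_0+R_1+R_1'+R_2+R_2'< I(X_1X_2;Y)+I(X_1;X_2|U)$. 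Your constraints drop both the bin rates on the left and the $+I(X_1;X_2|U)$ terms on the right, and these omissions do not cancel. Indeed, as stated your sufficiency claim is false: your constraint set permits $R_1'$ arbitrarily large when $C_1$ is large, in which case among the $2^{n(R_1+R_1')}$ wrong-message candidates some codeword is jointly typical with $(x_2^n,y^n)$ with high probability and decoding fails. Your closing paragraph flags the right place but proposes the wrong fix: no argument is needed to prevent $I(X_1;X_2|U)$ from being ``charged twice''--the standard union bound automatically places it on the right-hand side of every packing constraint, which \emph{weakens} them; the worry about a $2I(X_1;X_2|U)$ penalty never materializes.

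Second, and independently of the error analysis, your constraint set does not Fourier--Motzkin-project onto \eqref{inner}; you only verified that the five bounds are implied by your constraints, not the converse. Summing your two pipe constraints with the covering condition gives $2R_0+R_1+R_2\le C_1+C_2-I(X_1;X_2|U)$, and adding your individual packing constraints $R_1\le I(X_1;Y|X_2U)$, $R_2\le I(X_2;Y|X_1U)$ produces the unavoidable extra facet
\begin{align*}
2R\;\le\; C_1+C_2-I(X_1;X_2|U)+I(X_1;Y|X_2U)+I(X_2;Y|X_1U),
\end{align*}
which is strictly tighter than all five bounds of the theorem whenever $I(X_1X_2;Y|U)>I(X_1;Y|X_2U)+I(X_2;Y|X_1U)$---exactly what happens for positively correlated inputs on the MACs this paper targets. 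Concretely, for the Gaussian MAC with $P_1=P_2=0.1$, $U$ constant, jointly Gaussian inputs with $\rho=0.5$, and $C_1=C_2=0.2$, the theorem gives $R=I(X_1X_2;Y)\approx 0.189$, while your region is capped at about $0.148$. The missing ingredient is precisely the $+I(X_1;X_2|U)$ slack in the individual packing constraints: it allows the private rates to \emph{exceed} $I(X_i;Y|X_jU)$, with the excess hidden in bin indices the decoder never needs to resolve. Without that slack, the correlation penalty is paid twice---once on the bit-pipes through covering and once in packing---and the second, third, and fourth bounds of \eqref{inner} become unreachable.
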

\begin{remark}
If $U$ is a constant then the fourth bound in \eqref{inner} is redundant as it is half the sum of the first and fifth bounds. This shows that Theorem \ref{lower bound} without a $U$ reduces to \cite[Theorem 1]{TraskovKramer07}. $U$ turns out to be useful for Gaussian MACS, as shown in Fig.~\ref{closeup}.  Theorem \ref{lower bound} appeared in \cite[Theorem 2]{SaeediKramer14}  and also in \cite[Theorem 2]{KangLiuChong15}. 
\end{remark}
\begin{remark}
One could add a time-sharing random variable Q to \eqref{inner}. However, by combining $Q$ with $U$, one can check that Theorem \ref{lower bound} is at least as large as this region.
\end{remark}

\begin{proof}[Sketch of proof]
\ %The inner bound in Theorem \ref{lower bound} is found using the following achievable scheme.
\paragraph{Codebook construction}
Fix the joint pmf $p(u,x_1,x_2)$ and $R_{12},R_1,R_2,R_1^\prime,R_2^\prime \geq 0$. Let
\begin{align}
\label{ratesplit}
R=R_{12}+R_1+R_2.
\end{align}
Generate $2^{nR_{12}}$ sequences $u^n(m_{12})$  independently, each in an i.i.d manner according to $\prod_lP_{U}(u_l)$.
For each sequence $u^n(m_{12})$, generate (i) $2^{n(R_1+R_1^\prime)}$ sequences $x^n_1(m_{12},m_1,m^\prime_1)$, $m_1=1,\ldots,2^{nR_1}$, $m_1^\prime=1,\ldots,2^{nR_1^\prime}$, conditionally independently, each in an i.i.d manner according to $\prod_lP_{X_1|U}(x_{1,l}|u_l(m_{12}))$ and (ii) $2^{n(R_2+R_2^\prime)}$  sequences $x^n_2(m_{12},m_2,m^\prime_2)$, $m_2=1,\ldots,2^{nR_2}$, $m_2^\prime=1,\ldots,2^{nR_2^\prime}$,  conditionally independently, each in an i.i.d manner according to $\prod_lP_{X_2|U}(x_{2,l}|u_l(m_{12}))$.
For each bin index $(m_{12},m_1,m_2)$, pick a sequence pair $(x^n_1(m_{12},m_1,m^\prime_1),x^n_2(m_{12},m_2,m^\prime_2))$ that is jointly typical. 

\paragraph{Encoding}
To communicate message $W=(m_{12},m_1,m_2)$, communicate $(m_{12},m_1,m^\prime_1)$ to relay $1$  and $(m_{12},m_2,m^\prime_2)$ to relay $2$; here $(x^n_1(m_{12},m_1,m_1^\prime),x^n_2(m_{12},m_2,m_2^\prime))$ is the  jointly typical pair picked in the bin indexed by $(m_{12},m_1,m_2)$.
\paragraph{Decoding}
Upon receiving $y^n$, the receiver looks for indices $\hat{m}_{12},\hat{m}_1,\hat{m}_2$ for which the following tuple is jointly typical for some $\hat{m}^\prime_1$, $\hat{m}^\prime_2$: $$(u^n(\hat{m}_{12}),x^n_1(\hat{m}_{12},\hat{m}_1,\hat{m}^\prime_1),x^n_2(\hat{m}_{12},\hat{m}_2,\hat{m}^\prime_2),y^n)\in\mathcal{T}^n_\epsilon.$$
\paragraph{Error Analysis}
The error analysis is standard and is deferred to Appendix \ref{aperroranalysis}.  {Eliminating} $R_{12},R_1,R_2,R_1^\prime,R_2^\prime$ by Fourier-Motzkin elimination, we arrive at Theorem \ref{lower bound}. Cardinality bounds follow by using the standard method via the Fenchel-Eggleston-Carath\'{e}odory theorem \cite[Appendix~C]{ElGamalKim}, \cite[Appendix~B]{cribbing85}.
\end{proof}
\begin{proposition}
\label{thmconcave}
The lower bound of Theorem \ref{lower bound} is concave in $C_1,C_2$.
\end{proposition}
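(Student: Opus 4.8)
The plan is to prove concavity by a time-sharing argument that combines two near-optimal input pmfs, taking care that the fifth bound must be handled differently from the first four. Write $R^\star(C_1,C_2)$ for the value of the lower bound, that is, the supremum of $\min\{f_1,\dots,f_5\}$ over all admissible pmfs $p(u,x_1,x_2)$, where $f_1,\dots,f_5$ denote the five expressions appearing in \eqref{inner}. Fix two operating points $(C_1^{(0)},C_2^{(0)})$ and $(C_1^{(1)},C_2^{(1)})$, a parameter $\lambda\in[0,1]$, and set $\bar\lambda=1-\lambda$. For $j\in\{0,1\}$ I would pick a pmf $p_j(u,x_1,x_2)$ whose value of $\min\{f_1,\dots,f_5\}$ comes within $\epsilon$ of $R_j:=R^\star(C_1^{(j)},C_2^{(j)})$; in particular each of the five bounds evaluated under $p_j$ at the point $(C_1^{(j)},C_2^{(j)})$ is at least $R_j-\epsilon$.

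First I would build a single combined pmf. Introduce a time-sharing variable $Q$ with $\Pr(Q=0)=\lambda$ and $\Pr(Q=1)=\bar\lambda$, let the conditional law of $(U,X_1,X_2)$ given $Q=j$ be $p_j$, and enlarge the auxiliary to $\tilde U=(U,Q)$; the output $Y$ is generated from the fixed channel, so $Q-(X_1,X_2)-Y$ is a Markov chain. I then evaluate the five bounds under this pmf at the convex-combination point $(C_1^\lambda,C_2^\lambda)$ with $C_i^\lambda=\lambda C_i^{(0)}+\bar\lambda C_i^{(1)}$. The key observation is that conditioning on $Q$ splits every conditional mutual information linearly, for instance $I(X_1;X_2\mid \tilde U)=\lambda I_0(X_1;X_2\mid U)+\bar\lambda I_1(X_1;X_2\mid U)$, and likewise for $I(X_1;Y\mid X_2\tilde U)$, $I(X_2;Y\mid X_1\tilde U)$, and $I(X_1X_2;Y\mid \tilde U)$. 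Since the $C_i$-terms are affine, each of the first four bounds under the combined pmf equals the corresponding convex combination of its values under $p_0$ and $p_1$, and is therefore at least $\lambda R_0+\bar\lambda R_1-\epsilon$. This is essentially the combining of $Q$ into $U$ noted in the second remark after Theorem \ref{lower bound}.

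The hard part is the fifth bound $I(X_1X_2;Y)$, which carries no conditioning on the auxiliary and hence does not split under $Q$. Here I would instead observe that, under the combined pmf, the marginal input law of $(X_1,X_2)$ is exactly the mixture $\lambda\,p_0(x_1,x_2)+\bar\lambda\,p_1(x_1,x_2)$, and that $I(X_1X_2;Y)$ is a concave function of the input pmf for the fixed channel $p(y\mid x_1,x_2)$. Concavity then gives $I(X_1X_2;Y)\ge \lambda I_0(X_1X_2;Y)+\bar\lambda I_1(X_1X_2;Y)\ge \lambda R_0+\bar\lambda R_1-\epsilon$. Combining the five estimates, $\min\{f_1,\dots,f_5\}\ge \lambda R_0+\bar\lambda R_1-\epsilon$ under the combined pmf, so $R^\star(C_1^\lambda,C_2^\lambda)\ge \lambda R_0+\bar\lambda R_1-\epsilon$; letting $\epsilon\to 0$ yields concavity.

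A minor loose end is that the enlarged auxiliary $\tilde U$ may violate the cardinality constraint of Theorem \ref{lower bound}. This is harmless: the cardinality bound is established via the Fenchel-Eggleston-Carath\'eodory theorem and does not change the value of the supremum $R^\star$, so it can simply be re-applied after the combining step. I expect the fifth-bound step to be the main point of the argument, since it is precisely the one term that cannot be handled by the linear time-sharing split and must instead rely on concavity of mutual information in the input distribution.
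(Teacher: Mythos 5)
Your proof is correct and is essentially the paper's own argument: the paper also introduces a time-sharing variable $Q$ with conditional laws given by the two (near-)optimal pmfs, absorbs $Q$ into the auxiliary $U$, splits the $U$-conditioned bounds linearly, and handles the unconditioned fifth bound via $I(UX_1X_2;Y|Q)\leq I(UX_1X_2;Y)=I(X_1X_2;Y)$, which is exactly your concavity-of-mutual-information step. The only cosmetic differences are that the paper phrases the argument as a contradiction and works with exact maximizers rather than $\epsilon$-optimal pmfs.
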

\begin{proof}
We prove the statement for $C_1=C_2=C$. The same argument holds in general.
We express the lower bound of Theorem \ref{lower bound} in terms of the following maximization problem. 
\begin{align}
f_\ell(C)=\max_{p(u,x_1,x_2)}f^p_\ell(C,p(u,x_1,x_2))
\end{align} 
In this formulation, $f^p_\ell(C,p(u,x_1,x_2))$ is  the minimum term on the right hand side (RHS) of \eqref{inner}.

The proof  is by contradiction. Suppose that the lower bound is not concave in $C$; i.e., there exist  values $C^{(1)},C^{(2)}$, and $\alpha$, $0\leq\alpha\leq 1$, such that $C^\star=\alpha C^{(1)}+(1-\alpha)C^{(2)}$ and $f_\ell(C^\star)<\alpha f_\ell(C^{(1)})+(1-\alpha)f_\ell(C^{(2)})$. 
Let $p^{(1)}(u,x_1,x_2)$ (resp. $p^{(2)}(u,x_1,x_2)$) be  the pmf that maximizes $f^p_\ell(C^{(1)},p(u,x_1,x_2))$ (resp. $f^p_\ell(C^{(2)},p(u,x_1,x_2))$). Let $p_Q(1)=\alpha$, $p_Q(2)=1-\alpha$, and define $p_{UX_1X_2|Q}(u,x_1,x_2|1)=p^{(1)}(u,x_1,x_2)$ and $p_{UX_1X_2|Q}(u,x_1,x_2|2)=p^{(2)}(u,x_1,x_2)$. Then we have
\begin{align}
&f_\ell(C^\star)\nonumber\\
&<\alpha f_\ell(C^{(1)})+(1-\alpha)f_\ell(C^{(2)})\nonumber\\
&\leq\min  \left\{   \begin{array}{l}2C^\star-I(X_1;X_2|UQ),\\C^\star+I(X_1;Y|X_2UQ),\\C^\star+I(X_2;Y|X_1UQ),\\\frac{1}{2}(2C^\star + I(X_1X_2;Y|UQ) - I(X_1;X_2|UQ)),\\I(UX_1X_2;Y|Q)\end{array}     \right\}\nonumber\\
&\leq\min  \left\{   \begin{array}{l}2C^\star-I(X_1;X_2|UQ),\\C^\star+I(X_1;Y|X_2UQ),\\C^\star+I(X_2;Y|X_1UQ),\\\frac{1}{2}(2C^\star + I(X_1X_2;Y|UQ) - I(X_1;X_2|UQ)),\\I(UX_1X_2;Y)\end{array}     \right\}\label{referenceRHS}\\
&\stackrel{(a)}{\leq} f_\ell(C^\star)\label{referenceLHS}.
\end{align}
Step $(a)$ follows by renaming $(U,Q)$ a $U$ and comparing \eqref{referenceRHS} with the lower bound  of Theorem~\ref{lower bound}. %So we have a contradiction, and the proposition is proved.
\end{proof}

\section{An upper bound}
\label{secupp}
The idea behind our upper bound is motivated by \cite{KangLiu11,Ozarow80}. The proposed bound applies not only to Gaussian channels, but also to general discrete memoryless channels. It strictly improves the cut-set bound as we show via two examples.
The cut-set bound \cite[Theorem 15.10.1]{CoverThomas} is given by the following Lemma.
\begin{lemma}[Cut-Set Bound]
The capacity $C^{\diamond}$ satisfies
\begin{align}
C^{\diamond}\leq \max_{p(x_1,x_2)}\min\left\{\begin{array}{l} C_1+C_2,\\
 C_1+I(X_2;Y|X_1),\\
 C_2+I(X_1;Y|X_2),\\
 I(X_1X_2;Y)\end{array}\right\}.\label{disregard}
\end{align}
\end{lemma}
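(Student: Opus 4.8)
The plan is to prove the cut-set bound by applying the general cut-set theorem to each of the network's cuts and then simplifying the resulting mutual-information terms using the structure of the diamond network, namely that the broadcast links are bit-pipes of capacities $C_1$ and $C_2$. Formally, for any achievable rate $R$ and any subset $\mathcal{S}$ of the nodes separating source from sink, we have $R \le I(X_{\mathcal{S}}; Y_{\mathcal{S}^c} \mid X_{\mathcal{S}^c})$ evaluated for some joint input distribution. I would enumerate the four nontrivial cuts of the diamond: the cut separating the source from the two relays, the two cuts isolating one relay on the source side and the other on the sink side, and the cut separating both relays from the sink.

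First I would handle the cut that places the source alone on one side, so the broadcast links $V_1^n, V_2^n$ cross the cut. Here the information flow is limited by the total entropy the two bit-pipes can carry, so $nR \le H(V_1^n) + H(V_2^n) \le n(C_1 + C_2)$, which yields the first term $C_1 + C_2$. Next I would treat the cut with relay $1$ on the source side and relay $2$ on the sink side: the flow across this cut is carried by the bit-pipe $V_2^n$ (contributing at most $C_2$) together with the MAC output when $X_1$ is known, giving $I(X_1; Y \mid X_2)$; symmetrically, the complementary cut gives $C_1 + I(X_2; Y \mid X_1)$. Finally, the cut placing both relays on the source side leaves only the MAC crossing it, yielding $I(X_1 X_2; Y)$.

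Having derived the single-letter terms for each cut, I would then argue the standard single-letterization: each $n$-letter mutual information is bounded by $n$ times its single-letter counterpart evaluated under the average per-symbol input distribution, using the memoryless property of the MAC $p(y \mid x_1, x_2)$ and the data-processing and chain-rule inequalities. Since $R$ must be below every cut simultaneously, we take the minimum over the four bounds, and then maximize over the admissible input pmf $p(x_1, x_2)$ to obtain the tightest statement, giving exactly \eqref{disregard}.

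The main obstacle, and the step requiring the most care, is correctly identifying which terms the broadcast bit-pipes contribute versus which come from the MAC. Because the broadcast component is modeled as orthogonal bit-pipes rather than a noisy channel, the cut-set terms mix a deterministic rate constraint ($C_i$) from the bit-pipe with a mutual-information term from the MAC; one must be careful that the bit-pipe constraint $H(V_i^n) \le nC_i$ is applied to the correct cut and that no input distribution dependence is introduced spuriously through the bit-pipes. This is more bookkeeping than deep difficulty, since the result is the textbook cut-set bound specialized to this topology, but getting the conditioning structure right in each of the four terms is where an error would most likely creep in.
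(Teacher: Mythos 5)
Your proposal is correct and takes essentially the same route as the paper: the paper proves this lemma simply by invoking the standard network cut-set theorem (\cite[Theorem 15.10.1]{CoverThomas}) specialized to the four cuts of the diamond with the bit-pipe links contributing $C_1$, $C_2$, which is exactly the enumeration and single-letterization you describe. One trivial wording slip: for the cut with relay 1 on the source side, the MAC term $I(X_1;Y|X_2)$ arises because $X_2$ (not $X_1$) is known on the sink side of the cut, though the formula you pair with $C_2$ is the correct one.
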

\noindent The cut-set bound disregards the potential correlation between the inputs in the first term of \eqref{disregard}. More precisely, we have
\begin{align}
nR&\leq H(V_1^n,V_2^n)\nonumber\\&= H(V_1^n)+H(V_1^n)-I(V_1^n;V_2^n)\nonumber\\&\leq nC_1+nC_2-I(X_1^n;X_2^n).\label{using}
\end{align}
It is noted in \cite{TraskovKramer07} that optimizing the following $n$-letter characterization gives the capacity of the network when  $n\to \infty$:
\begin{align}
nR&\leq nC_1+nC_2-I(X^n_1;X^n_2)\label{disregardmulti}\\
nR&\leq nC_1+I(X^n_2;Y^n|X^n_1)\label{secondineq}\\
nR&\leq nC_2+I(X^n_1;Y^n|X^n_2)\label{thirdineq}\\
nR&\leq I(X^n_1X^n_2;Y^n).\label{sumwith}
\end{align}
But, infinite letter characterizations are usually non-computable and we would like to find computable bounds.

We prove the following upper bound.
\begin{theorem}
\label{thmupp}
The capacity $C^{\diamond}$ satisfies
\begin{align}
C^{\diamond}\leq \max_{p(x_1,x_2)}\min_{p(u|x_1,x_2,y)=p(u|y)}\min\left\{\begin{array}{l}
C_1+C_2,\\
C_1+I(X_2;Y|X_1),\\
C_2+I(X_1;Y|X_2),\\
 I(X_1X_2;Y),\\
\frac{1}{2}\left( C_1+C_2+I(X_{1}X_{2};Y|U)+I(X_{1};U|X_{2})+I(X_{2};U|X_{1})\right)\end{array}\right\}.\label{thmupp15}
\end{align}
\end{theorem}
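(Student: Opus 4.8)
The first four expressions in \eqref{thmupp15} are the cut-set bound, so the only genuinely new content is the fifth expression; the plan is to derive it and then take the minimum against the four cut-set terms. I would begin from the $n$-letter inequalities \eqref{disregardmulti}--\eqref{sumwith}, but instead of single-letterizing them one at a time I would \emph{add} the correlation-penalized bound \eqref{disregardmulti} to the sum-rate bound \eqref{sumwith}, obtaining
\begin{align}
2nR\le n(C_1+C_2)-I(X_1^n;X_2^n)+I(X_1^nX_2^n;Y^n).
\end{align}
This already supplies the factor $\tfrac12$ and the term $C_1+C_2$ of the target; the task is to convert $-I(X_1^n;X_2^n)+I(X_1^nX_2^n;Y^n)$ into the conditional-auxiliary form.

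Next I would introduce the auxiliary. Fix an arbitrary kernel $p(u|y)$ and apply it symbol-by-symbol to $Y^n$, so that $p(u^n,y^n|x_1^n,x_2^n)=\prod_i p(y_i|x_{1i},x_{2i})\,p(u_i|y_i)$ and $U^n-Y^n-(X_1^n,X_2^n)$ is Markov. Then $I(X_1^nX_2^n;U^n|Y^n)=0$, so $I(X_1^nX_2^n;Y^n)=I(X_1^nX_2^n;U^n)+I(X_1^nX_2^n;Y^n|U^n)$, and the key algebraic step is the co-information identity
\begin{align}
I(X_1^nX_2^n;U^n)-I(X_1^n;X_2^n)=I(X_1^n;U^n|X_2^n)+I(X_2^n;U^n|X_1^n)-I(X_1^n;X_2^n|U^n),
\end{align}
whose nonnegative last term I would simply discard. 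This is exactly where the technique generalizes Ozarow and Kang--Liu: the correlation penalty $-I(X_1^n;X_2^n)$ that the cut-set bound throws away is traded, through the auxiliary, for $I(X_1^n;U^n|X_2^n)+I(X_2^n;U^n|X_1^n)$. The outcome is the $n$-letter version of the fifth bound,
\begin{align}
2nR\le n(C_1+C_2)+I(X_1^nX_2^n;Y^n|U^n)+I(X_1^n;U^n|X_2^n)+I(X_2^n;U^n|X_1^n).
\end{align}

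The step I expect to be the main obstacle is the single-letterization, where memorylessness is indispensable. Bounding $H(Y^n|U^n)$, $H(U^n|X_1^n)$ and $H(U^n|X_2^n)$ by the corresponding sums of single-letter entropies (conditioning reduces entropy), and using the \emph{exact} factorizations $H(U^n|X_1^nX_2^n)=\sum_i H(U_i|X_{1i}X_{2i})$ and $H(Y^n|X_1^nX_2^nU^n)=\sum_i H(Y_i|X_{1i}X_{2i}U_i)$ (both immediate from the product form above), every term on the right decomposes into a sum of single-letter terms. Introducing a uniform time-sharing index $Q$ and setting $X_1=X_{1Q}$, $X_2=X_{2Q}$, $Y=Y_Q$, $U=U_Q$, the single-letter Markov constraint $p(u|x_1,x_2,y)=p(u|y)$ survives precisely because the kernel is the same at every time; this is what keeps $U$ admissible in \eqref{thmupp15} and is the delicate point to verify, since naively folding $Q$ into the auxiliary would destroy this constraint.

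Finally I would assemble the bound. Because the kernel $p(u|y)$ was arbitrary, the fifth inequality holds for every admissible $U$, which supplies the inner minimization, while the induced single-letter input distribution supplies the outer maximization over $p(x_1,x_2)$. The residual time-sharing variable $Q$ is then removed by the same concavity-in-the-input-distribution argument that underlies the cut-set bound and Proposition~\ref{thmconcave}. Taking the minimum of the resulting fifth term with the four cut-set expressions yields \eqref{thmupp15}. The two places that require real care are the co-information identity that introduces $U$ and the verification that the auxiliary may be taken memoryless, so that both the single-letterization and the Markov constraint hold at once.
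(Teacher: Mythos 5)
Your proposal is correct and follows essentially the same route as the paper's proof of Theorem \ref{thmupp}: the same co-information identity \eqref{identityI}, the same pairing of the correlation-penalized bound \eqref{disregardmulti} with the sum-rate bound \eqref{sumwith} to produce the factor $\tfrac{1}{2}$, a symbol-wise auxiliary channel $p(u|y)$, and single-letterization via time sharing with concavity in $p(x_1,x_2)$ absorbing the time-sharing index. The differences are cosmetic only—you add the two $n$-letter bounds before invoking the identity where the paper applies the identity first and then adds \eqref{eq4}, and you single-letterize through exact entropy factorizations rather than the paper's Markov-chain argument \eqref{Markovchain}, which are equivalent steps (and note the concavity you need is that of Remark \ref{concave}/Appendix \ref{app1}, not Proposition \ref{thmconcave}).
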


%\textcolor{red}{\begin{remark}
%The upper bound given by Theorem \ref{thmupp} is in the form of a max-min problem, where the maximization is over $p(x_1,x_2)$ and the minimization is over all distributions $p(u|x_1,x_2,y)$ where $p(u|x_1,x_2,u)=p(u|y)$. \end{remark}}
\begin{remark}
\label{concave}
For a fixed auxiliary channel $p(u|x_1,x_2,y)$ and a fixed MAC $p(y|x_1,x_2)$, all RHS terms in \eqref{thmupp15} are concave in $p(x_1,x_2)$. See Appendix \ref{app1}. 
\end{remark}
\begin{remark}
The last term of the  minimum in \eqref{thmupp15} may be written as
\begin{align}
\label{equivthm5}
R\leq \frac{1}{2}\left(C_1+C_2+I(X_{1}X_{2};YU)-I(X_{1};X_2)+I(X_{1};X_2|U)\right).
\end{align}
Since we choose $p(u|x_1,x_2,y)=p(u|y)$,  the bound \eqref{equivthm5} becomes
\begin{align}
\label{equivthm6}
2R\leq  C_1+C_2+I(X_{1}X_{2};Y)-I(X_{1};X_2)+I(X_{1};X_2|U).
\end{align}
\end{remark}
\begin{proof}[Proof of Theorem \ref{thmupp}]
It is observed in \cite{KangLiu11} that $I(X_1^n;X_2^n)$ may be written in the following form for  \textit{any} integer $n$, and \textit{any} random sequence $U^n$:
\begin{align}
I(X_1^n;X_2^n)=&  I(X_1^nX_2^n;U^n)-I(X_1^n;U^n|X_2^n)-I(X_2^n;U^n|X_1^n)+I(X_1^n;X_2^n|U^n).\label{identityI}
\end{align}
Therefore, using \eqref{using} and the non-negativity of mutual information we have
\begin{align}
nR\leq& nC_1+nC_2-I(X_1^nX_2^n;U^n)+I(X_1^n;U^n|X_2^n)+I(X_2^n;U^n|X_1^n).\label{eqref}
\end{align}
To see the usefulness of \eqref{eqref}, we proceed as follows.
First, note that
\begin{align}
nR&\leq I(X_1^nX_2^n;Y^n)\leq I(X_1^nX_2^n;Y^nU^n).\label{eq4}
\end{align}
Combining inequalities  \eqref{eqref} and \eqref{eq4}, we have
\begin{align}
\label{sumbound}
2nR\leq& nC_1+nC_2+I(X_1^nX_2^n;Y^n|U^n)+I(X_1^n;U^n|X_2^n)+I(X_2^n;U^n|X_1^n).
\end{align}
Define $U_i$ from $X_{1i}, X_{2i}, Y_i$ through the channel $p_{U|X_{1} X_{2} Y}(u_i|x_{1i},x_{2i},y_i)$, $i=1,2,\ldots,n$. With this choice of $U_i$ we have the following chain of inequalities:
\allowdisplaybreaks
\begin{align}
2nR&\leq nC_1+nC_2+I(X_1^nX_2^n;Y^n|U^n)+I(X_1^n;U^n|X_2^n)+I(X_2^n;U^n|X_1^n)\nonumber\\
&= nC_1+nC_2+\sum_{i=1}^nI(X_1^nX_2^n;Y_i|U^nY^{i-1})+\sum_{i=1}^nI(X_1^n;U_i|X_2^nU^{i-1} ) + \sum_{i=1}^nI(X_2^n;U_i|X_1^nU^{i-1} )\nonumber\\
&\stackrel{(a)}{\leq} nC_1+nC_2+\sum_{i=1}^nI(X_{1i}X_{2i};Y_i|U_i)+\sum_{i=1}^nI(X_{1i};U_i|X_{2i})+\sum_{i=1}^nI(X_{2i};U_i|X_{1i})\nonumber\\
&\leq nC_1+nC_2+nI(X_{1I},X_{2I};Y_I|U_I)+nI(X_{1I};U_I|X_{2I})+nI(X_{2I};U_I|X_{1I}),
\end{align}
where $I$ is a time-sharing random variable with $p_I(i)=\frac{1}{n}$ for all $i=1,\ldots,n$. Step $(a)$ holds because of the following two Markov chains:
\begin{align}
&(X_1^nX_2^nU^nY^{i-1})-(X_{1i}X_{2i}U_i)-Y_i\\
&(X_1^nX_2^nU^{i-1})-(X_{1i}X_{2i})-U_i.\label{Markovchain}
\end{align}
From here on, for simplicity we restrict the auxiliary channel to satisfy $$p_{U|X_{1} X_{2} Y}(u|x_{1},x_{2},y)=p_{U| Y}(u|y),\quad \forall x_1\in \mathcal{X}_1,\ x_2\in \mathcal{X}_2,\ u\in\mathcal{U},\ y\in\mathcal{Y}.$$  
This proves Theorem \ref{thmupp}.
\end{proof}
%We study the upper bound of Theorem \ref{thmupp} with a Gaussian MAC and a binary adder MAC in Sections \ref{exGauss} and \ref{exAdd} respectively. In both cases, we show that the upper bound of Theorem \ref{thmupp} is tight for a wide range of bit-pipe capacities $C_1,C_2$. 

We now refine our bounding technique to derive a stronger upper bound in Theorem \ref{theoremupp}.

\begin{theorem}
\label{theoremupp}
The capacity $C^{\diamond}$ satisfies
\begin{align}
C^{\diamond}\leq \max_{p(x_1,x_2)}\min_{p(u|x_1,x_2,y)=p(u|y)}\max_{p(q|x_1,x_2,y,u)=p(q|x_1,x_2)} \!\!\min\left\{\!\!\!\!\begin{array}{l}
C_1+C_2,\\
C_1+I(X_2;Y|X_1Q),\\
C_2+I(X_1;Y|X_2Q),\\
I(X_1X_2;Y|Q),\\
C_1+C_2-I(X_1X_2;U|Q)+I(X_2;U|X_1Q)+I(X_1;U|X_2Q)\end{array}\!\!\!\!\right\}.\label{21}
\end{align}
\end{theorem}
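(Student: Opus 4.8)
The plan is to rework the single-letterization in the proof of Theorem~\ref{thmupp}, but to keep the bound \eqref{eqref} \emph{separate} from \eqref{eq4} (rather than adding them to form the $2nR$ bound \eqref{sumbound}), and to retain the ``past'' of the auxiliary process as the new variable $Q$. I would start from the five $n$-letter inequalities already in hand: $nR\le nC_1+nC_2$, which follows from \eqref{disregardmulti} by non-negativity of mutual information; the cut-set bounds \eqref{secondineq}, \eqref{thirdineq}, \eqref{sumwith}; and the bound \eqref{eqref}. As before, fix an arbitrary auxiliary channel $p(u|y)$ and let $U_i$ be generated from $Y_i$ through it, so that $U^n$ is a memoryless processing of $Y^n$. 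The decisive new ingredient is the choice $Q_i=U^{i-1}$. Note that $Q_i$ is \emph{not} a function of $(X_{1i},X_{2i})$, yet it satisfies the required Markov chain $Q_i-(X_{1i},X_{2i})-(Y_i,U_i)$, because $Y_i$ (and hence $U_i$) depends only on the current inputs and is conditionally independent of the past given them. After single-letterizing, I would introduce a uniform time-sharing variable $I$ and absorb it into $Q$, setting $Q=(I,U^{I-1})$, which then obeys $p(q|x_1,x_2,y,u)=p(q|x_1,x_2)$.

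The heart of the argument is single-letterizing \eqref{eqref} with this $Q$, expanding each term by the chain rule conditioned on $U^{i-1}$. For the \emph{negative} term I would use memorylessness, $H(U_i\mid X_1^nX_2^nU^{i-1})=H(U_i\mid X_{1i}X_{2i})=H(U_i\mid X_{1i}X_{2i}U^{i-1})$, to obtain the \emph{equality} $I(X_1^nX_2^n;U^n)=\sum_i I(X_{1i}X_{2i};U_i\mid U^{i-1})$; the equality is essential, since the minus sign forbids the usual ``drop the conditioning'' relaxation. For the two positive cross terms I would drop only the extra input coordinates, using $H(U_i\mid X_2^nU^{i-1})\le H(U_i\mid X_{2i}U^{i-1})$ to get $I(X_1^n;U^n\mid X_2^n)\le\sum_i I(X_{1i};U_i\mid X_{2i}U^{i-1})$ and symmetrically. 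Collecting the three pieces yields exactly the fifth term of \eqref{21} with $Q=(I,U^{I-1})$. The remaining four bounds single-letterize with the \emph{same} $Q$: for \eqref{sumwith} and \eqref{secondineq}--\eqref{thirdineq} the key observation is that $U^{i-1}$ is a deterministic function of $Y^{i-1}$, so replacing the conditioning $Y^{i-1}$ by the coarser $U^{i-1}$ only increases entropy (e.g.\ $H(Y_i\mid Y^{i-1})\le H(Y_i\mid U^{i-1})$ and $H(Y_i\mid X_1^nY^{i-1})\le H(Y_i\mid X_{1i}U^{i-1})$), which is precisely the direction needed for these upper bounds, while the conditional entropy given both inputs is unchanged by memorylessness.

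Finally, I would assemble the conclusion. For the code's induced input distribution $p(x_1,x_2)$ and for \emph{every} choice of $p(u|y)$, the constructed $Q=(I,U^{I-1})$ is feasible for the inner $\max_q$ and makes all five single-letter quantities upper bounds on $R$. Hence $R$ is at most their minimum, which is at most $\max_q$ of that minimum, at most $\min_u$ (the construction works for every $u$), and at most $\max_{p(x_1,x_2)}$; this is the right-hand side of \eqref{21}. I expect the main obstacle to be exactly the treatment of the negative term $-I(X_1^nX_2^n;U^n)$: it must be single-letterized as an \emph{equality} rather than an inequality, and this is what pins down $Q_i=U^{i-1}$. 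One then has to check that this single, forced choice of $Q$ drives all five inequalities in the correct direction simultaneously, which succeeds only because $U^{i-1}$ is a function of $Y^{i-1}$ and the channel is memoryless.
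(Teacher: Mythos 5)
Your proposal is, in substance, the paper's own proof: the same identity \eqref{identityI} applied to \eqref{disregardmulti}, the same memoryless choice of $U_i$ from $Y_i$, the same definition $Q=(I,U^{I-1})$, the same equality-based single-letterization of the negative term $-I(X_1^nX_2^n;U^n)$ (using the Markov chain \eqref{Markovchain}), and the same verification (cf.\ \eqref{pfactors}) that $Q-(X_1,X_2)-(Y,U)$ is Markov, so that $Q$ is feasible for the inner maximization in \eqref{21}. The one point you must repair is your justification for single-letterizing \eqref{secondineq}--\eqref{sumwith}: you assert that ``$U^{i-1}$ is a deterministic function of $Y^{i-1}$,'' which contradicts your own setup (and the theorem), since $p(u|y)$ is a general stochastic channel --- indeed the paper's applications require stochastic $U$, e.g., $U=Y+Z^\prime$ for the Gaussian MAC and the noisy symmetric channel of Fig.~\ref{channelu} for the binary adder. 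The inequalities you need, such as $H(Y_i|X_1^nY^{i-1})\leq H(Y_i|X_{1i}U^{i-1})$, remain true, but the correct reason is the Markov chain $(X_{1i}X_{2i}Y_i)-Y^{i-1}-U^{i-1}$: each $U_j$ is generated from $Y_j$ with noise independent of everything else, so $U^{i-1}$ is conditionally independent of $(X_1^n,X_2^n,Y_i)$ given $Y^{i-1}$, giving $H(Y_i|X_1^nY^{i-1})=H(Y_i|X_1^nY^{i-1}U^{i-1})\leq H(Y_i|X_{1i}U^{i-1})$. This is precisely step $(a)$ in the paper's derivation of \eqref{20b}; with that substitution your argument is complete.
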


\begin{remark}
In the above characterization, it suffices to consider $|\mathcal{Q}|\leq |\mathcal{X}_1||\mathcal{X}_2|+3$. See Appendix \ref{cardin1}.
\end{remark}
\begin{remark}
\label{remarkcardin}
The upper bound in \eqref{21} may be loosened by exchanging the order of the fist maximization and the second minimization. In this case, it suffices to consider $|\mathcal{Q}|\leq 4$. See Remark \ref{cardin2} in Appendix \ref{cardin1}.
\end{remark}
%
%\begin{remark}
%The bound of Theorem \ref{thmupp} is in the form of a max-min-max problem, where the first maximization is over $p(x_1,x_2)$, the minimization is over $p(u|x_1,x_2,y)=p(u|y)$, and the last  maximisation is over $p(q|x_1,x_2,y,u)=p(q|x_1,x_2)$. \end{remark}

\begin{remark}
The last term of the minimum  in \eqref{21} may be re-written as
\begin{align}
C_1+C_2-I(X_1;X_2|Q)+I(X_1;X_2|UQ).\label{diffinf}
\end{align}
Observe that the difference of mutual information terms in \eqref{diffinf} also appears in the Hekstra-Willems dependence balance bound~\cite{HekstraWillems89}.
\end{remark}

\begin{remark}
\label{uptighter}
The upper bound given in Theorem \ref{theoremupp} is tighter than Theorem \ref{thmupp} (see Appendix \ref{tighter}). We show  through the examples of Sections \ref{exGauss} and \ref{exAdd} that  Theorem \ref{theoremupp} can strictly improve on Theorem~\ref{thmupp}.
\end{remark}

\begin{proof}[Proof of Theorem \ref{theoremupp}] 
We start  with the multi-letter bound in \eqref{disregardmulti}-\eqref{sumwith}. We use the identity in \eqref{identityI} to expand inequality \eqref{disregardmulti} for \textit{any random sequence $U^n$} as follows:
\begin{align}
nR&\leq nC_1+nC_2 -I(X_1^n;X_2^n)\nonumber\\
&= nC_1+nC_2 -I(X_1^nX_2^n;U^n)+I(X_2^n;U^n|X_1^n)+I(X_1^n;U^n|X_2^n)-I(X_1^n;X_2^n|U^n).
\end{align}
In particular, we choose $U^n$ to be such that each symbol $U_i$ is the output of the channel $p_{U|Y}(u_i|y_i)$ with input $y_i$, $i=1,\ldots,n$. Thus, we have the functional dependence graph (FDG) depicted in Fig. \ref{dg}. Furthermore, we have
\begin{align}
nR
&\stackrel{(a)}{\leq} nC_1+nC_2 -\sum_i \left[I(X_{1i}X_{2i};U_i|U^{i-1})+I(X_{2i};U_i|U^{i-1}X_{1i})+I(X_{1i};U_i|U^{i-1}X_{2i})\right]\nonumber\\
&= nC_1+nC_2 -n I(X_{1I}X_{2I};U_I|U^{I-1}I)+nI(X_{2I};U_I|U^{I-1}X_{1I}I)+nI(X_{1I};U_I|U^{I-1}X_{2I}I)\nonumber\\
&\stackrel{(b)}{=} nC_1+nC_2 -n I(X_{1I}X_{2I};U_I|Q)+nI(X_{2I};U_I|X_{1I}Q)+nI(X_{1I};U_I|X_{2I}Q).\label{boundminusmi}
\end{align}
Step $(a)$ follows because $U_i-X_{1i}X_{2i}U^{i-1}-X_1^nX_2^n$ forms a Markov chain (see \eqref{Markovchain}). Step $(b)$ follows by defining $Q=U^{I-1}I$.

\begin{figure}[t!]
\centering
\begin{tikzpicture}[scale=1.25,arrowmark/.style 2 args={decoration={markings,mark=at position #1 with {\arrow{#2}}}}]
\tikzstyle{every node}=[draw,shape=circle,minimum size=.2pt]; 
node[fill,circle,inner sep=0pt,minimum size=1pt]
\path (3.5,4) node[fill=none,scale=0.5] (w) {};
\path (3.5,4.25) node[draw=none] (ww) {$W$};
\path (0,3) node[fill,scale=0.5] (x11) {};
\path (-.5,3) node[draw=none] (x11x11) {$X_{11}$};
\path (1,3) node[fill,scale=0.5] (x21) {};
\path (1.5,3) node[draw=none] (x21x21) {$X_{21}$};
\path (3,3) node[fill,scale=0.5] (x12) {};
\path (2.5,3) node[draw=none] (x12x12) {$X_{12}$};
\path (4,3) node[fill,scale=0.5] (x22) {};
\path (4.5,3) node[draw=none] (x22x22) {$X_{22}$};
\path (6,3) node[fill,scale=0.5] (x13) {};
\path (5.5,3) node[draw=none] (x13x13) {$X_{13}$};
\path (7,3) node[fill,scale=0.5] (x23) {};
\path (7.5,3) node[draw=none] (x23x23) {$X_{23}$};
\path (-.5,2) node[fill,scale=0.5]  (z1) {};
\path (2.5,2) node[fill,scale=0.5]  (z2) {};
\path (5.5,2) node[fill,scale=0.5]  (z3) {};
\path (-1,2) node[draw=none] (z1z1) {$Z_1$};
\path (2,2) node[draw=none] (z2z2) {$Z_2$};
\path (5,2) node[draw=none] (z3z3) {$Z_3$};
\path (0.5,2) node[fill,scale=0.5] (y1) {};
\path (3.5,2) node[fill,scale=0.5] (y2) {};
\path (6.5,2) node[fill,scale=0.5] (y3) {};
\path (1,2) node[draw=none] (y1y1) {$Y_1$};
\path (4,2) node[draw=none] (y2y2) {$Y_2$};
\path (7,2) node[draw=none] (y3y3) {$Y_3$};
\path (8,2) node[draw=none] (y4) {$\ldots$};
\path (-.5,1) node[fill,scale=0.5]  (zz1) {};
\path (2.5,1) node[fill,scale=0.5]  (zz2) {};
\path (5.5,1) node[fill,scale=0.5]  (zz3) {};
\path (-1,1) node[draw=none] (zz1zz1) {$Z^\prime_1$};
\path (2,1) node[draw=none] (zz2zz2) {$Z^\prime_2$};
\path (5,1) node[draw=none] (zz3zz3) {$Z^\prime_3$};
\path (.5,1) node[fill,scale=0.5] (u1) {};
\path (3.5,1) node[fill,scale=0.5] (u2) {};
\path (6.5,1) node[fill,scale=0.5] (u3) {};
\path (1,1) node[draw=none] (u1u1) {$U_1$};
\path (4,1) node[draw=none] (u2u2) {$U_2$};
\path (7,1) node[draw=none] (u3u3) {$U_3$};
\draw[postaction={decorate},
    arrowmark={.5}{stealth},
    ]  (w) --(x11) ;
\draw[postaction={decorate},
    arrowmark={.5}{stealth},
    ]  (w) --(x12) ;
\draw[postaction={decorate},
    arrowmark={.5}{stealth},
    ]  (w) --(x21) ;
\draw[postaction={decorate},
    arrowmark={.5}{stealth},
    ]  (w) --(x22) ;
\draw[postaction={decorate},
    arrowmark={.5}{stealth},
    ]  (w) --(x13) ;
\draw[postaction={decorate},
    arrowmark={.5}{stealth},
    ]  (w) --(x23) ;
\draw[postaction={decorate},
    arrowmark={.5}{stealth},
    ]  (x11) --(y1) ;
\draw[postaction={decorate},
    arrowmark={.5}{stealth},
    ]  (x21) --(y1) ;
\draw[postaction={decorate},
    arrowmark={.5}{stealth},
    ]  (z1) --(y1) ;
\draw[postaction={decorate},
    arrowmark={.5}{stealth},
    ]  (x12) --(y2) ;
\draw[postaction={decorate},
    arrowmark={.5}{stealth},
    ]  (x22) --(y2) ;
\draw[postaction={decorate},
    arrowmark={.5}{stealth},
    ]  (z2) --(y2) ;
\draw[postaction={decorate},
    arrowmark={.5}{stealth},
    ]  (x13) --(y3) ;
\draw[postaction={decorate},
    arrowmark={.5}{stealth},
    ]  (x23) --(y3) ;
\draw[postaction={decorate},
    arrowmark={.5}{stealth},
    ]  (z3) --(y3) ;
%\draw[->] (x11) --(u1) ;
%\draw[->] (x21) --(u1) ;
\draw[postaction={decorate},
    arrowmark={.5}{stealth},
    ]  (zz1) --(u1) ;
\draw[postaction={decorate},
    arrowmark={.5}{stealth},
    ]  (y1) --(u1) ;
%\draw[->] (x12) --(u2) ;
%\draw[->] (x22) --(u2) ;
\draw[postaction={decorate},
    arrowmark={.5}{stealth},
    ]  (zz2) --(u2) ;
\draw[postaction={decorate},
    arrowmark={.5}{stealth},
    ] (y2) --(u2) ;
\draw[postaction={decorate},
    arrowmark={.5}{stealth},
    ] (zz3) --(u3) ;
\draw[postaction={decorate},
    arrowmark={.5}{stealth},
    ] (y3) --(u3) ;
%\draw[->] (v5) -- (d1);
\end{tikzpicture}
\caption{The associated FDG for Theorem \ref{theoremupp} and $n=3$. The random variables $Z_1,Z_2,Z_3$ and $Z^\prime_1,Z^\prime_2,Z^\prime_3$ are appropriate noise random variables with the distributions $p_Z(.)$ and $p_{Z^\prime}(.)$, respectively.}
\label{dg}
\end{figure}

%Using a conditional version of \eqref{identityI}, i.e.,  $$I(X_{1I};X_{2I}|Q)=I(X_{1I}X_{2I};U_I|Q)-I(X_{1I};U_I|X_{2I}Q)-I(X_{2I};U_I|X_{1I}Q)+I(X_{1I};X_{2I}|U_IQ),$$  \eqref{boundminusmi} becomes
%\begin{align}
%nR&\leq nC_1+nC_2-nI(X_{1I};X_{2I}|Q)+nI(X_{1I};X_{2I}|U_IQ).\label{18b}
%\end{align}

We single-letterize \eqref{secondineq}-\eqref{sumwith} next:
\begin{align}
nR&\leq nC_1+I(X_2^n;Y^n|X_1^n)\nonumber\\
&\leq nC_1+\sum_{i=1}^nI(X_{2i};Y_i|X_{1i}Y^{i-1})\nonumber\\
&\stackrel{(a)}{=} nC_1+\sum_{i=1}^nI(X_{2i};Y_i|X_{1i}Y^{i-1}U^{i-1})\nonumber\\
&\leq nC_1+\sum_{i=1}^nI(X_{2i};Y_i|X_{1i}U^{i-1})\nonumber\\
&= nC_1+nI(X_{2I};Y_I|X_{1I}U^{I-1}I)\nonumber\\
&= nC_1+nI(X_{2I};Y_I|X_{1I}Q).\label{20b}
\end{align}
Step $(a)$ follows because $X_{1i}X_{2i}Y_i-Y^{i-1}-U^{i-1}$ forms a Markov chain. 
Similarly, we have
\begin{align}
R&\leq  C_2+I(X_{1I};Y_I|X_{2I}Q)\label{19b}\\
R&\leq I(X_{1I}X_{2I};Y_I|Q).\label{21b}
\end{align}
We  further have % that $p_{X_{1I}X_{2I}Y_IU_IQ}(x_1,x_2,y,u,q)$ factors as follows:% (see Appendix~\ref{factors}):
\begin{align}
p_{X_{1I}X_{2I}Y_IU_IQ}(x_1,x_2,y,u,q)&=p_{X_{1I}X_{2I}}(x_{1},x_{2})p_{Q|X_{1I}X_{2I}}(q|x_1,x_2)p_{Y|X_{1}X_{2}}(y|x_{1},x_{2})p_{U|Y}(u|y).\label{pfactors}
\end{align}
Renaming $(X_{1I},X_{2I},Y_I,U_I,Q)$ as $(X_{1},X_{2},Y,U,Q)$ concludes the proof of Theorem \ref{theoremupp}.
\end{proof}
%, we conclude that $R$ is achievable only if there exists a pmf $p(x_{1},x_{2})$ for which for all auxiliary channels $p(u|y)$ there exists $p(q|x_1,x_2)$ such that 
%\begin{align}
%R\leq  \left\{\begin{array}{l}
%C_1+C_2,\\
%C_1+I(X_2;Y|X_1Q),\\
%C_2+I(X_1;Y|X_2Q),\\
%I(X_1X_2;Y|Q),\\
%C_1+C_2-I(X_1X_2;U|Q)+I(X_2;U|X_1Q)+I(X_1;U|X_2Q)\end{array}\right\}.
%\end{align}
%This concludes the proof of Theorem \ref{theoremupp}.

%We study Theorem \ref{theoremupp} in Sections \ref{exGauss} and \ref{exAdd} for Gaussian MACs and binary adder MACs. In both examples, this upper bound strictly improves the cut-set bound and is tight for wide ranges of bit-pipe capacities.
\begin{remark}
Note that  $Q$ is defined based on $U$. That is, $p(q|x_1,x_2)$ could be a function of $p(u|y)$ and we cannot necessarily change the order in which we minimize over $p(u|y)$ and maximize over $p(q|x_1,x_2)$.
\end{remark}

\section{The Gaussian MAC}
\label{exGauss}
The output of the Gaussian MAC is 
$$Y=X_1+X_2+Z$$ where $Z\sim\mathcal{N}(0,1)$ and the transmitters have average block power constraints $P_1,P_2$; i.e., we have
\begin{align}
\frac{1}{n}\sum_{i=1}^n\mathbb{E}(X_{1,i}^2)\leq P_1\\
\frac{1}{n}\sum_{i=1}^n\mathbb{E}(X_{2,i}^2)\leq P_2. 
\end{align}
When $C_1=C_2=C$ and $P_1=P_2=P$, we call the network symmetric.
%\subsection{The upper bound}

To find a lower bound on the maximum achievable rate, we use Theorem \ref{lower bound}. We choose $(U,X_1,X_2)$ to be jointly Gaussian with zero mean and covariance matrix $\mathbf{K}_{UX_1X_2}$. A special case is when $U$ is null and $(X_1,X_2)$ is jointly Gaussian with the correlation coefficient $\rho$. The rates that satisfy the following constraints for some $\rho$, $0\leq \rho\leq 1$, are thus achievable.\begin{align}
R\leq&C_1+C_2-\frac{1}{2}\log\frac{1}{1-\rho^2}\label{eqlow1}\\
R\leq& C_1+\frac{1}{2}\log\left(1+P_2\left(1-\rho^2\right)\right)\\
R\leq& C_2+\frac{1}{2}\log\left(1+P_1\left(1-\rho^2\right)\right)\\
R\leq& \frac{1}{2}\log\left(1+P_1+P_2+2\sqrt{P_1P_2}\rho\right)\label{eqlow4}
\end{align}
This choice of $(U,X_1,X_2)$ is not optimal in general. 
For example when $C_1$ and $C_2$ are large (i.e., $C_1,C_2>\frac{1}{2}\log(1+P_1+P_2+2\sqrt{P_1P_2})$),
the rate $$R=\frac{1}{2}\log(1+P_1+P_2+2\sqrt{P_1P_2})$$ is not achievable 
by \eqref{eqlow1}-\eqref{eqlow4} but is achievable by Theorem \ref{lower bound} if we choose 
$(U,X_1,X_2)$ to be jointly Gaussian and such that 
$\frac{U}{\sqrt{P_1}}=\frac{X_1}{\sqrt{P_1}}=\frac{X_2}{\sqrt{P_2}}\sim\mathcal{N}(0,1)$. Theorem 
\ref{lower bound} therefore gives a strictly larger lower bound compared to \cite[Theorem 1]{TraskovKramer07}, \cite[Theorem 2]{KangLiu11}. More interestingly, in certain regimes of $C_1,C_2$ the optimal $(U,X_1,X_2)$ is not  jointly Gaussian.

Fig. \ref{closeup} shows the lower bound as a function of $C$ for a symmetric network with $P=1$. The  dotted curve in Fig. \ref{closeup} shows the rates achieved  using the scheme of Section \ref{secach} with jointly Gaussian random variables $(U,X_1,X_2)$ (see \cite[Fig. 2]{KangLiu11} and also \cite[Fig. 4]{KangLiuChong15}).  It is interesting that the obtained lower bound is not concave in $C$. This does not contradict Proposition~\ref{thmconcave} because  Gaussian distributions are sub-optimal. 
The improved solid curve shows rates that are achievable using a mixture of two Gaussian distributions. These rates are slightly larger than the rates achieved by time-sharing between two Gaussian distributions with powers $P_1=P_2=1$. If one permits both time-sharing and power control, then one achieves similar rates as for mixture distributions.

Theorems \ref{thmupp} and  \ref{theoremupp} give upper bounds on the capacity. From Remark \ref{uptighter}, Theorem \ref{theoremupp} is stronger than Theorem \ref{thmupp}. Nevertheless, the bound in Theorem \ref{thmupp} is simpler to evaluate analytically because we can use the maximum entropy lemma to bound all terms. We study both bounds for the Gaussian MAC.

First, we find an upper bound using Theorem \ref{thmupp}. We choose $U=Y+Z^\prime$, where $Z^\prime$ is Gaussian noise with zero mean and variance $N$ (to be optimized later).  The constrains in \eqref{thmupp15} are written as follows using maximum entropy lemmas:\allowdisplaybreaks
\begin{align}
&\hspace{-1.25cm}R\leq C_1+C_2\label{gauss1}\\
&\hspace{-1.25cm}R\leq C_1+\frac{1}{2}\log \left(1+P_2(1-\rho^2)\right)\\
&\hspace{-1.25cm}R\leq C_2+\frac{1}{2}\log \left(1+P_1(1-\rho^2)\right)\\
&\hspace{-1.25cm}R\leq \frac{1}{2}\log \left(1+P_1+P_2+2\rho\sqrt{P_1P_2}\right)\\
&\hspace{-1.25cm}2R\stackrel{(a)}{\leq}    C_1 + C_2 + \frac{1}{2}\log \left( {1 + P_1 + P_2 + 2\rho\sqrt{P_1P_2}}\right)+\frac{1}{2}\log   \left( \frac{(1+N+P_1(1-\rho^2))(1+N+P_2(1-\rho^2))}{(1+N+P_1+P_2+2\rho\sqrt{P_1P_2})(1+N)} \right) .\label{gauss5}
\end{align}
To obtain inequality $(a)$ above, write the last constraint of \eqref{thmupp15} as 
\begin{align}
%\begin{array}{l}
2R\leq C_1+C_2+h(Y|U)-h(YU|X_1X_2)+h(U|X_1)+h(U|X_2)-h(U|X_1X_2).
%\end{array}
\end{align} 
The negative terms are easy to calculate because of the Gaussian nature of the channel and the choice of $U$. The positive terms are bounded from above using the conditional version of the maximum entropy lemma \cite{Thomas87}. It remains to solve a max-min problem (max over $\rho$ and min over $N$). So the rate $R$ is achievable only if there exists some $\rho\geq0$ for which for every $N\geq0$ inequalities \eqref{gauss1}-\eqref{gauss5} hold. 

\begin{figure}[t!]
\centering
\input{}
\caption{Upper and lower bounds on $R$ as functions of $C$ for the Gaussian MAC with $P_1=P_2=1$.}
\label{closeup}
\end{figure}

We choose $N$ to be (see \cite[eqn. (21)]{KangLiu11})
\begin{align}N=\left(\sqrt{P_1P_2}\left(\frac{1}{\rho}-\rho\right)-1\right)^+.\label{choiceofN}
\end{align} 
Let us first motivate this choice. From \eqref{equivthm6}, the inequality in \eqref{gauss5} is  
\begin{align}
\label{newgauss5}
2R\leq C_1+C_2+I(X_1X_2;Y)-I(X_1;X_2)+I(X_1;X_2|U)
\end{align} 
evaluated for the joint Gaussian distribution $p(x_1,x_2)$ with covariance matrix 
\begin{align}
\left[\begin{array}{cc}P_1&\rho\sqrt{P_1P_2}\\\rho\sqrt{P_1P_2}&P_2\end{array}\right].
\end{align} 
The choice \eqref{choiceofN} makes $U$ satisfy the Markov chain $X_1-U-X_2$ for the regime where 
\begin{align}
\sqrt{P_1P_2}\left(\frac{1}{\rho}-\rho\right)-1\geq0\label{regimeofN}
\end{align} and thus minimizes the RHS of \eqref{newgauss5}.  Otherwise, we choose $U=Y$ which results in a redundant bound.
The resulting upper bound is summarized in Corollary \ref{uppGauss}.
\begin{corollary}
\label{uppGauss}
Rate $R$ is achievable only if there are $\rho\geq0$ such that 
\begin{align}
&\rho\leq \sqrt{1+\frac{1}{4P_1P_2}}-\sqrt{\frac{1}{4P_1P_2}},\label{70meaningRHS}\\
&R\leq C_1+C_2\label{cor00}\\
&R\leq C_2+\frac{1}{2}\log\left(1+P_1(1-\rho^2)\right)\label{cor01}\\
&R\leq C_1+\frac{1}{2}\log\left(1+P_2(1-\rho^2)\right)\label{cor0}\\
&R\leq \frac{1}{2}\log \left(1+P_1+P_2+2\rho\sqrt{P_1P_2}\right)\label{cor1}\\
&2R\leq C_1+C_2+\frac{1}{2}\log \left(1 + P_1 + P_2 + 2\rho\sqrt{P_1P_2}\right)-\frac{1}{2}\log\left(\frac{1}{1-\rho^2}\right),\label{cor2}\\
&\hspace{-.65cm}\text{or}\nonumber\\
&\sqrt{1+\frac{1}{4P_1P_2}}-\sqrt{\frac{1}{4P_1P_2}}\leq\rho\leq 1,\label{70meaningLHS}\\
&R\leq C_1+C_2\\
&R\leq C_2+\frac{1}{2}\log\left(1+P_1(1-\rho^2)\right)\\
&R\leq C_1+\frac{1}{2}\log\left(1+P_2(1-\rho^2)\right)\\
&R\leq \frac{1}{2}\log \left(1+P_1+P_2+2\rho\sqrt{P_1P_2}\right).
\end{align}
\end{corollary}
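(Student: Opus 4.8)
The plan is to specialize the general bound of Theorem \ref{thmupp} to the Gaussian MAC by making the explicit choice $U = Y + Z'$ with $Z' \sim \mathcal{N}(0,N)$ independent of $(X_1,X_2,Y)$, and then to reduce each of the five terms in \eqref{thmupp15} to a single-letter Gaussian expression via the maximum-entropy lemma. Fix any input distribution $p(x_1,x_2)$ satisfying the power constraints and let $\rho$ denote its correlation coefficient, so that $\mathrm{Cov}(X_1,X_2) = \rho\sqrt{P_1P_2}$ with $0 \le \rho \le 1$; by symmetry we may take $\rho \ge 0$. Because Theorem \ref{thmupp} maximizes over $p(x_1,x_2)$, it suffices to produce, for each such distribution, an upper bound depending only on $\rho$, after which the final statement reads as a necessary condition over the single free parameter $\rho$.

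First I would treat the four cut-set-type terms. For fixed $\rho$, the conditional differential entropies $h(Y|X_1)$, $h(Y|X_2)$ and $h(Y)$ underlying $I(X_2;Y|X_1)$, $I(X_1;Y|X_2)$ and $I(X_1X_2;Y)$ are each maximized by the jointly Gaussian input with the same second moments; substituting the Gaussian values yields \eqref{cor00}--\eqref{cor1}. Next I would handle the fifth term through its equivalent form \eqref{equivthm6}, writing it as $2R \le C_1 + C_2 + h(Y|U) - h(YU|X_1X_2) + h(U|X_1) + h(U|X_2) - h(U|X_1X_2)$. With $U = Y + Z'$ the three entropies conditioned on the inputs are exactly Gaussian and evaluated in closed form, while the positive terms $h(Y|U)$, $h(U|X_1)$ and $h(U|X_2)$ are upper bounded by the conditional maximum-entropy lemma. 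Collecting these produces inequality \eqref{gauss5}, i.e. the last constraint expressed as a function of both $\rho$ and the still-free parameter $N$.

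It then remains to dispose of $N$. Since Theorem \ref{thmupp} minimizes over the auxiliary channel, I am free to choose any $N \ge 0$, and the aim is to pick it so as to tighten the bound. Either differentiating the bracketed log-ratio in \eqref{gauss5} in $N$, or equivalently using the identity \eqref{newgauss5} and forcing the Markov chain $X_1 - U - X_2$, singles out $N = \left(\sqrt{P_1P_2}(1/\rho - \rho) - 1\right)^+$ of \eqref{choiceofN}. In the regime \eqref{regimeofN} where this value is positive --- equivalently $\rho \le \sqrt{1 + 1/(4P_1P_2)} - \sqrt{1/(4P_1P_2)}$, obtained by solving $\rho^2 + \rho/\sqrt{P_1P_2} - 1 \le 0$ --- substituting this $N$ collapses the fraction in \eqref{gauss5} to the clean form \eqref{cor2}, giving the first group \eqref{70meaningRHS}--\eqref{cor2}.

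Finally, when the prescribed $N$ would be zero, i.e. for $\rho$ above the threshold \eqref{70meaningLHS}, I would set $U = Y$. Then $I(X_1X_2;Y|U)=0$ and the fifth term of \eqref{thmupp15} reduces to $\tfrac12\big(C_1 + C_2 + I(X_1;Y|X_2) + I(X_2;Y|X_1)\big)$, the arithmetic mean of the second and third terms, which is never the active constraint in the minimum; only the four cut-set-type bounds survive, giving the second group. The main obstacle I anticipate is the bookkeeping of the fifth term: verifying that the positive entropies are correctly controlled by the conditional maximum-entropy lemma for an \emph{arbitrary} input of correlation $\rho$ rather than only a Gaussian one, and confirming that the chosen $N$ actually minimizes the resulting expression rather than merely enforcing the Markov chain. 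Both require care but introduce no ideas beyond those already used in Theorem \ref{thmupp} and in \cite{KangLiu11}.
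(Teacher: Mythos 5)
Your proposal is correct and follows essentially the same route as the paper: specialize Theorem \ref{thmupp} with $U=Y+Z'$, $Z'\sim\mathcal{N}(0,N)$, bound the cut-set-type terms and the positive entropies in \eqref{equivthm6} via the (conditional) maximum entropy lemma to get \eqref{gauss1}--\eqref{gauss5}, then pick $N$ as in \eqref{choiceofN} to enforce $X_1-U-X_2$ when \eqref{regimeofN} holds (yielding \eqref{cor2}) and fall back to $U=Y$, which makes the fifth constraint redundant, otherwise. Your closing worry about verifying that this $N$ is the true minimizer is not even needed for validity, since any fixed $N\geq 0$ already yields a legitimate necessary condition.
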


%\subsection{The lower bound}

The above upper bound is plotted in Fig. \ref{closeup} for different values of $C$ and for $P=1$.
For symmetric diamond networks, we specify a regime of $C$ for which the above upper bound meets the lower bound in Theorem \ref{lower bound} and thus characterizes the capacity. This is summarized in Theorem \ref{gaussmatch} and its proof is deferred to Appendix \ref{messyproof}.

% \begin{figure}[ht!]
% \begin{center}
%\includegraphics[width=.45\textwidth]{figureGaussianDiamond.pdf}
%\end{center}
%\caption{???}
%\label{figGaussianDiamond}
%\end{figure}
\begin{theorem}
\label{gaussmatch}
For a symmetric Gaussian diamond network with orthogonal broadcast links, the upper bound in Theorem \ref{thmupp} is tight if $C\leq \frac{1}{4}\log(1+2P)$, $C\geq \frac{1}{2}\log(1+4P)$, or 
\begin{align}
\label{thmC}
\frac{1}{4}\log\frac{1+2P(1+\rho^{(1)})}{1-{\left(\rho^{(1)}\right)}^2}\leq C\leq \frac{1}{4}\log\frac{1+2P(1+\rho^{(2)})}{1-{\left({\rho^{(2)}}\right)}^2}
\end{align}
where 
\begin{align}
&\rho^{(1)}=\frac{-(1+2P)+\sqrt{12P^2+(1+2P)^2}}{6P}\label{r1}\\
&\rho^{(2)}=\sqrt{1+\frac{1}{4P^2}}-\frac{1}{2P}.\label{r2}
\end{align}
\end{theorem}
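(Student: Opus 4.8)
The plan is to show that, in each of the three stated ranges of $C$, the upper bound of Corollary~\ref{uppGauss} coincides with a rate $R^\star$ that is achievable by Theorem~\ref{lower bound}; since the lower bound never exceeds $C^\diamond$ and $C^\diamond$ never exceeds the upper bound, such a coincidence forces $C^\diamond=R^\star$ and proves tightness (recall that the Gaussian evaluation of Theorem~\ref{thmupp} with the optimized auxiliary channel is precisely Corollary~\ref{uppGauss}). Throughout I specialize to $C_1=C_2=C$, $P_1=P_2=P$ and regard \eqref{eqlow1}--\eqref{eqlow4} and \eqref{cor00}--\eqref{cor2} as functions of the correlation $\rho$.

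\emph{The two extreme ranges are immediate.} For $C\le\frac14\log(1+2P)$ I would take $R^\star=2C$: on the upper side \eqref{cor00} gives $R\le 2C$ for \emph{every} $\rho$, so the outer maximization cannot exceed $2C$; on the lower side, putting $\rho=0$ in \eqref{eqlow1}--\eqref{eqlow4} makes \eqref{eqlow1} equal to $2C$ while the remaining terms are larger, since $\frac12\log(1+2P)\ge 2C$ and $C+\frac12\log(1+P)\ge 2C$ both follow from $C\le\frac14\log(1+2P)<\frac12\log(1+P)$. For $C\ge\frac12\log(1+4P)$ I would take $R^\star=\frac12\log(1+4P)$: the term \eqref{cor1} never exceeds $\frac12\log(1+4P)$ and attains it at $\rho=1$, capping the upper bound, and the matching achievability comes from the fully correlated Gaussian choice $U=X_1=X_2$ in \eqref{inner}, for which $I(X_1;X_2\mid U)=0$ and $I(X_1X_2;Y)=\frac12\log(1+4P)$, so that every term of \eqref{inner} is at least $\min\{C,\frac12\log(1+4P)\}=\frac12\log(1+4P)$.

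\emph{For the middle range the heart of the argument is one identity.} Given $C$ in the interval \eqref{thmC}, I would define $\rho^\star$ as the unique solution of $C=\frac14\log\frac{1+2P(1+\rho^\star)}{1-(\rho^\star)^2}$; the right-hand side is strictly increasing in $\rho^\star$, so \eqref{thmC} is exactly $\rho^\star\in[\rho^{(1)},\rho^{(2)}]$ with $\rho^{(1)},\rho^{(2)}$ as in \eqref{r1}--\eqref{r2}, and I set $R^\star=\frac12\log(1+2P(1+\rho^\star))$. By construction $\rho^\star$ is the crossing point of \eqref{eqlow1} and \eqref{eqlow4}, so both equal $R^\star$ there, and the middle lower-bound term $C+\frac12\log(1+P(1-(\rho^\star)^2))$ is at least $R^\star$ because, after clearing logarithms, the required inequality reduces to $\bigl(\rho^\star-P(1-(\rho^\star)^2)\bigr)^2\ge0$; hence $R^\star$ is achievable with a null $U$. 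On the upper side, the key observation from \eqref{equivthm6} is that the fifth Gaussian constraint \eqref{cor2}, read as a bound on $R$, equals the arithmetic mean of the right-hand sides of \eqref{eqlow1} and \eqref{eqlow4}. Consequently at $\rho=\rho^\star$ the term \eqref{cor2} equals $R^\star$, as does \eqref{cor1}, while \eqref{cor00} and \eqref{cor0} are no smaller, so the inner minimum equals $R^\star$ at $\rho^\star$.

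\emph{It then remains to certify that $\rho^\star$ is the global maximizer, and this is the main obstacle.} I would split on the two regimes of Corollary~\ref{uppGauss}. For $\rho\le\rho^{(2)}$, where \eqref{cor2} is present, the mean identity gives that the right-hand side of \eqref{cor1} minus that of \eqref{cor2} equals $\frac12$ times the right-hand side of \eqref{eqlow4} minus that of \eqref{eqlow1}; hence the smaller of \eqref{cor1} and \eqref{cor2} is the increasing term \eqref{cor1} for $\rho\le\rho^\star$ and the term \eqref{cor2} for $\rho\ge\rho^\star$, and differentiating \eqref{cor2} shows it peaks exactly at the positive root of $3P\rho^2+(1+2P)\rho-P=0$, which is $\rho^{(1)}$ of \eqref{r1}; thus whenever $\rho^\star\ge\rho^{(1)}$ the term \eqref{cor2} is past its peak at $\rho^\star$ and the minimum is capped by $R^\star$. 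For $\rho>\rho^{(2)}$, where \eqref{cor2} is dropped, I would use that \eqref{cor0} is decreasing together with the identity $1+P(1-(\rho^{(2)})^2)=1+\rho^{(2)}$, valid because $\rho^{(2)}$ of \eqref{r2} solves $\rho=P(1-\rho^2)$, to reduce the needed bound (that \eqref{cor0} at $\rho=\rho^{(2)}$ is at most $R^\star$) to $(1+\rho^{(2)})^2\le(1+2P(1+\rho^\star))(1-(\rho^\star)^2)$, which holds on $[\rho^{(1)},\rho^{(2)}]$ with equality at $\rho^\star=\rho^{(2)}$. The delicate point throughout is global coverage: the two thresholds are forced, respectively, by the monotonicity of \eqref{cor2} and by the regime switch of the auxiliary channel, so that \eqref{thmC} is precisely the set of $C$ for which the construction passes both the regime-$1$ peak test and the regime-$2$ decay test; assembling the three elementary reductions (two nonnegativity-of-a-square inequalities and one quadratic) without gaps is where the care is needed.
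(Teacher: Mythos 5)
Your proof is correct, and it is built from the same ingredients as the paper's own proof in Appendix \ref{messyproof}: reduction to the Gaussian evaluation in Corollary \ref{uppGauss}, an identical treatment of the two extreme ranges, the observation (via \eqref{equivthm6}) that \eqref{cor2} is the arithmetic mean of \eqref{eqlow1} and \eqref{eqlow4}, the peak of that mean at $\rho^{(1)}$, and the regime switch at $\rho^{(2)}$ where $P(1-\rho^2)=\rho$. Where you genuinely diverge is in how global optimality of the matching correlation is certified. The paper argues dynamically: it sweeps $C$ upward and tracks the maximizer of $\max_\rho\min\{f_1,f_3,f_4'\}$ (notation of \eqref{f4}) through three cases, concluding that \eqref{thmC} is exactly the window in which the maximizer sits at the $f_3$--$f_4'$ crossing with crossing point in $[\rho^{(1)},\rho^{(2)}]$. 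You argue statically: for fixed $C$ you define $\rho^\star$ by $C=\tfrac{1}{4}\log\frac{1+2P(1+\rho^\star)}{1-(\rho^\star)^2}$, verify the lower bound attains $R^\star=\tfrac12\log(1+2P(1+\rho^\star))$ there, and cap the inner minimum of Corollary \ref{uppGauss} by $R^\star$ separately on $[0,\rho^\star]$ (via the increasing term \eqref{cor1}), on $[\rho^\star,\rho^{(2)}]$ (via \eqref{cor2}, which is past its peak $\rho^{(1)}$), and on $(\rho^{(2)},1]$ (via the decreasing term \eqref{cor0} together with the identity $1+P(1-(\rho^{(2)})^2)=1+\rho^{(2)}$). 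Your version buys concreteness: the perfect-square identity $(\rho-P(1-\rho^2))^2\ge 0$ is precisely the paper's remark that $f_4(C,\rho)\le f_2(C,\rho)$, your $\rho^{(2)}$-identity is precisely its claim that $f_4'$ is continuous at $\rho^{(2)}$, and your quadratic $3P\rho^2+(1+2P)\rho-P$ is the derivative computation behind both the location of the peak $\rho^{(1)}$ and the regime-2 decay --- all three appear in the paper only as ``one can check.'' What the paper's sweep buys instead is a description of the maximizer outside the window \eqref{thmC} (its cases (a) and (b)), which explains why the endpoints of \eqref{thmC} are where the matching stops; your argument does not need this, since the theorem only asserts tightness inside the stated ranges.
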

\begin{remark}
The $\rho^{(1)}$ given in \eqref{r1} maximizes the RHS of \eqref{cor2}.  The $\rho^{(2)}$ given in \eqref{r2} is the solution of \eqref{regimeofN} with equality. Note that $\rho^{(2)}$ forms the RHS of \eqref{70meaningRHS} and the LHS of \eqref{70meaningLHS}. In other words, for $\rho\leq \rho^{(2)}$ one can find $U$ as a degraded version of $Y$ such that $X_1-U-X_2$ forms a Markov chain. This is not possible for $\rho>\rho^{(2)}$.
\end{remark}
\begin{remark}
\label{regimestightthm2}
For $C\leq \frac{1}{4}\log(1+2P)$, the capacity is equal to $2C$ and is achieved by \eqref{eqlow1}-\eqref{eqlow4} with $\rho=0$ (no cooperation among the relays).  In the regime \eqref{thmC}, the capacity is given by \eqref{eqlow1}-\eqref{eqlow4} with partial cooperation among the relays.  For $C\geq \frac{1}{2}\log(1+4P)$, the capacity is equal to $\frac{1}{2}\log(1+4P)$ and is achieved using Theorem \ref{lower bound} with $X_1=X_2=U\sim \mathcal{N}(0,P)$ (full cooperation among the relays). 
\end{remark}

\begin{remark}
The bound in Corollary \ref{uppGauss} and the bound in \cite[Theorem 1]{KangLiu11} are closely related. 
The  bound in \cite[Theorem 1]{KangLiu11} is tighter than Corollary \ref{uppGauss} in certain regimes of operation. We will see that Theorem \ref{theoremupp} strengthens Corollary \ref{uppGauss} and  is in general tighter than \cite[Theorem 1]{KangLiu11}. % \textcolor{red}{Do you think I should explicitly give the regimes where they are the same?} 
\end{remark}
Based on Theorem \ref{gaussmatch}, the upper and lower bounds match in Fig. \ref{closeup} (where $P=1$) for $C\leq 0.3962$, $0.4807\leq C \leq 0.6942$, and $C\geq1.1610$. Theorem \ref{theoremupp} tightens the above upper bound as we show next. We again choose $U=Y+Z^\prime$ where $Z^\prime$ is a Gaussian random variable with  zero mean and variance $N$ (to be optimized). In contrast to Theorem \ref{thmupp}, it is not  clear whether Gaussian distributions are optimal in Theorem \ref{theoremupp}. To compute the bound in Theorem \ref{theoremupp}, we proceed as follows.

The first four bounds of \eqref{21} may be loosened by dropping the time-sharing random variable $Q$ and using the maximum entropy lemma:

\begin{align}
&R\leq C_1+C_2\label{66}\\
&R\leq C_1+I(X_2;Y|X_1Q)\leq C_1+\frac{1}{2}\log\left(1+P_2\left(1-\rho^2\right)\right)\label{67}\\
&R\leq  C_2+I(X_1;Y|X_2Q)\leq C_2+\frac{1}{2}\log\left(1+P_1\left(1-\rho^2\right)\right)\label{68}\\
&R\leq  I(X_1X_2;Y|Q) \leq \frac{1}{2}\log\left(1+P_1+P_2+2\sqrt{P_1P_2}\rho\right).\label{69}
\end{align}
To bound the last constraint in \eqref{21}, we use both the entropy power inequality \cite[Theorem 17.7.3]{CoverThomas} and the maximum entropy lemma:
\begin{align}
R\leq& C_1+C_2-I(X_1X_2;U|Q)+I(X_1;U|X_2Q)+I(X_2;U|X_1Q)\nonumber\\
=&C_1+C_2-h(U|Q)-h(U|X_1X_2)+h(U|X_2Q)+h(U|X_1Q)\nonumber\\
\leq&C_1+C_2-h(U|Q)-h(U|X_1X_2)+h(U|X_2)+h(U|X_1)\nonumber\\
\stackrel{(a)}{\leq}& C_1+C_2-\frac{1}{2}\log\left(2\pi eN+2^{2h(Y|Q)}\right)-h(U|X_1X_2)+h(U|X_2)+h(U|X_1)\nonumber\\
\stackrel{(b)}{\leq}& C_1+C_2-\frac{1}{2}\log\left(2\pi eN+2^{2h(Y|Q)}\right)-\frac{1}{2}\log\left(2\pi e (1+N)\right)\nonumber\\&+\frac{1}{2}\log\left( 2\pi e\left(1+N+P_1\left(1-\rho^2\right)\right)\right)+\frac{1}{2}\log\left( 2\pi e\left(1+N+P_2\left(1-\rho^2\right)\right)\right)\label{87cont}
\end{align}
where $(a)$ holds by the entropy power inequality and $(b)$ holds by the maximum entropy lemma.
We now use $R\leq I(X_1X_2;Y|Q)$ to write
\begin{align}
h(Y|Q)&=\frac{1}{2}\log(2\pi e)+I(X_1X_2;Y|Q)\nonumber\\
&\geq\frac{1}{2}\log(2\pi e)+R.\label{88cont}
\end{align}
From \eqref{87cont} and \eqref{88cont} we obtain
\begin{align}
R &\leq C_1+C_2-\frac{1}{2}\log\left(N+2^{2R}\right)-\frac{1}{2}\log\left(1+N\right)+\frac{1}{2}\log\left( 1+N+P_1\left(1-\rho^2\right)\right)+\frac{1}{2}\log\left(1+N+P_2\left(1-\rho^2\right)\right).\label{21N}
\end{align}

\begin{remark}
The above argument is similar to the argument used in \cite{Ozarow80}, and it is also  related to \cite[Section X]{TandonUlukus11}.
\end{remark}

\begin{remark}
Expression \eqref{21N} may be re-written as
%\begin{align}
%2^{2R}\left(N+2^{2R}\right)\leq 2^{2(C_1+C_2)}\frac{\left( 1+N+P_1\left(1-\rho^2\right)\right)\left( 1+N+P_2\left(1-\rho^2\right)\right)}{(1+N)}
%\end{align}
%or equivalently as
\begin{align}
R&\leq \frac{1}{2}\log\frac{-N+\sqrt{N^2+2^{2(C_1+C_2+1)}\frac{\left(1+N+P_1\left(1-\rho^2\right)\right)\left(1+N+P_2\left(1-\rho^2\right)\right)}{1+N}}}{2}.\label{21Nmin}
\end{align}
\end{remark}
Recall that \eqref{21Nmin} holds for any value of $N\geq 0$. We choose $N$ as a function of $\rho$ to minimize the RHS of \eqref{21Nmin}.  It remains to maximize over $\rho$ and find  the maximum rate $R$ admissible by \eqref{66}-\eqref{69}, \eqref{21Nmin}. We solve this optimization problem numerically for the symmetric Gaussian network with $P=1$,  and plot the resulting upper bound in Fig. \ref{closeup}. Note that the upper bound of Theorem \ref{theoremupp} is strictly tighter than Theorem \ref{thmupp} for $0.3962< C< 0.4807$. Furthermore, from the numerical evaluation of the bound, the upper bound of Theorem \ref{theoremupp} is tight for $C\leq 0.6942$ and $C\geq 1.1610$. This is made precise for symmetric Gaussian networks in the following theorem which we prove in Appendix \ref{apgaussmatch2}.
\begin{theorem}
\label{gaussmatch2}
For a symmetric Gaussian diamond network, the upper bound in Theorem \ref{theoremupp} meets the lower bound in Theorem~\ref{lower bound} for all $C$ such that $C\geq \frac{1}{2}\log(1+4P)$, or 
\begin{align}
\label{thmC2}
C\leq \frac{1}{4}\log\frac{1+2P(1+\rho^{(2)})}{1-\left(\rho^{(2)}\right)^2}
\end{align}
where 
\begin{align}
&\rho^{(2)}=\sqrt{1+\frac{1}{4P^2}}-\frac{1}{2P}.\label{ro2}
\end{align}
\end{theorem}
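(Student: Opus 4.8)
The plan is to specialize everything to the symmetric network $C_1=C_2=C$, $P_1=P_2=P$ and to prove the matching by showing that the upper bound of Theorem~\ref{theoremupp} never exceeds the lower bound of Theorem~\ref{lower bound}; since $R_L\le C^{\diamond}\le R_U$ always, this forces $R_U=R_L=C^{\diamond}$. In the symmetric case the lower bound of \eqref{eqlow1}--\eqref{eqlow4} reads $R_L=\max_{0\le\rho\le1}\min\{\,2C-\tfrac12\log\tfrac{1}{1-\rho^2},\ C+\tfrac12\log(1+P(1-\rho^2)),\ \tfrac12\log(1+2P(1+\rho))\,\}$. Following the Gaussian, maximum-entropy evaluation that produced \eqref{66}--\eqref{69} and \eqref{21Nmin}, the (valid) computable upper bound becomes $R_U=\max_{0\le\rho\le1}\min\{\,2C,\ C+\tfrac12\log(1+P(1-\rho^2)),\ \tfrac12\log(1+2P(1+\rho)),\ S(\rho)\,\}$, where $S(\rho)=\min_{N\ge0}$ of the right-hand side of \eqref{21Nmin} is the Gaussian sum-rate bound. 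The only structural difference between the two inner minima is that the lower-bound term $2C-\tfrac12\log\tfrac{1}{1-\rho^2}$ is replaced in $R_U$ by the pair $\{2C,\,S(\rho)\}$, so the entire argument reduces to controlling $S(\rho)$.

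For the high-rate regime $C\ge\tfrac12\log(1+4P)$ the argument is immediate: the MAC term \eqref{69} gives $R_U\le\max_\rho\tfrac12\log(1+2P(1+\rho))=\tfrac12\log(1+4P)$, attained at $\rho=1$, while by Remark~\ref{regimestightthm2} the lower bound with full cooperation $X_1=X_2=U\sim\mathcal N(0,P)$ attains exactly $\tfrac12\log(1+4P)$. Hence $R_U=R_L=\tfrac12\log(1+4P)$.

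The substance is the regime \eqref{thmC2}. First I would locate the lower-bound optimizer $\rho^\star=\rho^\star(C)$: for $C\le\tfrac14\log(1+2P)$ one has $\rho^\star=0$ with $R_L=2C$, and then the constant term \eqref{66} already caps $R_U$ at $2C=R_L$; for larger $C$ the optimizer is the balance point of \eqref{eqlow1} and \eqref{eqlow4}, i.e.\ the root of $2C-\tfrac12\log\tfrac{1}{1-\rho^2}=\tfrac12\log(1+2P(1+\rho))$, giving $R_L=\tfrac12\log(1+2P(1+\rho^\star))$. Solving the balance for $C$ yields $C=\tfrac14\log\tfrac{1+2P(1+\rho^\star)}{1-(\rho^\star)^2}$, whose right-hand side is strictly increasing in $\rho^\star$; comparison with \eqref{thmC2}--\eqref{ro2} then shows $\rho^\star\le\rho^{(2)}$ exactly when $C\le C_{\mathrm{thr}}:=\tfrac14\log\tfrac{1+2P(1+\rho^{(2)})}{1-(\rho^{(2)})^2}$, where by \eqref{regimeofN} and \eqref{ro2} the value $\rho^{(2)}$ is the largest correlation for which the degraded auxiliary $U=Y+Z'$ from \eqref{choiceofN} can still satisfy $X_1-U-X_2$. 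The anchor computation is that, evaluating \eqref{21Nmin} at $N$ given by \eqref{choiceofN}, the bound collapses to exactly $R_L$, i.e.\ $S(\rho^\star)=R_L$ (the identity $1+N=P(1-(\rho^\star)^2)/\rho^\star$ makes the relevant inequality tight). With this anchor I would bound the inner minimum for every $\rho$: for $\rho\le\rho^\star$ the increasing MAC term \eqref{69} is at most its value $R_L$ at $\rho^\star$; for $\rho^\star<\rho\le\rho^{(2)}$ the sum-rate bound satisfies $S(\rho)\le R_L$; and for $\rho>\rho^{(2)}$ the choice $N=0$ makes \eqref{21Nmin} reduce to the redundant term \eqref{67}, which is decreasing and already at most $R_L$ at $\rho^{(2)}$. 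This gives $R_U\le R_L$, with the maximum attained at $\rho^\star$.

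The main obstacle is controlling $S(\rho)=\min_{N\ge0}\eqref{21Nmin}$ on the interval $(\rho^\star,\rho^{(2)}]$: because \eqref{21Nmin} comes from the entropy-power inequality rather than from the exact identity \eqref{diffinf}, it is strictly looser than $2C-\tfrac12\log\tfrac{1}{1-\rho^2}$ away from $\rho^\star$, so one cannot simply invoke $I(X_1;X_2\mid UQ)=0$. I would instead bound $S(\rho)$ from above by evaluating \eqref{21Nmin} at $N$ from \eqref{choiceofN} and reduce the desired inequality $S(\rho)\le R_L$ to the one-variable algebraic condition $2^{4C}\,(1+N+P(1-\rho^2))^2\le 2^{2R_L}\,(2^{2R_L}+N)(1+N)$, which holds with equality at $\rho=\rho^\star$ and must be shown to persist as an inequality for $\rho>\rho^\star$. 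Verifying this, together with the monotonicity of $S$, the inactivity of the second lower-bound term (which coincides with \eqref{67}) throughout \eqref{thmC2}, and the threshold identity $\rho^\star\le\rho^{(2)}\Leftrightarrow C\le C_{\mathrm{thr}}$, are the technical steps I expect to require the most care; they are precisely the parts that the looser Theorem~\ref{thmupp} could not supply, which is why Theorem~\ref{theoremupp} extends the matching region down to all of \eqref{thmC2}.
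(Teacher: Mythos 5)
Your overall strategy coincides with the paper's proof in Appendix \ref{apgaussmatch2}: show $R_U\le R_L$ so that $R_L\le C^\diamond\le R_U$ forces equality, anchor the sum-rate/EPI bound at the lower-bound optimizer $\rho^\star$ via the noise choice \eqref{choiceofN}, and use the equivalence $\rho^\star\le\rho^{(2)}\Leftrightarrow$ \eqref{thmC2}, which is exactly what makes $N_{\rho^\star}=P(1/\rho^\star-\rho^\star)-1\ge 0$. Your anchor computation (the bound at $(\rho^\star,N_{\rho^\star})$ collapses to $R_L$) is also correct. The genuine gap is in how you propagate the bound to $\rho>\rho^\star$: you let the auxiliary noise track $\rho$, taking $N=N_\rho$ on $(\rho^\star,\rho^{(2)}]$ and $N=0$ on $(\rho^{(2)},1]$, which requires (i) your algebraic condition $2^{4C}\bigl(1+N_\rho+P(1-\rho^2)\bigr)^2\le 2^{2R_L}\bigl(2^{2R_L}+N_\rho\bigr)(1+N_\rho)$ to persist on $(\rho^\star,\rho^{(2)}]$, and (ii) $C+\tfrac12\log\bigl(1+P(1-(\rho^{(2)})^2)\bigr)\le R_L$. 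Both fail in general. Since $N_{\rho^{(2)}}=0$ by the very definition of $\rho^{(2)}$, both (i) at $\rho=\rho^{(2)}$ and (ii) reduce to the same inequality $C+\tfrac12\log(1+\rho^{(2)})\le R_L$; now take $P=0.25$, $C=0.15$ (the paper's own numerical example, which lies in regime \eqref{thmC2} since the threshold there is $\approx 0.194$). One finds $\rho^{(2)}=\sqrt5-2\approx0.236$, $\rho^\star\approx 0.029$, $R_L\approx 0.2994$, but $C+\tfrac12\log_2(1+\rho^{(2)})\approx 0.3029>R_L$. Worse, on a whole interval of $\rho$ around $\rho^{(2)}$ the remaining terms also exceed $R_L$ (the MAC term is increasing, hence $>R_L$ for $\rho>\rho^\star$, and $2C=0.3>R_L$), so your estimate of the inner minimum is strictly larger than $R_L$ there and the matching is not established.

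The repair is precisely what the paper does, and it is simpler than your plan: freeze the noise at the anchor value $N_{\rho^\star}$ for \emph{all} $\rho$, which is legitimate because the upper bound holds for every fixed $N\ge0$. After replacing $2^{2R}$ in \eqref{21N} by $2^{2R_L}$ (valid since $R_L\le C^\diamond$ and this only weakens the bound), the resulting function of $\rho$ varies only through $\log\bigl(1+N_{\rho^\star}+P(1-\rho^2)\bigr)$, hence is decreasing in $\rho$, and it equals $R_L$ at $\rho=\rho^\star$ by your anchor identity; combined with the increasing MAC term being $\le R_L$ for $\rho\le\rho^\star$, this settles every $\rho$ in one stroke, with no case split at $\rho^{(2)}$ and no further algebra. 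Equivalently, the monotonicity of $S$ that you mention only in passing already finishes the proof: each fixed-$N$ curve in \eqref{21Nmin} is decreasing in $\rho$, so the pointwise minimum $S$ is decreasing, and $S(\rho)\le S(\rho^\star)\le R_L$ for all $\rho\ge\rho^\star$; your interval-by-interval evaluations are thus both unnecessary and, as the counterexample shows, unsound. Finally, your deferred claim that the second lower-bound constraint is inactive does hold in the symmetric case---at the crossing of the first and fourth constraints its slack is $\tfrac14\log\Bigl(1+\tfrac{(\rho-P(1-\rho^2))^2}{(1-\rho^2)(1+2P(1+\rho))}\Bigr)\ge 0$---but the paper avoids even this step by observing that whenever that constraint binds, the lower bound already meets the cut-set bound and matching is immediate.
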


\begin{proof}[Sketch of proof]
The regime $C\geq \frac{1}{2}\log(1+4P)$ is addressed in Remark \ref{regimestightthm2}. We briefly outline the proof for the regime in \eqref{thmC2}.  Consider the lower bound in \eqref{eqlow1}-\eqref{eqlow4} and let $R_{\max}^{(l)}$ be the maximum achievable rate. This lower bound meets the cut-set bound (and is thus tight) unless \eqref{eqlow1} and \eqref{eqlow4} are both active in which case we have 
\begin{align}
R^{(l)}_{\max}=C_1+C_2-\frac{1}{2}\log\frac{1}{1-\lambda^2}=\frac{1}{2}\log\left(1+P_1+P_2+2\lambda\sqrt{P_1P_2}\right)\label{regimematch}
\end{align}
where $\lambda$ is the optimal correlation coefficient in \eqref{eqlow1}-\eqref{eqlow4}. We show in Appendix~\ref{apgaussmatch2} that the upper bound given by \eqref{66}-\eqref{69}, \eqref{21N} meets the lower bound $R^{(l)}_{\max}$ when we have \eqref{regimematch} and $\lambda\leq \rho^{(2)}$. One can check for  symmetric networks that $\lambda\leq \rho^{(2)}$ if and only if \eqref{thmC2} is satisfied.
\end{proof}

{
More generally, we have the following result for asymmetric networks. This is addressed in Remark \ref{generalgaussmatch2} in Appendix \ref{apgaussmatch2}.
\begin{theorem}
\label{asymetricGauss}
The upper bound in Theorem \ref{theoremupp} meets the lower bound in Theorem~\ref{lower bound} if any of the following conditions hold:
\begin{align}
%&C_1\leq \frac{1}{2}\log\left(\frac{1+P_1+P_2}{1+P_2}\right)\\
%&C_2\leq \frac{1}{2}\log\left(\frac{1+P_1+P_2}{1+P_1}\right)\\
&C_1+C_2\leq\frac{1}{2}\log\left(\frac{1+P_1+P_2+2\rho^{(2)}\sqrt{P_1P_2}}{1-\left(\rho^{(2)}\right)^2}\right)\label{PP0}\\
&C_1\leq \frac{1}{2}\log\left(\frac{1+P_1+P_2+2\rho_0\sqrt{P_1P_2}}{1+P_2(1-\rho_0^2)}\right)\label{PP1}\\
&C_2\leq \frac{1}{2}\log\left(\frac{1+P_1+P_2+2\rho_0\sqrt{P_1P_2}}{1+P_1(1-\rho_0^2)}\right)\label{PP2}\\
&\min(C_1,C_2)\geq \frac{1}{2}\log\left(1+P_1+P_2+2\sqrt{P_1P_2}\right)\label{PP}
\end{align}
where $\rho^{(2)}$ is given by \eqref{ro2} and $\rho_0$ is given by 
$$\rho_0=\frac{-\sqrt{P_1P_2}+\sqrt{P_1P_2+2^{2(C_1+C_2)}\left(2^{2(C_1+C_2)}-1-P_1-P_2\right)}}{2^{2(C_1+C_2)}}.$$
\end{theorem}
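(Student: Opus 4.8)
The plan is to follow the strategy behind Theorem~\ref{gaussmatch2} and exploit the sandwiching $R^{(l)}_{\max}\le C^\diamond\le R^{(u)}_{\max}$, where $R^{(l)}_{\max}$ is the maximum of the Gaussian lower bound \eqref{eqlow1}-\eqref{eqlow4} and $R^{(u)}_{\max}$ is the maximum of the upper bound \eqref{66}-\eqref{69}, \eqref{21N}. Since the achievable and converse bounds already trap the capacity, it suffices to prove $R^{(u)}_{\max}\le R^{(l)}_{\max}$ under each of \eqref{PP0}-\eqref{PP}. I parametrize both bounds by the correlation coefficient $\rho$, writing $A(\rho)=C_1+C_2+\tfrac12\log(1-\rho^2)$ for the cooperation term \eqref{eqlow1}, $B(\rho)=\tfrac12\log(1+P_1+P_2+2\rho\sqrt{P_1P_2})$ for \eqref{eqlow4}, and $D_1(\rho),D_2(\rho)$ for the two single-relay terms that are shared by both bounds. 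As $B$ increases and $A,D_1,D_2$ decrease in $\rho$, the lower bound equals $\max_\rho\min\{A,D_1,D_2,B\}$ and its maximizer $\lambda$ is the crossing of $B$ with $m(\rho):=\min\{A,D_1,D_2\}$.

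The argument splits into two regimes according to whether the cooperation term $A$ is active at $\lambda$. If $m(\lambda)=\min\{D_1,D_2\}(\lambda)<A(\lambda)$, then raising $A$ to the cut-set value $C_1+C_2\ge A$ leaves the binding crossing unchanged, so $R^{(l)}_{\max}$ already equals the cut-set bound \eqref{disregard} and the upper bound, squeezed between the two, matches. If instead $A$ is active, then $\lambda=\rho_0$, the crossing of $A$ and $B$, which solves $C_1+C_2=\tfrac12\log\frac{1+P_1+P_2+2\rho_0\sqrt{P_1P_2}}{1-\rho_0^2}$ (and this is exactly the $\rho_0$ of the theorem statement); this is precisely \eqref{regimematch}. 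For this regime I reuse the core computation of Appendix~\ref{apgaussmatch2}, which, phrased through $A,B,\rho^{(2)}$, makes no use of symmetry: whenever $\lambda\le\rho^{(2)}$ one may take $U=Y+Z'$ with $N$ as in \eqref{choiceofN}, forcing the Markov chain $X_1-U-X_2$ so that $I(X_1;X_2|U)=0$ in \eqref{21N}; minimizing the resulting self-referential bound over $N\ge0$ at every $\rho$ then caps $R^{(u)}_{\max}$ at $A(\rho_0)=B(\rho_0)=R^{(l)}_{\max}$.

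It remains to check that each listed condition forces one of these regimes. For \eqref{PP0}, monotonicity of $g(\rho)=\frac{1+P_1+P_2+2\rho\sqrt{P_1P_2}}{1-\rho^2}$ converts the stated inequality into $\rho_0\le\rho^{(2)}$; hence in the active case $\lambda=\rho_0\le\rho^{(2)}$ and the second regime applies, while in the slack case we are in the first. For \eqref{PP1} I rewrite the hypothesis as $D_1(\rho_0)\le B(\rho_0)=A(\rho_0)$; since $m(\rho_0)\le D_1(\rho_0)\le B(\rho_0)$ this places $\lambda\le\rho_0$, and since $A-D_1=C_2+\tfrac12\log\frac{1-\rho^2}{1+P_2(1-\rho^2)}$ is decreasing in $\rho$ we get $A(\rho)\ge D_1(\rho)\ge m(\rho)$ for all $\rho\le\rho_0$, so $A$ does not bind at $\lambda$ and the first regime applies. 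Condition \eqref{PP2} is identical with the roles of $D_1,D_2$ and $P_1,P_2$ exchanged. Finally, \eqref{PP} is the full-cooperation regime of Remark~\ref{regimestightthm2}: the choice $X_1=X_2=U$ in Theorem~\ref{lower bound} achieves $\tfrac12\log(1+P_1+P_2+2\sqrt{P_1P_2})$, and since $\min(C_1,C_2)$ exceeds this value the cut-set and upper bounds equal it as well.

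The main obstacle is the active regime. Establishing the implication $\lambda\le\rho^{(2)}\Rightarrow R^{(u)}_{\max}=R^{(l)}_{\max}$ requires handling the $\max_\rho\min_N$ structure of Theorem~\ref{theoremupp} at once: the constraint \eqref{21N} is self-referential through the term $2^{2R}$, and one must verify that its minimization over $N\ge0$ keeps the fifth term at or below $R^{(l)}_{\max}$ for \emph{every} $\rho$, with the threshold $\rho^{(2)}$ marking exactly where the minimizing $N$ ceases to realize $X_1-U-X_2$ as a Markov chain. This monotonicity-and-crossing analysis is the substance of Appendix~\ref{apgaussmatch2}; the asymmetry enters only through the bookkeeping of which single-relay term $D_1$ or $D_2$ binds, so no inequality beyond the symmetric case is needed.
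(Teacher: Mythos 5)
Your proposal is correct and takes essentially the same route as the paper's own proof (Remark \ref{generalgaussmatch2} in Appendix \ref{apgaussmatch2}): split into the regime where the lower bound \eqref{eqlow1}-\eqref{eqlow4} already equals the cut-set bound (covering \eqref{PP1}, \eqref{PP2}, and the slack case of \eqref{PP0}), the active regime \eqref{regimematch} where \eqref{PP0} forces $\lambda=\rho_0\leq\rho^{(2)}$ and the computation of \eqref{step1}-\eqref{rhsakhar} carries over to asymmetric powers with $N$ chosen as in \eqref{choiceofN}, and the full-cooperation regime \eqref{PP}. The only difference is one of detail: you explicitly supply the monotonicity and crossing-point checks (e.g.\ that $A-D_1$ is decreasing in $\rho$, and that $\rho_0$ is exactly the crossing of $A$ and $B$) which the paper compresses into ``one can check.''
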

\begin{remark}
$\rho_0$ is defined such that $C_1+C_2=\frac{1}{2}\log\left(\frac{1+P_1+P_2+2\rho_0\sqrt{P_1P_2}}{1-\rho_0^2}\right)$. Note that we have $\rho_0\leq \rho^{(2)}$ if and only if \eqref{PP0} is satisfied. In defining $\rho_0$, we have implicitly assumed that $C_1+C_2\geq\frac{1}{2}\log\left(1+P_1+P_2\right)$; this is without loss of generality because otherwise $C_1,C_2$ are in the regime defined by \eqref{PP0}.
\end{remark}
\begin{remark}
In the regime \eqref{PP}, the cut-set bound is achievable using Theorem \ref{lower bound} with $\frac{U}{\sqrt{P_1}}=\frac{X_1}{\sqrt{P_1}}=\frac{X_2}{\sqrt{P_2}}\sim\mathcal{N}(0,1)$ and the lower bound in \eqref{eqlow1}-\eqref{eqlow4}  is loose.
\end{remark}
\begin{remark}
Theorem \ref{asymetricGauss} reduces to Theorem \ref{gaussmatch2} when $P_1=P_2=P$ and $C_1=C_2=C$.%, see Remark \ref{} in Appendix \ref{}.
\end{remark}
\begin{remark}
Theorem \ref{theoremupp} is strictly tighter than \cite[Theorem 1]{KangLiu11} and \cite[Theorem 1]{KangLiuChong15}. The regime of interest is given by \eqref{regimematch} because otherwise both upper bounds reduce to the cut-set bound which is tight. First  suppose $\lambda> \rho^{(2)}$. In this case,  \cite[Theorem 1]{KangLiu11} reduces to the cut-set bound and is larger than or equal to the upper bound of Theorem \ref{theoremupp}.
Next suppose $\lambda\leq \rho^{(2)}$. Here, the upper bound given by \eqref{66}-\eqref{69}, \eqref{21N} can be shown to be equal to $R_{\max}^{(l)}$ and is thus tight but \cite[Theorem 1]{KangLiu11} may not be tight, see \cite[Theorem 3]{KangLiuChong15}. For example, when $P_1=P_2=0.25$ and $C_1=C_2=0.15$ Theorem~\ref{theoremupp} gives $C^{\diamond}\leq .2994$ (which is tight) whereas \cite[Theorem 1]{KangLiu11} gives $C^{\diamond}\leq 0.3$. The looseness of \cite[Theorem 1]{KangLiu11} in comparison to our result seems to be due to the relaxation of \cite[eqn. (28)]{KangLiu11} in the final theorem statement of \cite[Theorem 1]{KangLiu11}.
\end{remark}

%\begin{remark}
%Following the proof steps of \cite[Theorem 1]{KangLiu11}, one sees that Kang and Liu has, in effect, proved the following statement that is stronger than \cite[Theorem 1]{KangLiu11}: If $R$ is achievable and $R\geq \frac{1}{2}\log\left(1+P_1+P_2\right)$, then there exists $\rho$,
%\begin{align}
%\rho=\frac{1}{2\sqrt{P_1P_2}}\left(2^{2R}-1-P_1-P_2\right),\label{rhokang}
%\end{align}
%such that \eqref{eqlow1}-\eqref{eqlow4} are satisfied if $\rho\leq \rho^{(2)}$,  and \eqref{66}-\eqref{69} are satisfied otherwise. If $R$ is achievable and $R< \frac{1}{2}\log\left(1+P_1+P_2\right)$, then \eqref{eqlow1}-\eqref{eqlow4} are satisfied with $\rho=0$. For example, set $P_1=P_2=0.25$ and $C_1=C_2=0.15$. We have $\rho^{(2)}\approx0.2361$. $R=0.3$ is not achievable because we have $\rho\approx .0314\leq \rho^{(2)}$ and \eqref{eqlow1} is violated. However, this rate is admissible by \cite[Theorem 1]{KangLiu11}. This looseness comes in because \cite[eqn. (28)]{KangLiu11} is relaxed in the statement of \cite[Theorem 1]{KangLiu11}. 
%\end{remark}
}

\section{The binary Adder Channel}
\label{exAdd}

Consider the binary adder channel  defined by $\mathcal{X}_1=\{0,1\}$, $\mathcal{X}_2=\{0,1\}$, $\mathcal{Y}=\{0,1,2\}$, and $Y=X_1+X_2$. Suppose without loss of generality that $C_1\leq C_2$.  When $C_1=C_2=C$, we call the network symmetric. The best known upper bound for this channel is the cut-set bound and the best known lower bound is given by \cite[Theorem 1]{TraskovKramer07}. More precisely, using a doubly symmetric input distribution,  $R$ is achievable if it satisfies the following inequalities for some $p$, $0\leq p\leq 1$.
\begin{align}
R&\leq C_1+C_2 - 1+ h_2(p) \label{eqlowbin1}\\
R&\leq C_1+ h_2(p)\label{loose}\\
R&\leq h_2(p) + 1 - p\label{eqlowbin3}
\end{align}
%Note that \eqref{loose} is redundant because it can be obtained from \eqref{eqlowbin1} and \eqref{eqlowbin3}. 
 This lower bound is a special case of Theorem \ref{lower bound} with $U$ a constant. The bound is  plotted in Fig. \ref{figplotbinaryadder} as a function of $C$ for symmetric networks where $C_1=C_2=C$. 
 \begin{figure}[t!]
\centering
\input{}
\caption{Upper and lower bounds on $R$ as functions of $C$ for the binary adder MAC.}
\label{figplotbinaryadder}
\end{figure}

We evaluate  Theorems \ref{thmupp} and \ref{theoremupp} to derive new upper bounds on the achievable rate. The obtained upper bounds are plotted in Fig. \ref{figplotbinaryadder}. 
{It turns out that the upper bound of Theorem \ref{theoremupp} meets the lower bound for all ranges of $C$. Theorem \ref{theoremupp} is better than Theorem \ref{thmupp},  but  Theorem \ref{thmupp} is simpler to analyze and gives capacity for $C\leq0.75$ and $C\geq.7929$.}

Consider first Theorem \ref{thmupp}. Let $p(u|y)$ be a symmetric channel as shown in Fig. \ref{channelu} with parameter $\alpha$, $\alpha\leq \frac{1}{2}$, to be optimized. From Theorem~\ref{thmupp},  we must solve a max-min problem (max over $p(x_1,x_2)$, min over $\alpha$). 
%We first loosen the  bound by looking at the min-max problem (rather than the max-min problem). 
For a fixed $\alpha$, the upper bound is concave in $p(x_1,x_2)$ (see Remark \ref{concave}). The concavity together with the symmetry of the problem and the auxiliary channel in $p(x_1,x_2)$ imply the following lemma. We defer the proof to Appendix \ref{detailoptlemma}.

\begin{lemma}
\label{lemmadoubly}
An optimizing pmf $p(x_1,x_2)$ in \eqref{thmupp15} is that of a doubly symmetric binary source.
\end{lemma}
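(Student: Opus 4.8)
The plan is to exploit the symmetry of the objective in \eqref{thmupp15} together with its concavity (Remark \ref{concave}) via a standard averaging argument. For the symmetric network we have $C_1=C_2$, the binary adder MAC satisfies $Y=X_1+X_2$, and the auxiliary channel $p(u|y)$ is chosen symmetric in the sense that it is invariant, up to a relabeling of $\mathcal{U}$, under the reflection $y\mapsto 2-y$. I would first identify two bijections of the input alphabet that leave the objective unchanged: the swap $\sigma:(x_1,x_2)\mapsto(x_2,x_1)$ and the complement $\tau:(x_1,x_2)\mapsto(1-x_1,1-x_2)$. Both act on any candidate pmf $p(x_1,x_2)$ by relabeling its mass, and the set of pmfs fixed by the group $G=\{\mathrm{id},\sigma,\tau,\sigma\tau\}$ is exactly $\{p:p_{00}=p_{11},\ p_{01}=p_{10}\}$, i.e. the doubly symmetric binary sources.

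The key step is to verify that the value of the minimization in \eqref{thmupp15} is invariant when $p(x_1,x_2)$ is replaced by $\sigma p$ or $\tau p$. Under $\sigma$ the MAC and the auxiliary channel are untouched, and since $C_1=C_2$ the second and third terms of the minimum merely exchange roles while the remaining terms are symmetric in $(X_1,X_2)$; hence the minimum is unchanged. Under $\tau$ we have $Y\mapsto 2-Y$, so invariance of each mutual information term follows because mutual information is preserved under bijective relabeling of the random variables. For the terms involving $U$ — namely $I(X_1X_2;Y|U)$, $I(X_1;U|X_2)$ and $I(X_2;U|X_1)$ — I would use the reflection symmetry of $p(u|y)$: applying $\tau$ to the input and the induced map $y\mapsto 2-y$ to $Y$ transforms $(X_1,X_2,Y,U)$ into a relabeled copy of itself, after the relabeling of $\mathcal{U}$ that realizes the auxiliary-channel symmetry, so these terms are preserved as well. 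I expect this verification, especially for the $U$-terms, to be the main obstacle, as it is where the precise symmetry of the chosen auxiliary channel must be used.

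With invariance established, I would complete the proof by a concavity-averaging argument. Let $p^\star$ be any maximizer of the minimum in \eqref{thmupp15} and form the group average $\bar p=\tfrac{1}{4}\big(p^\star+\sigma p^\star+\tau p^\star+\sigma\tau p^\star\big)$, which is $G$-invariant and therefore a doubly symmetric binary source. By Remark \ref{concave} each term of the minimum is concave in $p(x_1,x_2)$, so the minimum itself is concave; Jensen's inequality then gives that the objective at $\bar p$ is at least the average of the objectives at $p^\star,\sigma p^\star,\tau p^\star,\sigma\tau p^\star$. By the invariance just proved these four objective values all equal the optimum, so $\bar p$ attains a value at least the optimum and is hence itself optimal. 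Thus there is an optimizing pmf of doubly symmetric binary source form, as claimed.
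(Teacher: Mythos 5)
Your proposal is correct and is essentially the paper's own argument: Appendix~\ref{detailoptlemma} likewise symmetrizes an optimizer by averaging it with its image under the complement map $(x_1,x_2)\mapsto(1-x_1,1-x_2)$ and invokes the concavity of Remark~\ref{concave} (a pointwise minimum of concave functions is concave) to conclude that the averaged pmf, which is doubly symmetric, is still optimal. One small caution: the paper averages only over $\{\mathrm{id},\tau\}$, whose fixed points are already the doubly symmetric sources, so it never uses the swap $\sigma$ and hence never needs $C_1=C_2$; since the lemma feeds into Corollary~\ref{corbinaryadder} with general $C_1\leq C_2$, you should drop $\sigma$ from your group (your $\tau$-invariance argument alone suffices) rather than restrict to the symmetric network.
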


%\begin{remark}
%\label{whyminmax}
%The reason we loosen the max-min problem to a min-max problem is that we would like to constrain the optimization space of Theorem \ref{thmupp} to the set of doubly symmetric binary sources and for that we need $p(u|y)$ to be fixed. 
%\end{remark}

So suppose $p(x_1,x_2)$ is a doubly symmetric binary source with parameter $p$. The upper bound in Theorem~\ref{thmupp} with $p(u|y)$ in Fig. \ref{channelu} reduces to
\begin{align}
\max_{0\leq p\leq \frac{1}{2}}  \min_{0\leq \alpha\leq \frac{1}{2}} \min\left\{\begin{array}{l}
C_1+C_2,\\
C_1 + h_2(p),\\
h_2(p) + 1 - p,\\
\frac{C_1+C_2}{2} + h_2(p) - \frac{p}{2}+\frac{1}{2}I(X_1;X_2|U)
\end{array}\right\}
\label{RHSi}
\end{align}
where the last term of \eqref{RHSi} is written using \eqref{equivthm6}, and where the range of $p$ is $[0,\frac{1}{2}]$. Note that $\alpha$ is implicit in $I(X_1;X_2|U)$:% since
\begin{eqnarray}
\label{mutualinfoa}
I(X_1;X_2|U)=2h_2\left(\alpha\star\frac{p}{2}\right)-(1-p)h_2(\alpha)-h_2(p)-p.\label{explicit}
\end{eqnarray}
Here the $\star$ operator is defined by $\alpha\star\beta=\alpha(1-\beta)+\beta(1-\alpha)$, $\beta\leq 1$. To obtain the best bound, we choose $U$ such that $X_1-U-X_2$ forms a Markov chain.
%Such a choice depends on $p$ and is not permitted since we are solving a $\min-\max$ problem (see Remark \ref{whyminmax}). 
This requires 
\begin{align}
\label{choicea}
\alpha(1-\alpha)=\left(\frac{p}{2(1-p)}\right)^2
\end{align}
%where $p^\star$ is the $p$ that maximizes
%\begin{align}
%\min\left\{   
%2C,
%C+h_2(p),
%h_2(p)+1-p,
%C+h_2(p)-\frac{p}{2}
%  \right\}.\label{upp}
%\end{align}
which has a solution for $\alpha$ because $p\leq \frac{1}{2}$.
%, the choice \eqref{choicea} is valid. The choice in \eqref{choicea} makes $X_1-U-X_2$ form a Markov chain at $p=p^\star$. The motivation behind this choice is explained in detail in Appendix \ref{apmotivation} where the following corollary is proved.

\begin{figure}[t!]
\centering
\begin{tikzpicture}[scale=.9]
\tikzstyle{every node}=[draw,shape=circle];

\path (-.5,4.5) node[draw=none] () {$Y$};
\path (3.5,4.5) node[draw=none] () {$U$};

\coordinate (y0) at (0,0);
 \fill (y0) circle (4pt);
 \coordinate (y1) at (0,1.875);
 \fill (y1) circle (4pt);
 \coordinate (y2) at (0,3.75);
 \fill (y2) circle (4pt);

\path (-.5,0) node[draw=none]() {$2$};
\path (-.5,1.875) node[draw=none] () {$1$};
%\path (-.5,2.5)  node[draw=none] () {$01$};
\path (-.5,3.75) node[draw=none] () {$0$};

 \coordinate (u0) at (3,1);
 \fill (u0) circle (4pt);
 \coordinate (u1) at (3,2.5);
 \fill (u1) circle (4pt);
 \path (3.5,1)  node[draw=none] () {$1$};
\path (3.5,2.5) node[draw=none] () {$0$};

%\draw[->]
  %  (v0) -- (v1) ;
\draw[->] (y0) --node[left,draw=none,xshift=.4cm,yshift=-.4cm]{$1-\alpha$} (u0);
\draw[->] (y0) --node[left,draw=none,xshift=0cm,yshift=-.4cm]{$\alpha$} (u1);
\draw[->] (y1) --node[left,draw=none,xshift=-.3cm,yshift=-0.15cm]{$\frac{1}{2}$} (u0);
\draw[->] (y1) --node[left,draw=none,xshift=-.3cm,yshift=0.15cm]{$\frac{1}{2}$} (u1);
%\draw[->] (x10) --node[left,draw=none,xshift=-.5cm,yshift=.2cm]{$\frac{1}{2}$} (u0);
%\draw[->] (x10) --node[left,draw=none,xshift=0.1cm,yshift=-.2cm]{$\frac{1}{2}$} (u1);
\draw[->] (y2) --node[left,draw=none,xshift=.4cm,yshift=.4cm]{$1-\alpha$} (u1);
\draw[->] (y2) --node[left,draw=none,xshift=0cm,yshift=.45cm]{$\alpha$} (u0);

\end{tikzpicture}
\caption{Auxiliary channel $p(u|y)$ for the binary adder MAC.}
\label{channelu}
\end{figure}

\begin{corollary}
\label{corbinaryadder}
Rate $R$ is achievable only if there is some $p$, $0\leq p\leq\frac{1}{2}$, such that
\begin{align}
R&\leq C_1+C_2\label{2Cbound}\\
R&\leq C_1 + h_2(p)\\
R&\leq h_2(p) + 1 - p\\
R&\leq \frac{C_1+C_2}{2} + h_2(p) - \frac{p}{2}.
\end{align}
\end{corollary}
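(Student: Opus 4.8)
The plan is to start from the reduced max--min expression \eqref{RHSi}, which is precisely the bound of Theorem \ref{thmupp} evaluated for the symmetric auxiliary channel of Fig. \ref{channelu} and for the doubly symmetric binary source input whose optimality is guaranteed by Lemma \ref{lemmadoubly}. The structural observation I would exploit is that the first three entries of the inner minimum in \eqref{RHSi} do not depend on the channel parameter $\alpha$; only the fourth entry does, through $I(X_1;X_2|U)$ as given in \eqref{mutualinfoa}. I would therefore push the minimization over $\alpha$ inside, using that for quantities $A,B,C$ constant in $\alpha$,
\begin{equation*}
\min_{\alpha}\min\{A,B,C,D(\alpha)\}=\min\{A,B,C,\min_{\alpha}D(\alpha)\},
\end{equation*}
where $D(\alpha)=\frac{C_1+C_2}{2}+h_2(p)-\frac{p}{2}+\frac12 I(X_1;X_2|U)$. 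It then suffices to evaluate $\min_{\alpha}D(\alpha)$, i.e. to minimize $I(X_1;X_2|U)$ over $\alpha\in[0,\tfrac12]$.

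The second step is to carry out this minimization. Since $I(X_1;X_2|U)$ is a conditional mutual information it is nonnegative, so $D(\alpha)\ge\frac{C_1+C_2}{2}+h_2(p)-\frac{p}{2}$ for every $\alpha$, with equality exactly when $X_1-U-X_2$ forms a Markov chain. I would then exhibit a feasible $\alpha$ attaining equality: imposing conditional independence of $X_1$ and $X_2$ given $U$ on the joint law induced by the doubly symmetric source, the adder channel $Y=X_1+X_2$, and the channel of Fig. \ref{channelu}, reduces to the single scalar condition \eqref{choicea}, namely $\alpha(1-\alpha)=\bigl(\tfrac{p}{2(1-p)}\bigr)^2$. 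Because $p\le\tfrac12$ yields $\tfrac{p}{2(1-p)}\le\tfrac12$, the right-hand side is at most $\tfrac14$, which is exactly $\max_{\alpha\in[0,1/2]}\alpha(1-\alpha)$; hence a solution in the admissible range always exists. For that $\alpha$ we get $I(X_1;X_2|U)=0$, the lower bound is attained, and $\min_{\alpha}D(\alpha)=\frac{C_1+C_2}{2}+h_2(p)-\frac{p}{2}$.

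Substituting this value back into \eqref{RHSi} replaces the fourth entry of the inner minimum by $\frac{C_1+C_2}{2}+h_2(p)-\frac{p}{2}$ and leaves the other three untouched, so the whole bound becomes $\max_{0\le p\le 1/2}$ of the minimum of the four expressions appearing in the statement. Reading the outer maximization over $p$ as the assertion that some single $p$ makes all four inequalities hold simultaneously gives exactly Corollary \ref{corbinaryadder}. I expect the only delicate point to be verifying that \eqref{choicea} really is equivalent to the Markov condition $X_1-U-X_2$ for this particular auxiliary channel; this is a direct but slightly tedious finite computation over the ternary output alphabet $\{0,1,2\}$. Once that identity is established, the nonnegativity of $I(X_1;X_2|U)$ makes the chosen $\alpha$ optimal and the remainder is pure substitution.
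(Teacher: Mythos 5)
Your proposal is correct and follows essentially the same route as the paper: starting from \eqref{RHSi}, choosing $\alpha$ via the Markov-chain condition \eqref{choicea} so that $I(X_1;X_2|U)=0$, and substituting back to obtain the four bounds of Corollary \ref{corbinaryadder}. The only difference is cosmetic—you make explicit the min-exchange and the nonnegativity argument for why this $\alpha$ is optimal, which the paper leaves implicit in the phrase ``to obtain the best bound.''
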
 

We compare Corollary \ref{corbinaryadder} with the lower bound in \eqref{eqlowbin1}-\eqref{eqlowbin3} for symmetric networks and find the capacity  for some ranges of $C$. This is summarized in Theorem \ref{ranges} and the proof is deferred to Appendix \ref{apranges}.

\begin{theorem}
\label{ranges}
The upper bound in Theorem \ref{thmupp} meets the lower bound in Theorem \ref{lower bound} for the symmetric diamond network with a binary adder channel if $C\leq .75$ or
\begin{eqnarray}
\label{thmCbinary}
C\geq 1-\frac{p^{(1)}}{2}\approx 0.7929
\end{eqnarray}
where $p^{(1)}=\frac{1}{1+\sqrt{2}}\approx 0.4142$.
\end{theorem}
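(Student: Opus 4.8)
The plan is to specialize both bounds to the symmetric case $C_1=C_2=C$ and reduce each to a one-dimensional maximization over the crossover parameter $p$. I would write the lower bound \eqref{eqlowbin1}--\eqref{eqlowbin3} as $R_\ell(C)=\max_{0\le p\le 1/2}\min\{f_1,f_3\}$ with $f_1=2C-1+h_2(p)$ and $f_3=h_2(p)+1-p$; the middle term $C+h_2(p)$ never binds because $2C-1\le C$. I would write the upper bound of Corollary~\ref{corbinaryadder} as $R_u(C)=\max_{0\le p\le 1/2}\min\{u_1,u_3,u_4\}$ with $u_1=2C$, $u_3=h_2(p)+1-p$ and $u_4=C+h_2(p)-p/2$; the term $C+h_2(p)$ is dropped since it dominates $u_4$. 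Because the upper bound is never below the achievable rate, $R_\ell(C)\le R_u(C)$, so in each regime it suffices to prove the reverse inequality $R_u(C)\le R_\ell(C)$.

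First I would treat $C\le \tfrac34$. For every $p$ one has $\min\{u_1,u_3,u_4\}\le u_1=2C$, hence $R_u(C)\le 2C$. Evaluating the lower bound at $p=\tfrac12$ gives $f_1=2C$ and $f_3=\tfrac32\ge 2C$, so $R_\ell(C)\ge 2C$. Thus $2C\le R_\ell(C)\le R_u(C)\le 2C$ and the three quantities coincide.

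Next I would treat $C\ge 1-p^{(1)}/2$, splitting it at $C=\tfrac56$. For $C\ge\tfrac56$, evaluate both bounds at $p=\tfrac13$: each attains $\log 3$, and this is optimal because $h_2(p)+1-p\le\log 3$ with equality at $p=\tfrac13$ (the binding verification is $u_4\ge\log 3$ there, which holds iff $C\ge\tfrac56$). For $1-p^{(1)}/2\le C\le\tfrac56$ I would locate the maximizers. Since $f_1$ is increasing on $[0,\tfrac12]$ and $f_1-f_3=2C-2+p$ vanishes at $p^\star=2(1-C)\in[\tfrac13,\tfrac12]$, the lower bound is attained at $p^\star$, giving $R_\ell(C)=h_2(2(1-C))+2C-1$. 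The decisive observation is that $u_4$ has derivative $h_2'(p)-\tfrac12$ and therefore peaks where $h_2'(p)=\tfrac12$, i.e. exactly at $p^{(1)}=1/(1+\sqrt2)$, while the crossover $u_3=u_4$ again occurs at $p^\star=2(1-C)$ with the same value $h_2(2(1-C))+2C-1$. The threshold $C=1-p^{(1)}/2$ is precisely the value at which $p^\star=p^{(1)}$; hence for $C\ge 1-p^{(1)}/2$ we have $p^\star\le p^{(1)}$, so $u_4$ is increasing on $[0,p^\star]$ while $u_3$ is decreasing on $[p^\star,\tfrac12]$ (as $p^\star\ge\tfrac13$). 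Consequently $\max_p\min\{u_3,u_4\}$ is attained at $p^\star$ and equals $R_\ell(C)$. Dropping $u_1$ then yields $R_u(C)\le\max_p\min\{u_3,u_4\}=R_\ell(C)$, so the bounds meet.

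The main obstacle is the monotonicity bookkeeping that pins the upper-bound maximizer to the same point $p^\star$ as the lower bound; this succeeds only because the peak $p^{(1)}$ of $u_4$ lies to the right of the crossover $p^\star$, which is exactly the hypothesis $C\ge 1-p^{(1)}/2$. For $\tfrac34<C<1-p^{(1)}/2$ the crossover falls on the decreasing side of $u_4$, so the upper-bound maximizer shifts toward $p^{(1)}$ and one gets $R_u(C)>R_\ell(C)$; this is precisely the gap in which Theorem~\ref{thmupp} fails to be tight and the sharper Theorem~\ref{theoremupp} is needed.
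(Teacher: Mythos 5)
Your proof is correct and follows essentially the same route as the paper's appendix proof: both reduce the bounds to the functions $2C$, $h_2(p)+1-p$, and $C+h_2(p)-p/2$, exploit that the upper-bound curves cross at the same point $p^\star=2(1-C)$ where the lower-bound curves cross, and use the fact that $p^\star\le p^{(1)}$ precisely when $C\ge 1-p^{(1)}/2$ (together with $p^\star\ge \tfrac13$ for $C\le\tfrac56$) to pin both maximizers to $p^\star$ with common value $2C-1+h_2\left(2(1-C)\right)$. The paper organizes this as a case-tracking argument in increasing $C$ with the same endpoint regimes ($p=\tfrac12$ for $C\le\tfrac34$, $p=\tfrac13$ for $C\ge\tfrac56$), while you verify the monotonicity directly, but the substance is identical.
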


In the rest of this section, we show that Theorem \ref{theoremupp} gives the capacity of the diamond network with a binary adder MAC for all ranges of $C_1,C_2$. We first state a generalization of Mrs. Gerber's Lemma \cite{WynerZiv73} that we prove in Appendix \ref{concavity lemma}. For other generalizations, please see
\cite{Witsenhausen74, AhlswedeKorner77, ChayatShamai89, ShamaiWyner90, HarremosVignat03, SharmaDasMuthukrishnan11, JohnsonYu10, HaghighatshoarAbbeTelatar13, JogAnantharam14, Cheng14}. Our generalization
is different than previous ones in that it establishes the convexity of a {\em difference} of
entropies, rather than an individual entropy. In this sense, Lemma \ref{GMGL} seems similar to an
extension \cite{LiuViswanath07} of Shannon's entropy power inequality \cite{Shannon}.
\begin{lemma}[Generalization of Mrs. Gerber's Lemma]
\label{GMGL}
The function
 \begin{align}
 g(x,y)=h_2\left(\alpha\star\left(\frac{y}{2}+(1-y)h_2^{-1}\left(\frac{(x-h_2(y))^+}{1-y}\right)\right)\right)-h_2\left(\alpha\star \frac{y}{2}\right)\label{fxy}
 \end{align}
is jointly convex in $x$ and $y$,  $0\leq x\leq 1+h_2(y)-y$, $0\leq y\leq 1$.
We recover Mrs. Gerber's Lemma by choosing $y=0$.
\end{lemma}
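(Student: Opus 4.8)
The plan is to reduce $g$ to a single classical Mrs.\ Gerber function by exploiting the algebra of the binary convolution $\star$, and then to prove joint convexity by a Hessian computation whose only substantive ingredient is the classical lemma itself. First I would dispose of the trivial part of the domain: for $x\le h_2(y)$ the inner argument $(x-h_2(y))^+$ vanishes, so $g\equiv 0$ there. It therefore suffices to prove convexity on $\mathcal{R}=\{(x,y):h_2(y)\le x\le 1+h_2(y)-y,\ 0\le y\le 1\}$, on which $\beta:=h_2^{-1}\big(\tfrac{x-h_2(y)}{1-y}\big)\in[0,\tfrac12]$ is well defined, and then to glue the two regions. The gluing is clean because along the interface $x=h_2(y)$ one has $w:=\tfrac{x-h_2(y)}{1-y}=0$, hence $g=0$; moreover $\Psi_c'(0)=0$ for every $c>0$, and $\alpha'=\alpha\star\tfrac{y}{2}>0$ there whenever $y>0$, so $\nabla g=0$ as well. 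A function convex on each side of a curve with matching value and gradient is globally convex.

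The key simplification is the following. Writing $\gamma=\tfrac{y}{2}+(1-y)\beta$, a direct check gives $\gamma=\tfrac{y}{2}\star\beta$, so by commutativity and associativity of $\star$ we get $\alpha\star\gamma=(\alpha\star\tfrac{y}{2})\star\beta=\alpha'\star\beta$ with $\alpha':=\alpha\star\tfrac{y}{2}$. Let $\Psi_c(w):=h_2\!\big(c\star h_2^{-1}(w)\big)$ be the classical Mrs.\ Gerber function with parameter $c$, which is convex and increasing in $w\in[0,1]$ for each fixed $c\in[0,\tfrac12]$ \cite{WynerZiv73}. Then the two terms of \eqref{fxy} collapse to $h_2(\alpha\star\gamma)=\Psi_{\alpha'}(w)$ and $h_2(\alpha\star\tfrac{y}{2})=h_2(\alpha')=\Psi_{\alpha'}(0)$, so that $g(x,y)=\Psi_{\alpha'}(w)-\Psi_{\alpha'}(0)$. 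Equivalently, $g=h_2(\alpha'\star\beta)-h_2(\alpha')$ is the mutual information $I(N;N\oplus M)$ of a Bernoulli$(\beta)$ variable $N$ observed through a $\mathrm{BSC}(\alpha')$; setting $y=0$ gives $\alpha'=\alpha$, $w=x$, and recovers the classical lemma.

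From this form, convexity in $x$ for fixed $y$ is immediate: $w$ is affine in $x$ and $\Psi_{\alpha'}$ is convex, so $g_{xx}\ge 0$. For full joint convexity I would compute the $2\times2$ Hessian in the $(x,y)$ coordinates, differentiating through both the affine dependence $\alpha'(y)=\alpha\star\tfrac{y}{2}$ and the perspective-type map $w(x,y)$, and reduce positive-semidefiniteness to inequalities among the first and second derivatives of $\Psi_c$ in $w$ and in $c$, together with $h_2''<0$. The curvature $\Psi_{\alpha'}''\ge 0$ (the classical lemma) drives $g_{xx}\ge 0$ and part of the mixed term, while the strict concavity of $h_2$ and the monotonicity of $\partial_c\Psi_c$ control the remaining pieces.

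The main obstacle is the cross-term (determinant) condition, and this is exactly where the \emph{difference} structure is indispensable. The first term $\Psi_{\alpha'}(w)$ by itself need not be jointly convex: the inner map $w(x,y)=\tfrac{x-h_2(y)}{1-y}$ has an indefinite Hessian (one checks $w_{xx}=0$ while $w_{xy}\ne 0$), so composing it with a convex increasing function does not yield convexity in $(x,y)$. What rescues convexity is the subtracted baseline $\Psi_{\alpha'}(0)=h_2(\alpha\star\tfrac{y}{2})$, a strictly concave function of $y$, whose concavity must quantitatively dominate the non-convex $y$- and cross-curvature of the first term, in the same spirit as the convexity of a difference of entropies in the Liu--Viswanath refinement of the entropy power inequality \cite{LiuViswanath07}. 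Should the direct Hessian bookkeeping prove unwieldy, I would instead establish midpoint convexity operationally: evaluate the two parameter settings on a time-sharing variable $Q$, apply the classical lemma per block to the $\Psi_{\alpha'}(w)$ term and ``conditioning reduces entropy'' to the baseline $h_2(\alpha')$ term; the concavity slack produced by removing the conditioning on $Q$ in the baseline supplies precisely the compensation needed to close the inequality.
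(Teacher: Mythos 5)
Your reformulation is correct and genuinely nice: since $\frac{y}{2}+(1-y)\beta=\frac{y}{2}\star\beta$, associativity of binary convolution gives $g(x,y)=\Psi_{\alpha'}(w)-\Psi_{\alpha'}(0)$ with $\alpha'=\alpha\star\frac{y}{2}$ and $w=\frac{(x-h_2(y))^+}{1-y}$; this is implicit in the paper's notation, where $s=\alpha\star\left(\frac{y}{2}+(1-y)z\right)$ is exactly $\alpha'\star z$. The trivial region, the $C^1$ gluing along $x=h_2(y)$, and convexity in $x$ for fixed $y$ are all handled correctly. But the entire content of the lemma is the joint convexity, i.e., the cross-term/determinant condition of the Hessian, and your proposal never proves it. The Hessian route is only announced (``I would compute\ldots''), and it is far from routine bookkeeping: the paper's execution of exactly this route needs the function $r(x)=-h_2'(x)/h_2''(x)$, a reduction of the determinant condition to non-negativity of a two-parameter family $t(\alpha,y,z)$, a monotonicity argument in $\alpha$ showing only $\alpha=0$ need be checked, and even then the non-negativity of $t(0,y,z)$ is verified only numerically, by plotting $t(0,y,z)/\left((1-y)^2(1-2z)^4\right)$ (Fig.~\ref{nonneg}). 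There is no soft argument known that closes this step, so deferring it leaves the proof without its core.

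Your operational fallback does not close the gap either; as described, it fails. Realizing the two parameter settings as conditional laws given a fair $Q$, one gets $\frac{1}{2}g(x_1,y_1)+\frac{1}{2}g(x_2,y_2)=H(U|Q)-\frac{1}{2}\sum_i h_2\!\left(\alpha\star\frac{y_i}{2}\right)$. If you spend the baseline's concavity via Jensen, $\frac{1}{2}\sum_i h_2\!\left(\alpha\star\frac{y_i}{2}\right)\leq h_2\!\left(\alpha\star\frac{\bar y}{2}\right)$, you are left needing $H(U|Q)\geq h_2(\alpha\star\bar\gamma)$, where $\bar\gamma$ is the argument recomputed from the averaged pair $(\bar x,\bar y)$ --- that is, joint convexity of the first term $\Psi_{\alpha'}(w)$ alone, which you yourself correctly observe is false. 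If instead you try to keep the slack and show it compensates the first term's convexity deficit, that quantitative comparison \emph{is} the unproven determinant condition, and no per-block use of classical Mrs.\ Gerber supplies it because the blocks carry different crossover parameters $\alpha_i'=\alpha\star\frac{y_i}{2}$. The only natural way to combine the classical lemma across blocks is with the fixed parameter $\alpha$, giving $H(U|Q)\geq h_2\!\left(\alpha\star h_2^{-1}(H(\tilde{Y}|Q))\right)$ and then relating $H(\tilde{Y}|Q)$ to $H(Y|Q)$ via \eqref{relate}; this is precisely the approach in the remark following Theorem~\ref{thm:capbinaryadder}, and it yields a strictly weaker bound (tight only for $C_1+C_2\leq 1.5$ or $C_1+C_2\geq 1.5317$) --- which is exactly why the generalized lemma was needed in the first place.
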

{\begin{theorem}
\label{thm:capbinaryadder}
The upper bound of Theorem \ref{theoremupp} matches the lower bound of Theorem \ref{lower bound}; i.e., the capacity $C^\diamond$ of diamond networks with binary adder MACs and $C_1\leq C_2$ is
\begin{align}
C^\diamond=\max_{0\leq p\leq \frac{1}{2}}\min\left\{\begin{array}{l}C_1+C_2-1+h_2(p)\\C_1+h_2(p)\\h_2(p)+1-p.\end{array}\right.
\end{align}
\end{theorem}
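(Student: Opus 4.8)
The plan is to prove $C^\diamond\ge$ (right-hand side) and $C^\diamond\le$ (right-hand side) separately, the converse being the substantial direction. Achievability is immediate from Theorem~\ref{lower bound}: taking $U$ constant and $(X_1,X_2)$ a doubly symmetric binary source with parameter $p$, one has $I(X_1;X_2)=1-h_2(p)$, $I(X_1;Y|X_2)=I(X_2;Y|X_1)=h_2(p)$, and $I(X_1X_2;Y)=H(Y)=1+h_2(p)-p$, so that \eqref{inner} (with the fourth bound redundant and, since $C_1\le C_2$, the $C_2+h_2(p)$ bound dominated by $C_1+h_2(p)$) reduces to \eqref{eqlowbin1}--\eqref{eqlowbin3}. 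Maximizing over $p\in[0,\tfrac12]$ reproduces the stated expression.

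For the converse I would evaluate Theorem~\ref{theoremupp} with the symmetric auxiliary channel $p(u|y)$ of Fig.~\ref{channelu}, a legitimate choice since \eqref{21} minimizes over $p(u|y)$; the parameter $\alpha$ is fixed after the input is chosen and before the adversarial $Q$. By the invariance of the binary adder channel and of this $p(u|y)$ under $X_1\leftrightarrow X_2$ and under global complementation, together with the concavity of the first four terms (Remark~\ref{concave}), the outer maximization may be restricted to a doubly symmetric marginal with parameter $p$, extending Lemma~\ref{lemmadoubly}. The second and fourth terms of \eqref{21} are then controlled by conditioning, $I(X_2;Y|X_1Q)\le I(X_2;Y|X_1)=h_2(p)$ and $I(X_1X_2;Y|Q)=H(Y|Q)=:\bar x\le H(Y)=1+h_2(p)-p$, so that $Q$ can only tighten them.

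The crux is the maximization over $Q$ in the last term of \eqref{21}, written via \eqref{diffinf} as $C_1+C_2-I(X_1;X_2|Q)+I(X_1;X_2|UQ)$. Indexing each realization of $Q$ by the pair $(x_q,y_q)=\bigl(H(Y|Q{=}q),\,\Pr(X_1\ne X_2|Q{=}q)\bigr)$, a direct computation with the channel of Fig.~\ref{channelu} shows that the per-component quantity $-I(X_1;X_2|Q{=}q)+I(X_1;X_2|UQ{=}q)$ equals $-g(x_q,y_q)+c(y_q)$, where $g$ is the function \eqref{fxy} of Lemma~\ref{GMGL} and $c(y)=-y+h_2\!\left(\alpha\star\tfrac{y}{2}\right)-(1-y)h_2(\alpha)$; for fixed $y_q$ the output entropy $x_q$ ranges over $\bigl[h_2(y_q),\,1+h_2(y_q)-y_q\bigr]$, the upper end being the doubly symmetric component. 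Since $c$ is concave in $y$ and $g$ is \emph{jointly convex} in $(x,y)$ by Lemma~\ref{GMGL}, the per-component penalty is jointly concave, and Jensen's inequality collapses the time-sharing: $-I(X_1;X_2|Q)+I(X_1;X_2|UQ)\le -g(\bar x,p)+c(p)$ with $\bar x=H(Y|Q)$. The adversary's only leverage lies in non-symmetric components with $x_q<1+h_2(y_q)-y_q$, which is exactly why the $y=0$ slice (classical Mrs.\ Gerber's Lemma) does not suffice and the full two-variable convexity of a \emph{difference} of entropies is required.

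What remains is a single-letter problem: choosing $\alpha$ to meet the Markov condition \eqref{choicea} for the marginal $p$, the converse reads $R\le\min\{\bar x,\ C_1+h_2(p),\ C_1+C_2-g(\bar x,p)+c(p)\}$, to be maximized over $\bar x\in[h_2(p),\,1+h_2(p)-p]$ and then over $p$. I would carry this out regime by regime: independent transmission for small $C_1,C_2$ (where $C_1+C_2$ itself binds), full cooperation giving $\log 3$ for large $C_1,C_2$ (where $\bar x\le H(Y)\le\log 3$ binds), and the intermediate regime. The key observation is that at the maximizing $p$ the fourth and last terms coincide at $\bar x=1+h_2(p)-p$ precisely when $C_1+C_2=2-p$, where no time-sharing is active and the bound equals $C_1+C_2-1+h_2(p)$, matching the lower bound; for every other $p$ the time-shared value is strictly smaller, so the outer maximum reproduces the capacity expression. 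I expect the main obstacle to be twofold: first, proving the joint convexity in Lemma~\ref{GMGL} (deferred to Appendix~\ref{concavity lemma}), since convexity of a difference of entropies is delicate and fails for individual entropies beyond the classical case; and second, verifying that the single-letter optimization above meets the lower bound across all regimes of $C_1\le C_2$, which is the substance of the theorem.
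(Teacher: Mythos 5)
Your proposal follows essentially the same route as the paper's proof: the same auxiliary channel of Fig.~\ref{channelu}, the same reduction of the last term of \eqref{21} to per-component coordinates $\bigl(H(Y|Q{=}q),\Pr(X_1\ne X_2|Q{=}q)\bigr)$ bounded by $-g(x_q,y_q)+c(y_q)$, the same application of Lemma~\ref{GMGL} together with concavity to collapse the time-sharing variable via Jensen, and the same final matching with $\alpha$ chosen by the Markov condition at the crossing point $C_1+C_2=2-p$. The regime-by-regime verification you defer is exactly what the paper carries out in Appendix~\ref{last theorem} (showing $g_5(C_1,C_2,q,\alpha_\eta)$ is non-decreasing in $q$ for $q\le\eta$, so the crossing point is the maximizer), so your plan is correct and essentially identical to the paper's argument.
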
}
\begin{proof}
\begin{figure}[t!]
\centering
\begin{tikzpicture}[scale=.8]
\tikzstyle{every node}=[draw,shape=circle];

\path (-.5,4.5) node[draw=none] () {$Y$};
\path (6.5,4.5) node[draw=none] () {$U$};
\path (3.5,4.5) node[draw=none] () {$\tilde{Y}$};

\coordinate (y0) at (0,0);
 \fill (y0) circle (4pt);
 \coordinate (y1) at (0,1.875);
 \fill (y1) circle (4pt);
 \coordinate (y2) at (0,3.75);
 \fill (y2) circle (4pt);

\path (-.5,0) node[draw=none]() {$2$};
\path (-.5,1.875) node[draw=none] () {$1$};
%\path (-.5,2.5)  node[draw=none] () {$01$};
\path (-.5,3.75) node[draw=none] () {$0$};

 \coordinate (yt0) at (3,1);
 \fill (yt0) circle (4pt);
 \coordinate (yt1) at (3,2.5);
 \fill (yt1) circle (4pt);
 \path (3.5,.5)  node[draw=none] () {$1$};
\path (3.5,3) node[draw=none] () {$0$};

 \coordinate (u0) at (6,1);
 \fill (u0) circle (4pt);
 \coordinate (u1) at (6,2.5);
 \fill (u1) circle (4pt);
 \path (6.5,.5)  node[draw=none] () {$1$};
\path (6.5,3) node[draw=none] () {$0$};

%\draw[->]
  %  (v0) -- (v1) ;
\draw[->] (y0) --node[left,draw=none,xshift=.4cm,yshift=-.4cm]{$1$} (yt0);
%\draw[->] (y0) --node[left,draw=none,xshift=0cm,yshift=-.4cm]{$$} (yt1);
\draw[->] (y1) --node[left,draw=none,xshift=-.3cm,yshift=-0.15cm]{$\frac{1}{2}$} (yt0);
\draw[->] (y1) --node[left,draw=none,xshift=-.3cm,yshift=0.15cm]{$\frac{1}{2}$} (yt1);
%\draw[->] (x10) --node[left,draw=none,xshift=-.5cm,yshift=.2cm]{$\frac{1}{2}$} (u0);
%\draw[->] (x10) --node[left,draw=none,xshift=0.1cm,yshift=-.2cm]{$\frac{1}{2}$} (u1);
\draw[->] (y2) --node[left,draw=none,xshift=.4cm,yshift=.4cm]{$1$} (yt1);
%\draw[->] (y2) --node[left,draw=none,xshift=0cm,yshift=.45cm]{$$} (yt0);
\draw[->] (yt0) --node[left,draw=none,below,xshift=0cm,yshift=.4cm]{$1-\alpha$} (u0);
\draw[->] (yt0) --node[left,draw=none,above,xshift=-.4cm,yshift=0.05cm]{$\alpha$} (u1);
\draw[->] (yt1) --node[left,draw=none,below,xshift=-.4cm,yshift=-0.05cm]{$\alpha$} (u0);
\draw[->] (yt1) --node[left,draw=none,above,xshift=0cm,yshift=-.4cm]{$1-\alpha$} (u1);
\end{tikzpicture}
\caption{The auxiliary channel $p(u|y)$ as the cascade of $p(\tilde{y}|y)$ and $p(u|\tilde{y})$ for the binary adder MAC.}
\label{channeluyty}
\end{figure}

We again use the auxiliary channel $p(u|y)$ depicted in Fig. \ref{channelu}. This channel may  be viewed as the cascade of the channels  $p(\tilde{y}|y)$ and $p(u|\tilde{y})$ shown in Fig. \ref{channeluyty}, where $p(u|\tilde{y})$ is a BSC with cross over probability $\alpha$.
Define $p_i,q_i$, $i\in\mathcal{Q}$, and $q$ by
\begin{align}
&p_i=p_{Y|Q}(0|i)\\
&q_i=p_{Y|Q}(1|i)\\
&q=p_{Y}(1).\label{qpy1}
\end{align} 
The first four terms of \eqref{21} may be loosened by dropping the time sharing random variable $Q$. We use the symmetry and concavity of those terms in $p(x_1,x_2)$ to write 
\begin{align}
R&\leq C_1+C_2\label{cap1}\\
R&\leq C_1+h_2(q)\\
R&\leq h_2(q)+1-q.\label{cap3}
\end{align}
%
%
%
%
%\begin{align}
%H(U|X_1)&=p_{X_1}(0)H(U|X_1=0)+p_{X_1}(1)H(U|X_1=1)\\
%&=p_{X_1}(0)h_2\left(p_{X_2|X_1}(0|0)(1-\alpha)+p_{X_2|X_1}(1|0)\frac{1}{2}\right)+p_{X_1}(1)h_2\left(p_{X_2|X_1}(1|1)(1-\alpha)+p_{X_2|X_1}(0|1)\frac{1}{2}\right)\\
%&\stackrel{(a)}{\leq} h_2\left((1-q)(1-\alpha)+\frac{1}{2}q\right)\label{fullcap1}
%\end{align}
%where \eqref{fullcap1} follows by Jensen's inequality. Similarly, we have
%\begin{align}
%H(U|X_1)&\leq  h_2\left((1-q)(1-\alpha)+\frac{1}{2}q\right)\label{fullcap2}\\
%H(U|X_1X_2)&=(1-q) h_2(\alpha)+q.\label{fullcap3}\\
%H(Y|\tilde{Y})&\leq h_2(q)
%\end{align}

It remains to bound the last term of \eqref{21}:
\begin{align}
R\leq& C_1+C_2-I(X_1X_2;U|Q)+I(X_2;U|X_1Q)+I(X_1;U|X_2Q)\nonumber\\
=& C_1+C_2-H(U|Q)-H(U|X_1X_2)+H(U|X_1Q)+H(U|X_2Q).\label{rewrote}
\end{align}
We optimize the RHS of \eqref{rewrote} under the constraint
\begin{align}
R\leq I(X_1X_2;Y|Q)=H(Y|Q)\label{cons}
\end{align}
that is imposed by the fourth term of \eqref{21}.
We have $H(U|X_1X2)=(1-q)h_2(\alpha)+q$ and can upper bound both $H(U|X_1Q=i)$ and $H(U|X_1Q=i)$ by $$h_2\left(\alpha\star\frac{q_i}{2}\right)$$ by the concavity of $h_2(.)$ and  symmetry of $p_{U|Y}$ (see Appendix \ref{HUcX1Q}). But how to bound \eqref{rewrote} from above is not obvious because $H(U|Q)$  appears with a negative sign. We start with
 \begin{align}
&H(U|Q=i)=h_2\left(\alpha\star\left(\frac{q_i}{2}+p_i\right)\right)\label{tosimplofy}\\
&H(Y|Q=i)=h_2(q_i)+(1-q_i)h_2\left(\frac{p_i}{1-q_i}\right).\label{sym2}
 \end{align}
Note that both \eqref{tosimplofy} and \eqref{sym2} are symmetric in $p_i$ with respect to $\frac{1-q_i}{2}$. We may therefore choose $p_i\leq \frac{1-q_i}{2}$, find $p_i$ from \eqref{sym2} and insert it into $\eqref{tosimplofy}$ to obtain
\begin{align}
H(U|Q=i)=h_2\left(\alpha\star\left(\frac{q_i}{2}+(1-q_i)h_2^{-1}\left(\frac{H(Y|Q=i)-h_2(q_i)}{1-q_i}\right)\right)\right).
\end{align}

Combining the above bounds and inserting in \eqref{rewrote}, we have
\begin{align}
 R\leq&  C_1+C_2+\sum_{i=1}^{|\mathcal{Q}|}p_Q(i)\left(-h_2\!\left(\alpha\star\!\left(\frac{q_i}{2}+(1-q_i)h_2^{-1}\!\left(\frac{H(Y|Q\!=\!i)\!-\!h_2(q_i)}{1-q_i}\right)\!\right)\right)-(1-q_i) h_2(\alpha)-q_i+2h_2\left(\alpha\star\frac{q_i}{2}\right)\right)\nonumber\\
\stackrel{(a)}{\leq}&C_1+C_2-h_2\left(\alpha\star\left(\frac{q}{2}+(1-q)h_2^{-1}\left(\frac{\left(H(Y|Q)-h_2(q)\right)^+}{1-q}\right)\right)\right)-(1-q) h_2(\alpha)-q+2h_2\left(\alpha\star\frac{q}{2}\right)\nonumber\\
%=&C_1+C_2-h_2\left(\alpha\star\left(\frac{q}{2}+(1-q)h_2^{-1}\left(\frac{\left(I(X_1X_2;Y|Q)-h_2(q)\right)^+}{1-q}\right)\right)\right)-(1-q) h_2(\alpha)-q+2h_2\left(\alpha\star\frac{q}{2}\right)\nonumber\\
\stackrel{(b)}{\leq}&C_1+C_2-h_2\left(\alpha\star\left(\frac{q}{2}+(1-q)h_2^{-1}\left(\min\left(1,\frac{\left(R-h_2(q)\right)^+}{1-q}\right)\right)\right)\right)-(1-q) h_2(\alpha)-q+2h_2\left(\alpha\star\frac{q}{2}\right)\label{cap4}
 \end{align}
 where $(a)$ follows by concavity of the binary entropy function and Lemma \ref{GMGL}, and  $(b)$ follows from \eqref{cons} because $h_2(\alpha\star\left(\frac{q}{2}+(1-q)h_2^{-1}(x)\right))$ is non-decreasing in $x$ for $\alpha\leq \frac{1}{2}$.
%
% \begin{align}
% R&\leq C_1+C_2-I(U;X_1X_2|Q)+I(X_1;U|X_1Q)+I(X_1;U|X_2Q)\nonumber\\
% &=C_1+C_2-H(U|Q)-H(U|X_1X_2)+H(U|X_1Q)+H(U|X_2Q)\nonumber\\
% &\leq C_1+C_2-H(U|Q)-H(U|X_1X_2)+H(U|X_1)+H(U|X_2)\nonumber\\
%  &= C_1+C_2-H(U|Q)-(1-q) h_2(\alpha)-q+H(U|X_1)+H(U|X_2)\nonumber\\
% &\stackrel{(a)}{\leq} C_1+C_2-H(U|Q)-(1-q) h_2(\alpha)-q+2h_2\left((1-q)(1-\alpha)+\frac{1}{2}q\right)\nonumber\\
% &\stackrel{(b)}{\leq}  C_1+C_2-h_2\left(\alpha\star h_2^{-1}\left(H(\tilde{Y}|Q)\right)\right)-(1-q) h_2(\alpha)-q+2h_2\left((1-q)(1-\alpha)+\frac{1}{2}q\right)\label{lastmrs}
% \end{align}
%where $(a)$ follows by the symmetry and concavity of $H(U|X_1)$ and $H(U|X_2)$ in $p(x_1,x_2)$, and $(b)$ follows by Mrs. Gerber's lemma.  Here the $\star$ operator is defined by $\alpha\star\beta=\alpha(1-\beta)+\beta(1-\alpha)$. We next relate $H(\tilde{Y}|Q)$ to $R$:
% \begin{align}
% R&\leq I(X_1X_2;Y|Q)\nonumber\\
% &= I(X_1X_2;Y\tilde{Y}|Q)\nonumber\\
%&= H(\tilde{Y}|Q)+H(Y|\tilde{Y}Q)-H(\tilde{Y}|X_1X_2)\nonumber\\
%&\leq H(\tilde{Y}|Q)+h_2(q)-q.\label{relate}
% \end{align}
%Using \eqref{lastmrs} and \eqref{relate} we obtain
%\begin{align}
% R&\leq C_1+C_2-h_2\left(\alpha\star h_2^{-1}\left(\left(R-h_2(q)+q\right)^+\right)\right)-(1-q) h_2(\alpha)-q+2h_2\left((1-q)(1-\alpha)+\frac{1}{2}q\right).\label{cap4}
% \end{align}
%\begin{remark}
%The proof steps above are very similar to  \eqref{87cont}-\eqref{21N}. This is due to the duality between Mrs. Gerber's lemma for binary random variables and the entropy power inequality for continuous random variables \cite{ShamaiWyner90}.
%\end{remark}
{Choosing $\alpha$ appropriately, we show in Appendix \ref{last theorem} that the upper bound in  \eqref{cap1}-\eqref{cap3} and \eqref{cap4} matches the lower bound in \eqref{eqlowbin1}-\eqref{eqlowbin3}.} %We thus find the capacity of diamond networks with binary adder MACs.}

\end{proof}
\begin{remark}
One may use Mrs. Gerber's lemma \cite{WynerZiv73} to obtain 
\begin{align}
H(U|Q)\geq& h_2\left(\alpha\star h_2^{-1}\left(H(\tilde{Y}|Q)\right)\right)\nonumber\\
\geq &h_2\left(\alpha\star h_2^{-1}\left(\min\left(1,\left(R-h_2(q)+q\right)^+\right)\right)\right)
\end{align}
where  the second inequality follows by
 \begin{align}
 R&\leq I(X_1X_2;Y|Q)\nonumber\\
 &= I(X_1X_2;Y\tilde{Y}|Q)\nonumber\\
&= H(\tilde{Y}|Q)+H(Y|\tilde{Y}Q)-H(\tilde{Y}|X_1X_2)\nonumber\\
&\leq H(\tilde{Y}|Q)+h_2(q)-q\label{relate}
 \end{align}
and the monotonicity of $h_2(\alpha\star h_2^{-1}(x))$ in $x$. This approach gives an upper bound that is tight when $C_1+C_2\geq 1.5317$ (and when $C_1+C_2\leq 1.5$). The range of symmetric bit-pipe capacities $C$ for which Mrs. Gerber's lemma is tight is shown in Fig. \ref{figplotbinaryadder}. In fact, Mrs. Gerber's lemma is within less than $10^{-3}$ bits of capacity for all $C$ in Fig.~\ref{figplotbinaryadder}.
\end{remark}
\section{Conclusion}
We studied diamond networks with an orthogonal broadcast channel and found new upper and lower bounds on their capacities. 
The lower bound is based on Marton's coding technique and superposition coding. We showed through an example with a Gaussian MAC that the new lower bound strictly improves the previous bounds in \cite{TraskovKramer07,KangLiu11,KangLiuChong15}. The proof technique for the upper bound generalizes bounding techniques of Ozarow \cite{Ozarow80} and Kang and Liu \cite{KangLiu11} and applies to discrete memoryless MACs. We specialized the upper bound for networks with a Gaussian MAC and a binary adder MAC. We strengthened the  results of Kang and Liu \cite{KangLiu11,KangLiuChong15} for Gaussian MACs and found  the capacity for binary adder MACs.

\appendices
\section{Error probability analysis for Section \ref{secach}}
\label{aperroranalysis}
We give a proof for discrete alphabet MACs. A proof for the AWGN MAC follows in the usual way be quantizing alphabets and taking limits \cite[Page 50]{ElGamalKim}. We calculate the average error probability $P_e$, averaged over the codebook and the message set,  and show that $P_e$ approaches zero, as $n$ gets large,  if we have
\begin{align}
&R_1^\prime+R_2^\prime>I(X_1;X_2|U)\label{c1c2}\\
& R_{12}+R_1+R_1^\prime< C_1\label{C1}\\
& R_{12}+R_2+R_2^\prime< C_2\label{C2}\\
& R_{12}+R_1+R_1^\prime+R_2+R_2^\prime< I(X_1X_2;Y)+I(X_1;X_2|U)\label{R}\\
& R_1+R_1^\prime+R_2+R_2^\prime<I(X_1X_2;Y|U)+I(X_1;X_2|U)\label{R1+R2}\\
&R_2+R_2^\prime< I(X_2;Y|X_1,U)+I(X_1;X_2|U)\label{R2}\\
&R_1+R_1^\prime< I(X_1;Y|X_2,U)+I(X_1;X_2|U).\label{R1}
\end{align}
Conditions  \eqref{c1c2}-\eqref{R1}, together with $R_1^\prime,R_2^\prime,R_1,R_2,R_{12}\geq 0$ and the rate-splitting condition in \eqref{ratesplit}, characterize an achievable rate.
By the symmetry of the codebook construction and the encoding/decoding scheme, we have\begin{align}
P_e&=\Pr\left((M_{12},M_1,M_2)\neq(\hat{M}_{12},\hat{M}_1,\hat{M}_2)\right)\nonumber\\
&=\Pr\left((M_{12},M_1,M_2)\neq(\hat{M}_{12},\hat{M}_1,\hat{M}_2)|(M_{12},M_1,M_2)=(1,1,1)\right).
\end{align}
Conditioned on $(M_{12},M_1,M_2)=(1,1,1)$, an error occurs only if one of the following events occurs:
\begin{itemize}
\item  $\mathcal{E}_1$: There is no index pair $(m^\prime_1,m^\prime_2)$ such that $(U^n(1),X^n_1(1,1,\hat{m}^\prime_1),X^n_2(1,1,\hat{m}^\prime_2),Y^n)\in\mathcal{T}^n_\epsilon$. 
\item $\mathcal{E}_2$: There are $\tilde{m}_{12},\tilde{m}_1,\tilde{m}_2,\tilde{m}^\prime_1, \tilde{m}^\prime_2$ such that $(\tilde{m}_{12},\tilde{m}_1,\tilde{m}_2)\neq (1,1,1)$ and  $$(U^n(\tilde{m}_{12}),X^n_1(\tilde{m}_{12},\tilde{m}_1,\tilde{m}^\prime_1),X^n_2(\tilde{m}_{12},\tilde{m}_2,\tilde{m}^\prime_2),Y^n)\in\mathcal{T}^n_\epsilon.$$
\end{itemize}
We  have
\begin{align}
P_e&\leq \Pr(\mathcal{E}_1|(M_{12},M_1,M_2)=(1,1,1))+\Pr(\mathcal{E}_2|(M_{12},M_1,M_2)=(1,1,1))\nonumber\\
&= \Pr(\mathcal{E}_1)+\Pr(\mathcal{E}_2).
\end{align}
Using the  Mutual Covering lemma \cite[Lemma 8.1]{ElGamalKim}, $\Pr(\mathcal{E}_1)$ can be made  small for large  $n$ if  \eqref{c1c2} is satisfied.
To analyze $\Pr(\mathcal{E}_2)$, consider the following partition of $\mathcal{E}_2$:
\begin{itemize}
\item  $\tilde{m}_{12}\neq1$
\item  $\tilde{m}_{12}=1$, $\tilde{m}_{1}\neq1$, $\tilde{m}_{2}\neq1$
\item  $\tilde{m}_{12}=1$, $\tilde{m}_{1}=1$, $\tilde{m}_{2}\neq1$
\item  $\tilde{m}_{12}=1$, $\tilde{m}_{1}\neq1$, $\tilde{m}_{2}=1$.
\end{itemize}

The first case has a small error probability, for large $n$, if \eqref{R} is satisfied. Similarly, the second, third, fourth cases have  small error probabilities, for large $n$, if \eqref{R1+R2}, \eqref{R2}, \eqref{R1} are satisfied, respectively. 
We illustrate the analysis for the second case here:
\begin{align}
&\Pr\left(\bigcup_{\tilde{m}_{1}\neq1,\tilde{m}_{2}\neq1,\tilde{m}^\prime_{1},\tilde{m}^\prime_{2}} (U^n(1),X^n_1(1,\tilde{m}_1,\tilde{m}^\prime_1),X^n_2(1,\tilde{m}_2,\tilde{m}^\prime_2),Y^n)\in\mathcal{T}^n_\epsilon\right)\nonumber\\
&\leq\quad \sum_{\tilde{m}_{1}\neq1,\tilde{m}_{2}\neq1,\tilde{m}^\prime_{1},\tilde{m}^\prime_{2}}\Pr\left((U^n(1),X^n_1(1,\tilde{m}_1,\tilde{m}^\prime_1),X^n_2(1,\tilde{m}_2,\tilde{m}^\prime_2),Y^n)\in\mathcal{T}^n_\epsilon\right)\nonumber\\
&\leq\quad 2^{n(R_1+R_1^\prime+R_2+R_2^\prime)}\sum_{(u^n,x_1^n,x_2^n,y^n)\in\mathcal{T}_\epsilon} p(u^n)p(x_1^n|u^n)p(x_2^n|u^n)p(y^n|u^n)\nonumber\\
&\leq\quad 2^{n(R_1+R_1^\prime+R_2+R_2^\prime)}2^{n(H(UX_1X_2Y)+\epsilon H(UX_1X_2Y))}\nonumber\\&\quad\times2^{-n(H(U)-\epsilon H(U))}2^{-n(H(X_1|U)-\epsilon H(X_1|U))}2^{-n(H(X_2|U)-\epsilon H(X_2|U))}2^{-n(H(Y|U)-\epsilon H(Y|U))}\nonumber\\
&=\quad 2^{n(R_1+R_1^\prime+R_2+R_2^\prime)}2^{-n(I(X_1X_2;Y|U)+I(X_1;X_2|U)-\delta(\epsilon))}\label{pipi}
\end{align}
where $\delta(\epsilon)\to 0$ as $\epsilon\to 0$. For rates that satisfy \eqref{R1+R2}, the RHS in \eqref{pipi} approaches zero as $n$ grows large.
\section{Concavity in $p(x_1,x_2)$}
\label{app1}

Consider \eqref{thmupp15} and
\begin{align}
I(X_1;Y|X_2)=H(Y|X_2)-H(Y|X_1X_2).
\end{align}  
$H(Y|X_2)$ is a concave function of $p(x_2,y)$ which is  a linear function of $p(x_1,x_2)$.
$H(Y|X_1X_2)$ is a linear function of $p(x_1,x_2)$.  So $I(X_1;Y|X_2)$ is concave in $p(x_1,x_2)$. 
A similar result holds for $I(X_2;Y|X_1)$ and $I(X_1,X_2;Y)$ when $p(y|x_1,x_2)$ is fixed.

Finally,  consider the last RHS term in \eqref{thmupp15} when $p(y|x_1,x_2)$ and $p(u|x_1,x_2,y)$ are fixed. We have 
\begin{align}
&\hspace{-1.25cm}I(X_1X_2;Y|U)+I(X_1;U|X_2)+I(X_2;U|X_1)\label{app1eq}\\&\hspace{-1.25cm}=H(Y|U) - H(YU|X_1X_2) + H(U|X_2) + I(X_2;U|X_1).\nonumber
\end{align} 
$H(Y|U)$ is concave in $p(u,y)$, $H(YU|X_1X_2)$ is linear in $p(x_1,x_2)$, $H(U|X_2)$ is concave in $p(u,x_2)$ and $I(X_2;U|X_2)$ is concave in $p(x_1,x_2)$. Since $p(u,y)$ and $p(u,x_2)$ are both linear in $p(x_1,x_2)$, \eqref{app1eq} is concave in $p(x_1,x_2)$.
%: $p(u,y)=\sum_{x_1x_2}p(x_1,x_2)p(u,y|x_1x_2)$,  $p(u,x_2)=\sum_{x_1}p(x_1,x_2)p(u|x_1,x_2)$,  and $p(u,x_1)=\sum_{x_2}p(x_1,x_2)p(u|x_1,x_2)$ . 
%Putting these together, .

%\section{Factoring of $p_{X_{1I}X_{2I}Y_IU_IQ}$ in \eqref{pfactors}}
%\label{factors}
%The factoring of $p_{X_{1I}X_{2I}Y_IU_IQ}$ in \eqref{pfactors} may be seen as follows. Recall that $Q=(U^{I-1},I)$. We have 
%\begin{align}
%p_{X_{1I}X_{2I}Y_IU_IQ}(x_1,x_2,y,u,q)&=p_Q(q)p_{X_{1I}X_{2I}|Q}(x_{1},x_{2}|q)p_{Y_I|X_{1I}X_{2I}Q}(y|x_{1},x_{2},q)p_{U_I|Y_IX_{1I}X_{2I}Q}(u|y,x_1,x_2,q)\\
%%&=p_Q(q)p_{X_{1I}X_{2I}|Q}(x_{1},x_{2}|q)p_{Y_i|X_{1i}X_{2i}Q}(y|x_{1},x_{2},u^{i-1},i)p_{U_i|Y_iX_{1i}X_{2i}Q}(u|y,x_1,x_2,u^{i-1},i)\\
%&=p_Q(q)p_{X_{1i}X_{2i}|Q}(x_{1},x_{2}|q)p_{Y|X_{1}X_{2}}(y|x_{1},x_{2})p_{U|Y}(u|y)\\
%&=p_Q(q)p_{X_{1I}X_{2I}|Q}(x_{1},x_{2}|q)p_{Y_I|X_{1I}X_{2I}}(y|x_{1},x_{2})p_{U_I|Y_I}(u|y)\\
%&=p_{X_{1I}X_{2I}}(x_{1},x_{2})p_{Q|X_{1I}X_{2I}}(q|x_1x_2)p_{Y_I|X_{1I}X_{2I}}(y|x_{1},x_{2})p_{U_I|Y_I}(u|y)
%\end{align}
%Also, we have 
%\begin{align}
%p_{X_{1I}X_{2I}}(x_{1},x_{2})&=\frac{1}{n}\sum_{i=1}^np_{X_{1i}X_{2i}}(x_{1},x_{2})
%\end{align}
%and
%\begin{align}
%p_{Q|X_{1I}X_{2I}}(q|x_{1},x_{2})&=\frac{p_{QX_{1I}X_{2I}}(q,x_{1},x_{2})}{p_{X_{1I}X_{2I}}(x_{1},x_{2})}\\
%&=\frac{p_{U^{I-1}IX_{1I}X_{2I}}(u^{i-1},i,x_{1},x_{2})}{p_{X_{1I}X_{2I}}(x_{1},x_{2})}.
%\end{align}

\section{Cardinality bound for Theorem \ref{theoremupp}}
\label{cardin1}
We follow the line of argument in \cite[Appendix~B]{cribbing85}. Denote by $\mathcal{P}$ the set of all probability vectors $p(x_1,x_2)$ and let $P$ be an element of $\mathcal{P}$. Suppose that $R$ is such that for a certain distribution $p_0(x_1,x_2,u,y,q)=p_0(x_1,x_2)p^\star(y|x_1,x_2)p^\star(u|y)p_0(q|x_1x_2)$ the following inequalities hold:
\begin{align}
&R\leq C_1+I_0(X_2;Y|X_1Q),\\
&R\leq C_2+I_0(X_1;Y|X_2Q),\\
&R\leq I_0(X_1X_2;Y|Q),\\
&R\leq C_1+C_2-I_0(X_1X_2;U|Q)+I_0(X_2;U|X_1Q)+I_0(X_1;U|X_2Q)
\end{align}
In the above inequalities, the index $0$ on the mutual information terms emphasizes that the mutual information is evaluated for $p_0(x_1,x_2,u,y,q)$. 
We now interpret $p_0(x_1,x_2|q)$ for every $q\in\mathcal{Q}$ as an element $P^q_0$ of $\mathcal{P}$ with a corresponding probability  $p_0(q)$.
Consider the following continuous functions that map an element of $\mathcal{P}$ into an element of $\mathbb{R}$.
\begin{align}
&f_{x_1,x_2}(P)=\Pr_{\quad P}\{X_1=x_1,X_2=x_2\},\quad \forall (x_1,x_2)\in\mathcal{X}_1\times \mathcal{X}_2 \text{ except one}\\
&f_I(P)=I_P(X_2;Y|X_1),\label{fix1}\\
&f_{II}(P)= I_P(X_1;Y|X_2),\\
&f_{III}(P)= I_P(X_1X_2;Y),\\
&f_{IV}(P)= -I_P(X_1X_2;U)+I_P(X_2;U|X_1)+I_P(X_1;U|X_2)\label{fix4}
\end{align}

We are interested in the following terms.
\begin{align}
&p_0(x_1,x_2)=\sum_{q\in\mathcal{Q}}f_{x_1,x_2}(P^q_0)p_0(q),\quad \forall (x_1,x_2)\in\mathcal{X}_1\times \mathcal{X}_2 \text{ except one} \label{convexcomterm1}\\
&I_0(X_2;Y|X_1Q)=\sum_{q\in\mathcal{Q}}f_{I}(P^q_0)p_0(q),\\
&I_0(X_1;Y|X_2Q)=\sum_{q\in\mathcal{Q}}f_{II}(P^q_0)p_0(q),\\
&I_0(X_1X_2;Y|Q)= \sum_{q\in\mathcal{Q}}f_{III}(P^q_0)p_0(q),\\
&-I_0(X_1X_2;U|Q)+I_0(X_2;U|X_1Q)+I_0(X_1;U|X_2Q)=\sum_{q\in\mathcal{Q}}f_{IV}(P^q_0)p_0(q)\label{convexcomtermlast}
\end{align}
The Fenchel-Eggleston-Carath\'{e}odory theorem \cite[Appendix A]{ElGamalKim}  ensures that there are $|\mathcal{X}_1||\mathcal{X}_2|+3$ vectors $P_k\in\mathcal{P}$, $k=1,\ldots,|\mathcal{X}_1||\mathcal{X}_2|+3$, whose convex combination gives \eqref{convexcomterm1}-\eqref{convexcomtermlast}.
In other words, we may restrict attention to $|\mathcal{Q}|\leq |\mathcal{X}_1||\mathcal{X}_2|+3$.
%The LHS mutual information terms above are evaluated for $p(q^\prime,x_1,x_2,u,y)=p(q^\prime)p(x_1,x_2|q^\prime)p^\star(y|x_1,x_2)p^\star(u|y)$.

\begin{remark}
We need to keep $p(x_1,x_2)$ fixed because $p^\star(u|y)$ can be a function of $p(x_1,x_2)$ and we don't want to change $p^\star(u|y)$.
\end{remark}

\begin{remark}
\label{cardin2}
If  $p^\star(u|y)$ is fixed (and not a function of $p(x_1,x_2)$), then it suffices to have $|\mathcal{Q}|\leq 4$ because we must fix only \eqref{fix1}-\eqref{fix4}.
\end{remark}

\section{Theorem \ref{theoremupp} is tighter than Theorem \ref{thmupp} }
\label{tighter}
Let $R$ be less than or equal to the upper bound of Theorem \ref{theoremupp}. Therefore, there is a $p(x_1,x_2)$ for which for all $p(u|x_1,x_2,y)=p(u|y)$ there is a $p(q|x_1,x_2,y,u)=p(q|x_1,x_2)$ such that the constraints in \eqref{21} hold.
Combining the  last two bounds in  \eqref{21}, we obtain
\begin{align}
2R&\leq C_1+C_2+I(X_1X_2;Y|Q)-I(X_1X_2;U|Q)+I(X_1;U|X_2Q)+I(X_2;U|X_1Q)\nonumber\\
%&= C_1+C_2+I(X_1X_2;YU|Q)-I(X_1X_2;U|Q)+I(X_1;U|X_2Q)+I(X_2;U|X_1Q)\nonumber\\
&= C_1+C_2+I(X_1X_2;Y|UQ)+I(X_1;U|X_2Q)+I(X_2;U|X_1Q)\nonumber\\
&\leq C_1+C_2+I(X_1X_2;Y|U)+I(X_1;U|X_2)+I(X_2;U|X_1).\label{concavenewbound}
\end{align}
Furthermore, we have
\begin{align}
&I(X_2;Y|X_1Q)\leq I(X_2;Y|X_1)\\
&I(X_1;Y|X_2Q)\leq I(X_1;Y|X_2)\\
&I(X_1X_2;Y|Q)\leq I(X_1X_2;Y).
\end{align}
%Therefore, $R$ is also less than or equal to the upper bound of Theorem \ref{thmupp}.

\section{Proof of Theorem \ref{gaussmatch}}
\label{messyproof}
Consider first the regime  $$C\leq\frac{1}{4}\log(1+2P).$$ The lower bound of Theorem \ref{lower bound} meets the upper bound of Corollary \ref{uppGauss} for $U=\phi$, $X_1\sim\mathcal{N}(0,P)$, $X_2\sim\mathcal{N}(0,P)$ , and $X_1$ independent of $X_2$. Consider next $$C\geq\frac{1}{2}\log(1+4P).$$The lower bound of Theorem \ref{lower bound} meets the upper bound of Corollary \ref{uppGauss} for $X_1=X_2=U\sim\mathcal{N}(0,P)$. 

The more interesting regime of $C$ is given in \eqref{thmC}. Consider the upper bound in Corollary \ref{uppGauss} and define the  functions
%Set $C_1=C_2=C$ and $P_1=P_2=P$. Define 
\begin{align}
\begin{array}{lll}
&f_1(C)=2C\\
&f_2(C,\rho)=C+\frac{1}{2}\log\left(1+P(1-\rho^2)\right)\\
&f_3(\rho)=\frac{1}{2}\log\left(1+2P(1+\rho)\right)\\
&f_4(C,\rho)=\frac{1}{2}\left(2C+\frac{1}{2}\log \left(1+2P(1+\rho)\right)-\frac{1}{2}\log\left(\frac{1}{1-\rho^2}\right)\right)\\
&f^\prime_4(C,\rho)=\left\{\begin{array}{ll}f_4(C,\rho)&\rho\leq \rho^{(2)}\\f_2(C,\rho)&\rho>\rho^{(2)}.\end{array}\right.\end{array}\label{f4}
\end{align}
The functions $f_1(C)$, $f_2(C,\rho)$, $f_3(\rho)$, $f^\prime_4(C,\rho)$ are plotted in Fig.~\ref{poss} for different values of $C$  (where $C$ increases from Fig.~\ref{poss3} to Fig.~\ref{poss5}). 
\begin{remark}
 For  symmetric diamond networks, one can check that $f_4(C,\rho)\leq f_2(C,\rho)$ for all $0\leq \rho\leq 1$. Recall from Corollary \ref{uppGauss} that $f_4(C,\rho)$ is not limiting for $\rho>\rho^{(2)}$. This is  reflected in the definition of $f^\prime_4(C,\rho)$. So we can write the upper bound of Corollary \ref{uppGauss} as 
 \begin{align}
 \label{regimeremark}
 \max_\rho \min\{f_1(C),f_3(\rho),f^\prime_4(C,\rho)\}.
 \end{align}
 \end{remark}
\begin{remark}
The function $f_4(C,\rho)$ is concave in $\rho$ and  it attains its maximum at $\rho^{(1)}$ given in \eqref{r1}. $f_2(C,\rho)$ is concave and decreasing in $\rho$. One can check by substitution and differentiation that $f^\prime_4(C,\rho)$ is continuous and differentiable with respect to $\rho$ at $\rho=\rho^{(2)}$. Furthermore, $f^\prime_4(C,\rho)$ is concave and attains its maximum at $\rho^{(1)}$. The derivative of $f^\prime_4(C,\rho)$ is non-positive at $\rho^{(2)}$ and thus we have $\rho^{(1)}\leq \rho^{(2)}$.
\end{remark}
 
 \begin{remark}
 \label{zayeremark}
 In the regime $\frac{1}{4}\log(1+2P)< C <\frac{1}{2}\log(1+4P)$ the functions $f_3(\rho)$ and $f_4^\prime(C,\rho)$ have exactly one  point $\rho$ where $f_3(\rho)=f_4^\prime(C,\rho)$. To see this we study the zeros of the function $g(C,\rho)=f_3(\rho)-f^\prime_4(C,\rho)$. Since $C>\frac{1}{4}\log(1+2P)$, we have $g(C,0)<0$. Since $C<\frac{1}{2}\log(1+4P)$, we have $g(C,1)>0$.  Furthermore, we have
 \begin{align}
 \frac{\partial g}{\partial \rho}(C,\rho)= \frac{\partial f_3}{\partial \rho}(\rho)- \frac{\partial f^\prime_4}{\partial \rho}(C,\rho)> 0,\quad \forall \rho\in[0,1]
 \end{align}
and thus $g(C,\rho)$ is increasing in $\rho$. So $g(C,\rho)$ has exactly one zero.
 \end{remark}
\begin{remark}
\label{2Cfprime}
In the regime 
\begin{align}
2C\leq f^\prime_4(C,\rho^{(1)})\label{2Clarger}
\end{align}  we have
\begin{align}
f_3(\rho^{(1)})&=\frac{1}{2}\log\left(1+2P(1+\rho^{(1)})\right)\nonumber\\
&\geq \frac{1}{2}\log\left(1+2P(1+\rho^{(1)})\right)-\frac{1}{2}\log\left(\frac{1}{1-{\left(\rho^{(1)}\right)}^2}\right)\nonumber\\
&= 2f^\prime_4(C,\rho^{(1)})-2C \nonumber\\
&{\geq} f^\prime_4(C,\rho^{(1)}).\label{vflast}
\end{align}
Inequality \eqref{vflast} follows by \eqref{2Clarger}. The implication is that $f_3(\rho)$ is not ``limiting" in \eqref{regimeremark} for the the regime given by \eqref{2Clarger}. %This is due to \eqref{vflast} and the fact that $f_3(\rho)$ is increasing in $\rho$.
\end{remark}

\begin{figure}
        \centering
%        \begin{subfigure}[b]{0.49\textwidth}
 %               \includegraphics[width=\textwidth]{poss4.eps}
  %              \caption{}
   %             \label{poss4}
    %    \end{subfigure}%
    %    ~ %add desired spacing between images, e. g. ~, \quad, \qquad etc.
          %(or a blank line to force the subfigure onto a new line)
        \begin{subfigure}[b]{0.49\textwidth}
\begin{tikzpicture}
\begin{axis}[scale=0.65,
width=4.5in,
height=3.5in,
scale only axis,
xmin=0,
xmax=1,
xmajorgrids,
xlabel={$\rho$},
ymin=1,
ymax=1.8,
ymajorgrids,
ylabel={$R$},
axis x line*=bottom,
axis y line*=left,
legend pos=north west,
legend style={draw=none,fill=none,legend cell align=left, font=\scriptsize}
]
\addplot+[mark=none,samples=200] {2*.72};
%\addlegendentry{$f_1(C,\rho)$};
\addplot+[mark=none,samples=200] {.72+1/2*ln(1+3*(1-x^2))/ln(2)};
%\addlegendentry{$f_2(C,\rho)$};
\addplot+[mark=none,samples=200] {1/2*ln(1+2*3*(1+x))/ln(2)};
%\addlegendentry{$f_3(C,\rho)$};
\addplot+[mark=none,samples=200] {max(sqrt(1+1/4/9)-1/2/3-x,0)/(sqrt(1+1/4/9)-1/2/3-x)*(.72+1/4*ln(1+2*3*(1+x))/ln(2)-1/4*ln(1/(1-x^2))/ln(2))+max(-sqrt(1+1/4/9)+1/2/3+x,0)/(-sqrt(1+1/4/9)+1/2/3+x)*(.72+1/2*ln(1+3*(1-x^2))/ln(2))};
%\addlegendentry{$f^\prime_4(C,\rho)$};

\draw[fill=none] (axis cs:.1,1.71)  node[above right] {$f_2(C,\rho)$};
\draw[fill=none] (axis cs:.8,1.45)  node[above right] {$f_1(C)$};
\draw[fill=none] (axis cs:.17,1.575)  node[above right] {$f_3(\rho)$};
\draw[fill=none] (axis cs:.3,1.475)  node[above right] {$f^\prime_4(C,\rho)$};

\draw[fill] (axis cs:{0.307220227007870,1}) circle [radius=3pt] node[above left] {$\rho^{(1)}$};
\draw[fill] (axis cs:{0.847127088383037,1}) circle [radius=3pt] node[above left] {$\rho^{(2)}$};
\end{axis}
\end{tikzpicture}
                \caption{}
                \label{poss3}
        \end{subfigure}
        ~ %add desired spacing between images, e. g. ~, \quad, \qquad etc.
          %(or a blank line to force the subfigure onto a new line)
        \begin{subfigure}[b]{0.49\textwidth}
\begin{tikzpicture}
\begin{axis}[scale=0.65,
width=4.5in,
height=3.5in,
scale only axis,
xmin=0,
xmax=1,
xmajorgrids,
xlabel={$\rho$},
ymin=1,
ymax=1.8,
ymajorgrids,
ylabel={$R$},
axis x line*=bottom,
axis y line*=left,
legend pos=north west,
legend style={draw=none,fill=none,legend cell align=left, font=\scriptsize}
]
\addplot+[mark=none,samples=200] {2*.76};
%\addlegendentry{$f_1(C,\rho)$};
\addplot+[mark=none,samples=200] {.76+1/2*ln(1+3*(1-x^2))/ln(2)};
%\addlegendentry{$f_2(C,\rho)$};
\addplot+[mark=none,samples=200] {1/2*ln(1+2*3*(1+x))/ln(2)};
%\addlegendentry{$f_3(C,\rho)$};
\addplot+[mark=none,samples=200] {max(sqrt(1+1/4/9)-1/2/3-x,0)/(sqrt(1+1/4/9)-1/2/3-x)*(.76+1/4*ln(1+2*3*(1+x))/ln(2)-1/4*ln(1/(1-x^2))/ln(2))+max(-sqrt(1+1/4/9)+1/2/3+x,0)/(-sqrt(1+1/4/9)+1/2/3+x)*(.76+1/2*ln(1+3*(1-x^2))/ln(2))};
%\addlegendentry{$f^\prime_4(C,\rho)$};

\draw[fill=none] (axis cs:.23,1.71)  node[above right] {$f_2(C,\rho)$};
\draw[fill=none] (axis cs:.8,1.525)  node[above right] {$f_1(C)$};
\draw[fill=none] (axis cs:.17,1.575)  node[above right] {$f_3(\rho)$};
\draw[fill=none] (axis cs:.2,1.4)  node[above right] {$f^\prime_4(C,\rho)$};

\draw[fill] (axis cs:{0.307220227007870,1}) circle [radius=3pt] node[above left] {$\rho^{(1)}$};
\draw[fill] (axis cs:{0.847127088383037,1}) circle [radius=3pt] node[above left] {$\rho^{(2)}$};
\end{axis}
\end{tikzpicture}
                     \caption{}
\label{poss2}
        \end{subfigure}
                \begin{subfigure}[b]{0.49\textwidth}
              \begin{tikzpicture}
\begin{axis}[scale=0.65,
width=4.5in,
height=3.5in,
scale only axis,
xmin=0,
xmax=1,
xmajorgrids,
xlabel={$\rho$},
ymin=1.2,
ymax=3,
ymajorgrids,
ylabel={$R$},
axis x line*=bottom,
axis y line*=left,
legend pos=north west,
legend style={draw=none,fill=none,legend cell align=left, font=\scriptsize}
]
\addplot+[mark=none,samples=200] {2*1.2};
%\addlegendentry{$f_1(C,\rho)$};
\addplot+[mark=none,samples=200] {1.2+1/2*ln(1+3*(1-x^2))/ln(2)};
%\addlegendentry{$f_2(C,\rho)$};
\addplot+[mark=none,samples=200] {1/2*ln(1+2*3*(1+x))/ln(2)};
%\addlegendentry{$f_3(C,\rho)$};
\addplot+[mark=none,samples=200] {max(sqrt(1+1/4/9)-1/2/3-x,0)/(sqrt(1+1/4/9)-1/2/3-x)*(1.2+1/4*ln(1+2*3*(1+x))/ln(2)-1/4*ln(1/(1-x^2))/ln(2))+max(-sqrt(1+1/4/9)+1/2/3+x,0)/(-sqrt(1+1/4/9)+1/2/3+x)*(1.2+1/2*ln(1+3*(1-x^2))/ln(2))};
%\addlegendentry{$f^\prime_4(C,\rho)$};

\draw[fill=none] (axis cs:.2,2.16)  node[above right] {$f_2(C,\rho)$};
\draw[fill=none] (axis cs:.2,2.41)  node[above right] {$f_1(C)$};
\draw[fill=none] (axis cs:.17,1.565)  node[above right] {$f_3(\rho)$};
\draw[fill=none] (axis cs:.2,1.93)  node[above right] {$f^\prime_4(C,\rho)$};

\draw[fill] (axis cs:{0.307220227007870,1.2}) circle [radius=3pt] node[above left] {$\rho^{(1)}$};
\draw[fill] (axis cs:{0.847127088383037,1.2}) circle [radius=3pt] node[above left] {$\rho^{(2)}$};

\end{axis}
\end{tikzpicture}

                     \caption{}
\label{poss5}
        \end{subfigure}
                \caption{The functions in \eqref{f4} for different values of $C$ where $C$ increases from Fig. \ref{poss3} to Fig. \ref{poss5}. Note the range of the $R-$axis in Fig. \ref{poss5}.}
        \label{poss}
\end{figure}


Fix the value of $C$. Define $\rho^\star$ to be the optimal correlation coefficient in Corollary \ref{uppGauss} and let $R_{\max}$ be the maximum value it attains. We have one of the following cases.
\begin{itemize}
\item[($a$)]  $\rho^\star$ is such that $R_{\max}=f_1(C)$.
\item[($b$)] $\rho^\star$ is the unique $\rho$ that maximizes $f^\prime_4(C,\rho)$ and $R_{\max}=f^\prime_4(C,\rho^{(1)})$.
\item[($c$)] $\rho^\star$ is such that $R_{\max}=f_3(\rho^\star)$.
\end{itemize}

%We remark that in  cases ($a$) and ($e$), we have $2C\leq f_4(C,\rho^{(1)})$;
%Otherwise, we would have $R_{\max}<2C$.
%Moreover, the optimizing $\rho^\star$ is not necessarily unique (see, e.g., Fig. \ref{poss4} and Fig. \ref{poss3}). We can choose, without loss of generality, $\rho^\star=\rho^{(1)}$ because $f_3(C,\rho^{(1)})\geq f_4(C,\rho^{(1)})$ (by Remark \ref{}) and  $f_2(C,\rho^{(1)})\geq f_4(C,\rho^{(1)})$ (by Remark \ref{}). So in both cases ($a$) and ($e$) we have, without loss of generality,  $\rho^\star=\rho^{(1)}$. Since $\rho^{(1)}\leq \rho^{(2)}$, case ($e$) becomes an irrelevant case.

When $C=0$, we are in case ($a$).  As $C$ increases, $f_3(\rho)$ remains unchanged but $f_1(C)$ and $f^\prime_4(C,\rho)$ increase. We remain in case $(a)$ as long as $2C\leq f_4^\prime(C,\rho^{(1)})$. This is illustrated in Fig. \ref{poss3}. We then transit to case $(b)$ where $R_{\text{max}}= f_4^\prime(C,\rho^{(1)})$, see Fig. \ref{poss2}. To see this, keep increasing $C$ until $2C=f^\prime_4(C,\rho^{(1)})$. At this point, we are about to leave case ($a$), and we still have $f_3(\rho^{(1)})\geq f^\prime_4(C,\rho^{(1)})$ (see Remark~\ref{2Cfprime}). Moreover, $f^\prime_4(C,\rho)$ is decreasing in $\rho$ for $\rho>\rho^{(1)}$, and $f_3(\rho)$ is increasing in $\rho$. So the crossing point of the two curves should be at or before $\rho^{(1)}$.  As $C$ further increases,  $\rho^\star$ remains equal to $\rho^{(1)}$ until $f_3(\rho^{(1)})=f^\prime_4(C,\rho^{(1)})$. From that point on, we have 
\begin{align}
C\geq \frac{1}{4}\log\frac{1+2P\left(1+{\rho^{(1)}}\right)}{1-{\rho^{(1)}}^2}\label{conditionCC1}
\end{align}
and  we move into case ($c$). In this case, $\rho^\star$ is such that $R_{\max}=f_3(\rho^\star)=f^\prime_4(C,\rho^\star)$ (see Fig. \ref{poss5}). We note that as $C$ increases, so does $\rho^\star$. We thus have $\rho^\star\leq\rho^{(2)}$ if
\begin{align}
C\leq \frac{1}{4}\log\frac{1+2P\left(1+{\rho^{(2)}}\right)}{1-{\rho^{(2)}}^2}.\label{conditionC2}
\end{align}
In this regime, besides the bounds \eqref{cor00}-\eqref{cor1}, we have
\begin{align}
R_{\max}&=f_3(\rho^\star)\nonumber\\
&= 2f^\prime_4(C,\rho^\star)-f_3(\rho^\star)\nonumber\\
&= \left\{\begin{array}{ll}2C-\frac{1}{2}\log\left(\frac{1}{1-{\rho^\star}^2}\right)& \rho^\star\leq \rho^{(2)}\\2f_2(C,\rho^\star)-f_3(\rho^\star)& \rho^\star>\rho^{(2)}\end{array}\right..
\end{align}
Therefore, when $C$ satisfies \eqref{conditionCC1} and \eqref{conditionC2} the upper bound meets the lower bound of \eqref{eqlow1}-\eqref{eqlow4}.

\section{Proof of Theorem \ref{gaussmatch2}}
\label{apgaussmatch2}
Consider \eqref{21N} and suppose $C^{\diamond}$ is the capacity of the network. For symmetric networks, we thus have
\begin{align}
C^{\diamond}\leq & 2C-\frac{1}{2}\log\left(N+2^{2C^{\diamond}}\right)-\frac{1}{2}\log\left(1+N\right)+\log\left( 1+N+P\left(1-\rho^2\right)\right)\nonumber\\
\stackrel{(a)}{\leq}& 2C-\frac{1}{2}\log\left(N+2^{2R^{(l)}_{\max}}\right)-\frac{1}{2}\log\left(1+N\right)+\log\left( 1+N+P\left(1-\rho^2\right)\right)\label{above25}
\end{align}
where $R^{(l)}_{\max}$ is the maximum  admissible rate in the lower bound of \eqref{eqlow1}-\eqref{eqlow4}. Note that $(a)$ holds because  $R^{(l)}_{\max}\leq C^{\diamond}$.
The upper bound of Theorem \ref{theoremupp} is thus loosened to:
\begin{align}
R^{(u)}_{\max}=\max_{0\leq\rho\leq 1} \min_N\min\left\{f_1(C),f_2(C,\rho),f_3(\rho),f_5(C,\rho,N)\right\}\label{upphectic}
\end{align}
where $f_1$, $f_2$, $f_3$ are defined in \eqref{f4} and $f_5(C,\rho,N)$ is the RHS of \eqref{above25}. Furthermore, define 
\begin{align}
f_0(C,\rho)=2C-\frac{1}{2}\log\left(\frac{1}{1-\rho^2}\right)
\end{align}
so that we have
\begin{align}
R^{(l)}_{\max}=\max_{0\leq\rho\leq 1} \min\left\{f_0(C,\rho),f_2(C,\rho),f_3(\rho)\right\}.\label{lowhectic}
\end{align}

We shall prove that \eqref{lowhectic} is equal to \eqref{upphectic} for the range of $C$ given in \eqref{thmC2}. We start with \eqref{lowhectic}. Fix $C$ and let $\lambda$ be the optimizing correlation coefficient.  The functions $f_0(C,\rho)$ and $f_2(C,\rho)$ are decreasing in $\rho$ and $f_3(\rho)$ is increasing in $\rho$. So depending on how $f_3(0)$ compares with $\min(f_0(C,0),f_2(C,0))$ we have the following cases for $\lambda$:
\begin{itemize}
\item If $2C\leq \frac{1}{2}\log(1+2P)$, then we have $\lambda=0$ and $R=2C$ is achievable using independent Gaussian random variables $X_1$, $X_2$ with zero mean and variance $P$. 
\item If $2C> \frac{1}{2}\log(1+2P)$, then $\lambda$ is such that either 
$R^{(l)}_{\max}=f_3(\lambda)=f_2(C,\lambda)$ (where the cut-set bound is achievable) or $R^{(l)}_{\max}=f_3(\lambda)=f_0(C,\lambda)$. We show that in the latter case $R^{(l)}_{\max}$ and  $R^{(u)}_{\max}$ are equal if \eqref{thmC2} is satisfied.
\end{itemize} 

Suppose $\lambda$ is such that 
\begin{align}
R^{(l)}_{\max}=f_3(\lambda)=f_0(C,\lambda).\label{becauseeq}
\end{align} 
Here $\lambda$ is  defined by the crossing point of $f_3(\rho)$ and $f_0(C,\rho)$ and is such that
\begin{align}
C=\frac{1}{4}\log\left(\frac{1+2P(1+\lambda)}{1-\lambda^2}\right).\label{inc}
\end{align}
The RHS of \eqref{inc} is an increasing function of $\lambda$ and thus we have $\lambda\leq \rho^{(2)}$ if and only if \eqref{thmC2} is satisfied. 

Next consider \eqref{upphectic} in the regime of $C$ given by \eqref{thmC2}. 
First note that for a fixed $N$, $f_5(C,\rho,N)$ is decreasing in $\rho$.  Let 
\begin{align}N_\lambda=P\left(\frac{1}{\lambda}-\lambda\right)-1.\label{step1}\end{align} 
Since $\lambda\leq \rho^{(2)}$, we have $N_\lambda\geq0$ 
 and the upper bound may be written as follows:
\begin{align}
R\leq& f_5(C,\rho,N_\lambda)\nonumber\\
=& 2C-\frac{1}{2}\log\left(N_\lambda+2^{2R^{(l)}_{\max}}\right)-\frac{1}{2}\log\left(1+N_\lambda\right)+\log\left( 1+N_\lambda+P\left(1-\rho^2\right)\right)\nonumber\\
 \stackrel{(a)}{=}&2C-\frac{1}{2}\log\left(1+N_\lambda+2P(1+\lambda)\right)-\frac{1}{2}\log\left(1+N_\lambda\right)+\log\left( 1+N_\lambda+P\left(1-\rho^2\right)\right)\label{rhsakhar}
\end{align}
where $(a)$ holds by \eqref{becauseeq}. 
The RHS of \eqref{rhsakhar}, evaluated for $\rho=\lambda$, is given by $$f_5(C,\lambda,N_\lambda)=2C-\frac{1}{2}\log\left(\frac{1}{1-\lambda^2}\right)=f_0(C,\lambda).$$ This follows by the argument in \eqref{choiceofN}-\eqref{regimeofN}. We conclude that $f_5(C,\rho,N_\lambda)$ is equal to $f_0(C,\lambda)=f_3(\lambda)$ at $\rho=\lambda$. Since $f_5(C,\rho,N_\lambda)$ is decreasing in $\rho$,  $\lambda$ is the optimal $\rho^\star$ in \eqref{upphectic} too. This is illustrated in Fig.~\ref{possnew}. So in the regime characterized by \eqref{thmC2} the upper bound is equal to $R^{(l)}_{\max}$ and is thus achievable. 
\begin{figure}[t!]
        \begin{center}
               \begin{tikzpicture}
\begin{axis}[scale=0.65,
width=6.5in,
height=4in,
scale only axis,
xmin=0,
xmax=1,
xmajorgrids,
xlabel={$\rho$},
ymin=1,
ymax=2,
ymajorgrids,
ylabel={$R$},
axis x line*=bottom,
axis y line*=left,
legend pos=north west,
legend style={draw=none,fill=none,legend cell align=left, font=\scriptsize}
]
\addplot+[mark=none,samples=200] {2*.82};
%\addlegendentry{$f_1(C,\rho)$};
\addplot+[mark=none,samples=200] {.82+1/2*ln(1+3*(1-x^2))/ln(2)};
%\addlegendentry{$f_2(C,\rho)$};
\addplot+[mark=none,samples=200] {1/2*ln(1+2*3*(1+x))/ln(2)};
%\addlegendentry{$f_3(C,\rho)$};
\addplot+[mark=none,samples=200] {2*.82-1/2*ln(1/(1-x^2))/ln(2)};
%\addlegendentry{$f^\prime_4(C,\rho)$};
\addplot+[green!50!black,mark=none,samples=200,dashed] {2*.82-1/2*ln(1+7.9807+6*(1+.303317))/ln(2)-1/2*ln(1+7.9807)/ln(2)+ln(1+7.9807+3*(1-x^2))/ln(2)};

\draw[fill=none] (axis cs:.1,1.82)  node[above right] {$f_2(C,\rho)$};
\draw[fill=none] (axis cs:.8,1.63)  node[above right] {$f_1(C)$};
\draw[fill=none] (axis cs:.05,1.475)  node[above right] {$f_3(\rho)$};
\draw[fill=none] (axis cs:.425,1.515)  node[above right] {$f_5(C,\rho,N_\lambda)$};
\draw[fill=none] (axis cs:.475,1.2)  node[above right] {$f_0(C,\rho)$};
\draw[fill] (axis cs:{.303317,1}) circle [radius=3pt] node[above left] {$\lambda$};
\end{axis}
\end{tikzpicture}
                \caption{Several functions of $\rho$. At $\rho=\lambda$, the curves $f_0(C,\rho)$, $f_3(\rho)$, and $f_5(C,\rho,N_\lambda)$  all have the same value.}
                \label{possnew}
                \end{center}
\end{figure}
%

\begin{remark}
\label{generalgaussmatch2}
A similar result can be established for asymmetric networks. Let $f_0(C_1,C_2,\rho)$ and $f_3(\rho)$ be the RHSs of \eqref{eqlow1} and \eqref{eqlow4}, respectively. Define $\lambda$ to be the optimizing correlation coefficient in \eqref{eqlow1}-\eqref{eqlow4} and  $R_{\max}^{(l)}$ as  the maximum achievable rate. One can check that $R_{\max}^{(l)}$ is equal to the cut-set bound if $C_1,C_2$ satisfy $C_1+C_2\leq  \frac{1}{2}\log\left(1+P_1+P_2\right)$, or if \eqref{PP1} or \eqref{PP2} are satisfied. The cut-set bound may not be achievable by \eqref{eqlow1}-\eqref{eqlow4}  when we have
\begin{align}
&R_{\max}^{(l)}=f_3(\lambda)=f_0(C_1,C_2,\lambda).\label{regime}
\end{align} 
For $C_1,C_2$ where \eqref{PP0} is satisfied, we have
\begin{align}
&\lambda\leq\rho^{(2)}.\label{lambdaro}
\end{align}
Therefore, following the steps in \eqref{step1}-\eqref{rhsakhar}, we have a matching upper bound based on Theorem \ref{theoremupp} (in the form of \eqref{upphectic} but written for general $C_1,C_2$). 
Finally, for $C_1,C_2$ that satisfy \eqref{PP} the cut-set bound is achievable using Theorem \ref{lower bound} with $\frac{U}{\sqrt{P_1}}=\frac{X_1}{\sqrt{P_1}}=\frac{X_2}{\sqrt{P_2}}\sim\mathcal{N}(0,1)$.
\end{remark}

%\begin{remark}
%\label{reduces}
%Theorem \ref{theoremupp} is strictly tighter than \cite[Theorem 1]{KangLiu11}. We compare \eqref{upphectic} with the upper bound of \cite[Theorem 1]{KangLiu11}.  The regime of interest is given by \eqref{regime} because otherwise both upper bounds reduce to the cut-set bound which is tight. First  suppose $\lambda> \rho^{(2)}$. In this case,  \cite[Theorem 1]{KangLiu11} reduces to the cut-set bound and is larger than or equal to \eqref{upphectic}.
%Next suppose $\lambda\leq \rho^{(2)}$. Here, \eqref{upphectic} can be shown to be equal to $R_{\max}^{(l)}$ and is thus tight but \cite[Theorem 1]{KangLiu11} may not be tight, see \cite[Theorem 3]{KangLiuChong15}. For example, when $P=.25$ and $C=.15$ Theorem \ref{theoremupp} gives $C^{\diamond}\leq .2994$ (which is tight) whereas \cite[Theorem 1]{KangLiu11} gives $C^{\diamond}\leq .3$.
%\end{remark}

\section{Proof of Lemma \ref{lemmadoubly}}
\label{detailoptlemma}
Consider the following optimization problem:
\begin{align}
\max_{p(x_1,x_2)}\min_{p(u|y)}\label{apterm1}\min\left\{\begin{array}{l}
2C\\
C+I(X_2;Y|X_1)\\
C+I(X_1;Y|X_2)\\
I(X_1X_2;Y)\\
\frac{1}{2}\left(2C+I(X_1X_2;Y|U)+I(X_1;U|X_2)+I(X_2;U|X_1)\right)\end{array}\right\}.
\end{align}
%Let $p^{(1)}(x_1,x_2)$ be an optimizing pmf in \eqref{apterm1}. We construct an optimizing pmf $p^\star(x_1,x_2)$ that is given by a doubly symmetric source. 

We first note that the objective function   in \eqref{apterm1} is symmetric in $p(x_1,x_2)$ (when $p(u|y)$ is given by the channel in Fig.~\ref{channelu}). More precisely, for every pmf $p(x_1,x_2)$, the pmf $\bar{p}(x_1,x_2)$ with $\bar{p}(0,0)=p(1,1)$, $\bar{p}(0,1)=p(1,0)$, $\bar{p}(1,0)=p(0,1)$, $\bar{p}(1,1)=p(0,0)$ gives the same objective function. Let $p^{(1)}(x_1,x_2)$ be the pmf that attains the optimal value of \eqref{apterm1}. Take the pmf $p^{(1)}(x_1,x_2)$ and form the doubly symmetric pmf
\begin{align*}
p^\star(x_1,x_2)=\frac{p^{(1)}(x_1,x_2)+\bar{p}^{(1)}(x_1,x_2)}{2},\quad  x_1=0,1,\ x_2=0,1.\\
%p^\star(u|y)=p^{(1)}(u|y).
\end{align*} 
%Note that $p^\star(x_1,x_2)$ is the pmf of a doubly symmetric binary source (DSBS).

For a fixed $p(u|y)$, all terms of the min expression in \eqref{apterm1} are concave functions of $p(x_1,x_2)$ (see Remark \ref{concave}). Therefore, at any point $(p^\star(x_1,x_2),p(u|y))$ they take on values larger than or equal to their respective values at $(p^{(1)}(x_1,x_2),p(u|y))$ (or $(\bar{p}^{(1)}(x_1,x_2),p(u|y))$). This proves that there exists at least one optimizing doubly symmetric pmf $p(x_1,x_2)$ in \eqref{apterm1}.

\section{Proof of Theorem \ref{ranges}}
\label{apranges}
The proof is similar to the proof of Theorem \ref{gaussmatch}. The lower and upper bounds match for $C\leq 0.75$ and $C\geq 1$. In the former regime, the cut-set bound is achievable using no cooperation among the relays with $p=\frac{1}{2}$ in \eqref{eqlowbin1}-\eqref{eqlowbin3}. In the latter case, the cut-set bound is achievable using full cooperation among the relays with $p=\frac{1}{3}$ in \eqref{eqlowbin1}-\eqref{eqlowbin3}. We prove the theorem for $C$'s satisfying~\eqref{thmCbinary} and we assume $C\leq 1$.

%First, note that inequality \eqref{loose} is redundant for $p\leq \frac{1}{2}$. 
Define
\begin{align}
\label{binaryadderfunc}
\begin{array}{lll}
&g_1(C)=2C\\
&g_3(p)=h_2(p)+1-p\\
&g_4(C,p)=C+h_2(p)-\frac{p}{2}
\end{array}
\end{align}
so that the upper bound of Corollay \ref{corbinaryadder} is given by
\begin{align}
R_{\max}=\max_{0\leq p\leq \frac{1}{2}}\min\{g_1(C),g_3(p),g_4(C,p)\}.\label{maximumcorollary}
\end{align}
The functions $g_1(C)$, $g_3(p)$, and $g_4(C,p)$ are drawn in Fig. \ref{binaryposs} as functions of $p\leq \frac{1}{2}$ for different values of $C$, where $C$ increases from Fig. \ref{binaryposs1} to Fig.~\ref{binaryposs4}. We consider two probabilities: 
\begin{align}
&p^{(1)}=\frac{1}{1+\sqrt{2}}\\ 
&p^{(3)}=2(1-C). %A few remarks follows.
\end{align}
\begin{remark}
$g_4(C,p)$ is concave in $p$ and it attains its maximum at $p=p^{(1)}$. $g_3(p)$ is also concave and it attains its maximum at $p=\frac{1}{3}$. We have $\frac{1}{3}\leq p^{(1)}$.  $g_3(p)$ and $g_4(C,p)$ cross at $p=p^{(3)}$.
\end{remark}
\begin{remark}
\label{2Ctight}
For the regime 
\begin{align}
\label{regimeCgiven}
2C\leq g_4(C,p^{(1)})
\end{align}
we have
\begin{align}
C&\leq h_2(p^{(1)})-\frac{p^{(1)}}{2}< 1-\frac{p^{(1)}}{2}.\label{chertent}
\end{align}
%with equality in \eqref{chertent} if and only if $p=\frac{1}{2}$.
As a result, we have $$g_4(C,p^{(1)})< g_3(p^{(1)}).$$ % and the inequality is strict if $p\neq \frac{1}{2}$.
 The implication is that $g_3(p)$ is not ``limiting" for the regime given by \eqref{regimeCgiven}.
\end{remark}

%\begin{remark}
%\label{p3}
%$g_3(p)$ and $g_4(C,p)$ cross at $p^{(3)}=2(1-C)$ for $\frac{1}{2}\leq C\leq 1$. The intersecting point decreases as $C$ increases. For the regime of  interest, i.e. $C>.75$, we have $p^{(3)}\leq\frac{1}{2}$. We have the following cases:
%\begin{itemize}
%\item If $.75\leq C\leq \frac{g_4(C,p^{(1)})}{2}$, we have $p^{(1)}\leq p^{(3)}\leq \frac{1}{2}$ and $2C\leq g_4(C,p^{(1)})\leq g_3(p^{(1)})$. So the maximum value of  \eqref{maximumcorollary} is given by $g_1(C,p)=2C$, see Fig. \ref{binaryposs1}.
%\item If $\frac{g_4(C,p^{(1)})}{2}\leq C\leq 1-\frac{p^{(1)}}{2}$, we have $p^{(1)}\leq p^{(3)}\leq \frac{1}{2}$ and $g_4(C,p^{(1)})\leq g_3(p^{(1)})$. Since $2C\geq g_4(C,p^{(1)})$, $p^{(1)}$ is the unique $p$ that maximizes \eqref{maximumcorollary}, see 
% Fig. \ref{binaryposs2}
%\item If $1-\frac{p^{(1)}}{2}\leq C\leq \frac{5}{6}$, we have  $\frac{1}{3}\leq p^{(3)}\leq p^{(1)}$. At $p=p^{(3)}$, $g_3(p)$ is decreasing and $g_4(C,p)$ is increasing. So $p^{(3)}$ maximizes \eqref{maximumcorollary}, see Fig. \ref{binaryposs3}.
%\item If $\frac{5}{6}\leq C\leq 1$, we have  $0\leq p^{(3)}\leq \frac{1}{3}$. Furthermore, one can check that $g_3(\frac{1}{3})\leq g_4(C,\frac{1}{3})$. So $p=\frac{1}{3}$ maximizes \eqref{maximumcorollary}, see Fig. \ref{binaryposs4}.
%%\item If $C>1$, $g_3(p)$ and $g_4(C,p)$ do not intersect and $g_3(p)$ lies below $g_4(C,p)$ for all $p$, $0\leq p\leq \frac{1}{2}$. In this regime  $p=\frac{1}{3}$ also maximizes \eqref{upp}.
%\end{itemize}
%\end{remark}


\begin{figure}
        \centering
        \begin{subfigure}[b]{0.48\textwidth}
          \begin{tikzpicture}
\begin{axis}[scale=0.65,
width=4.5in,
height=3.5in,
scale only axis,
xmin=0,
xmax=.5,
xmajorgrids,
xlabel={$p$},
ymin=1,
ymax=1.8,
ymajorgrids,
ylabel={$R$},
axis x line*=bottom,
axis y line*=left,
legend pos=north west,
legend style={draw=none,fill=none,legend cell align=left, font=\scriptsize}
]
\addplot+[mark=none,samples=500] {2*.76};
%\addlegendentry{$f_1(C,\rho)$};
\addplot+[mark=none,samples=500] {1-x-x*ln(x)/ln(2)-(1-x)*ln(1-x)/ln(2)};
%\addlegendentry{$f_2(C,\rho)$};
\addplot+[mark=none,samples=500] {.76-x*ln(x)/ln(2)-(1-x)*ln(1-x)/ln(2)-x/2};
%\addlegendentry{$f_3(C,\rho)$};

%\addlegendentry{$f^\prime_4(C,\rho)$};

\draw[fill=none] (axis cs:.06,1.4)  node[above right] {$g_3(p)$};
\draw[fill=none] (axis cs:.09,1.52)  node[above right] {$g_1(C)$};
\draw[fill=none] (axis cs:.14,1.2)  node[above right] {$g_4(C,p)$};

\draw[fill] (axis cs:{0.3333,1}) circle [radius=3pt] node[above left] {$\frac{1}{3}$};
\draw[fill] (axis cs:{0.4142,1}) circle [radius=3pt] node[above left] {$p^{(1)}$};
\draw[fill] (axis cs:{0.48,1}) circle [radius=3pt] node[above left] {$p^{(3)}$};

\end{axis}
\end{tikzpicture}
                \caption{}
                \label{binaryposs1}
        \end{subfigure}%
        ~ %add desired spacing between images, e. g. ~, \quad, \qquad etc.
          %(or a blank line to force the subfigure onto a new line)
        \begin{subfigure}[b]{0.48\textwidth}
          \begin{tikzpicture}
\begin{axis}[scale=0.65,
width=4.5in,
height=3.5in,
scale only axis,
xmin=0,
xmax=.5,
xmajorgrids,
xlabel={$p$},
ymin=1,
ymax=1.8,
ymajorgrids,
ylabel={$R$},
axis x line*=bottom,
axis y line*=left,
legend pos=north west,
legend style={draw=none,fill=none,legend cell align=left, font=\scriptsize}
]
\addplot+[mark=none,samples=500] {2*.78};
%\addlegendentry{$f_1(C,\rho)$};
\addplot+[mark=none,samples=500] {1-x-x*ln(x)/ln(2)-(1-x)*ln(1-x)/ln(2)};
%\addlegendentry{$f_2(C,\rho)$};
\addplot+[mark=none,samples=500] {.78-x*ln(x)/ln(2)-(1-x)*ln(1-x)/ln(2)-x/2};
%\addlegendentry{$f_3(C,\rho)$};

%\addlegendentry{$f^\prime_4(C,\rho)$};

\draw[fill=none] (axis cs:.06,1.4)  node[above right] {$g_3(p)$};
\draw[fill=none] (axis cs:.09,1.56)  node[above right] {$g_1(C)$};
\draw[fill=none] (axis cs:.13,1.2)  node[above right] {$g_4(C,p)$};

\draw[fill] (axis cs:{0.3333,1}) circle [radius=3pt] node[above left] {$\frac{1}{3}$};
\draw[fill] (axis cs:{0.4142,1}) circle [radius=3pt] node[above left] {$p^{(1)}$};
\draw[fill] (axis cs:{0.44,1}) circle [radius=3pt] node[above right] {$p^{(3)}$};

\end{axis}
\end{tikzpicture}
                \caption{}
                \label{binaryposs2}
        \end{subfigure}
        ~      %add desired spacing between images, e. g. ~, \quad, \qquad etc.
          %(or a blank line to force the subfigure onto a new line)
        \begin{subfigure}[b]{0.48\textwidth}
          \begin{tikzpicture}
\begin{axis}[scale=0.65,
width=4.5in,
height=3.5in,
scale only axis,
xmin=0,
xmax=.5,
xmajorgrids,
xlabel={$p$},
ymin=1,
ymax=1.8,
ymajorgrids,
ylabel={$R$},
axis x line*=bottom,
axis y line*=left,
legend pos=north west,
legend style={draw=none,fill=none,legend cell align=left, font=\scriptsize}
]
\addplot+[mark=none,samples=500] {2*.8};
%\addlegendentry{$f_1(C,\rho)$};
\addplot+[mark=none,samples=500] {1-x-x*ln(x)/ln(2)-(1-x)*ln(1-x)/ln(2)};
%\addlegendentry{$f_2(C,\rho)$};
\addplot+[mark=none,samples=500] {.8-x*ln(x)/ln(2)-(1-x)*ln(1-x)/ln(2)-x/2};
%\addlegendentry{$f_3(C,\rho)$};

%\addlegendentry{$f^\prime_4(C,\rho)$};

\draw[fill=none] (axis cs:.06,1.4)  node[above right] {$g_3(p)$};
\draw[fill=none] (axis cs:.09,1.6)  node[above right] {$g_1(C)$};
\draw[fill=none] (axis cs:.12,1.2)  node[above right] {$g_4(C,p)$};

\draw[fill] (axis cs:{0.3333,1}) circle [radius=3pt] node[above left] {$\frac{1}{3}$};
\draw[fill] (axis cs:{0.4142,1}) circle [radius=3pt] node[above right] {$p^{(1)}$};
\draw[fill] (axis cs:{0.4,1}) circle [radius=3pt] node[above] {$p^{(3)}$};

\end{axis}
\end{tikzpicture}
                     \caption{}
\label{binaryposs3}
        \end{subfigure}
    ~            \begin{subfigure}[b]{0.48\textwidth}
          \begin{tikzpicture}
\begin{axis}[scale=0.65,
width=4.5in,
height=3.5in,
scale only axis,
xmin=0,
xmax=.5,
xmajorgrids,
xlabel={$p$},
ymin=1,
ymax=1.8,
ymajorgrids,
ylabel={$R$},
axis x line*=bottom,
axis y line*=left,
legend pos=north west,
legend style={draw=none,fill=none,legend cell align=left, font=\scriptsize}
]
\addplot+[mark=none,samples=500] {1.795};
%\addlegendentry{$f_1(C,\rho)$};
\addplot+[mark=none,samples=500] {1-x-x*ln(x)/ln(2)-(1-x)*ln(1-x)/ln(2)};
%\addlegendentry{$f_2(C,\rho)$};
\addplot+[mark=none,samples=500] {.9-x*ln(x)/ln(2)-(1-x)*ln(1-x)/ln(2)-x/2};
%\addlegendentry{$f_3(C,\rho)$};

%\addlegendentry{$f^\prime_4(C,\rho)$};

\draw[fill=none] (axis cs:.06,1.4)  node[above right] {$g_3(p)$};
\draw[fill=none] (axis cs:.09,1.7)  node[above right] {$g_1(C)$};
\draw[fill=none] (axis cs:.1,1.2)  node[above right] {$g_4(C,p)$};

\draw[fill] (axis cs:{0.3333,1}) circle [radius=3pt] node[above left] {$\frac{1}{3}$};
\draw[fill] (axis cs:{0.4142,1}) circle [radius=3pt] node[above left] {$p^{(1)}$};
\draw[fill] (axis cs:{0.2,1}) circle [radius=3pt] node[above left] {$p^{(3)}$};

\end{axis}
\end{tikzpicture}
                     \caption{}
\label{binaryposs4}
        \end{subfigure}
                \caption{The functions in \eqref{binaryadderfunc} for different values of $C$.}
        \label{binaryposs}
\end{figure}

Fix the value of $C$ and denote the maximizing $p$ in  \eqref{maximumcorollary} by $p^\star$. We have one of the following cases:
\begin{itemize}
\item[($a$)] $p^\star$ is such that $R_{\max}=g_1(C)$. 
\item[($b$)] $p^\star=p^{(1)}$ and $R_{\max}=g_4(C,p^{(1)})$.
\item[($c$)] $p^\star=p^{(3)}$ and $R_{\max}=g_3(p^\star)=g_4(C,p^\star)$.
\item[($d$)] $p^\star=\frac{1}{3}$ and $R_{\max}=g_3(\frac{1}{3})$.
\end{itemize}
When $C=0$, we are in case ($a$). We remain in this case as long as $2C\leq g_4(C,\rho^{(1)})$.   When $2C= g_4(C,p^{(1)})$ we  have  $g_4(C,p^{(1)})< g_3(p^{(1)})$, see Remark \ref{2Ctight}. So as $C$  increases we have $p^\star=p^{(1)}$ and we move into  case ($b$), see Fig. \ref{binaryposs2}. In this regime, $g_3(p)$ and $g_4(C,p)$ cross at the point $p^{(3)}$ which is larger than or equal to $p^{(1)}$. As $C$ further increases, the crossing point $p^{(3)}$ of $g_3(p)$ and $g_4(C,p)$ decreases towards $p^{(1)}$, and as soon as $g_3(p^{(1)})=g_4(C,p^{(1)})$ we move into case ($c$) where $p^\star=p^{(3)}$, see Fig. \ref{binaryposs3}. In this case, we have
\begin{align}
R_{\max}=g_3(p^\star)=g_4(C,p^\star).
\label{rmax2}
\end{align}
Using \eqref{rmax2}, we obtain 
\begin{align}R_{\max}&=2g_4(C,p^\star)-g_3(p^\star)\nonumber\\&= 2C+h_2(p^\star)-1.\end{align} 
So the upper bound reduces to
\begin{align}
%R&\leq 2C\\
R&\leq C + h_2(p^\star)\\
R&\leq h_2(p^\star) + 1 - p^\star\\
R&\leq 2C + h_2(p^\star) - 1
\end{align}
which matches the lower bound in \eqref{eqlowbin1}-\eqref{eqlowbin3}. Finally, when $p^{(3)}\leq \frac{1}{3}$ we have $p^\star=\frac{1}{3}$ and full cooperation is achieved. We thus meet the cut-set bound. This concludes the proof.
\section{An upper bound on $H(U|X_1,Q=i)$ and $H(U|X_2,Q=i)$}
\label{HUcX1Q}
$H(U|X_1,Q=i)$ may be expanded as follows:
\begin{align}
&H(U|X_1,Q=i)\nonumber\\
  &=p_{X_1|Q}(0|i)h_2\left(\frac{p_{X_1X_2|Q}(0,0|i)(1-\alpha)+\frac{p_{X_1X_2}(0,1)}{2}}{p_{X_1|Q}(0|i)}\right)+p_{X_1|Q}(1|i)h_2\left(\frac{p_{X_1X_2|Q}(1,1|i)(1-\alpha)+\frac{p_{X_1X_2}(1,0)}{2}}{p_{X_1|Q}(1|i)}\right)\nonumber\\
  &\leq h_2\left((1-q_i)(1-\alpha)+\frac{q_i}{2}\right)\nonumber\\&=h_2(\alpha\star\frac{q_i}{2})\label{sym1}
 \end{align}
 where the inequality is by the concavity of $h_2(.)$ in its argument. $H(U|X_2Q)$ may be bounded similarly.

\section{Proof of Lemma \ref{GMGL}}
\label{concavity lemma}
First, note that for $x\leq h_2(y)$, we have $g(x,y)=0$. For $x\geq h_2(y)$, $g(x,y)$ is non-negative. Since $g(x,y)$ is continuous at $x=h_2(y)$, it suffices to prove convexity of $g(x,y)$ in the regime $x\geq h_2(y)$. Recall that $y\leq 1$ and $\alpha\leq \frac{1}{2}$.
We prove that the  Hessian matrix 
 \begin{align}
\mathbf{H}=\left[\begin{array}{cc}\frac{\partial^2{g(x,y)}}{\partial{x^2}}&\frac{\partial^2{g(x,y)}}{\partial{x}\partial{y}}\\\frac{\partial^2{g(x,y)}}{\partial{y}\partial{x}}&\frac{\partial^2{g(x,y)}}{\partial{y^2}}\end{array}\right].
 \end{align}
  is positive semi-definite. Using Sylvester's criterion \cite[Theorem 7.2.5]{HornJohnson}, $\mathbf{H}$ is positive semi-definite if and only if its leading principal minors are non-negative; i.e., if
\begin{align}
&\frac{\partial^2{g(x,y)}}{\partial{x^2}}\geq 0\\
&\frac{\partial^2{g(x,y)}}{\partial{x^2}}\frac{\partial^2{g(x,y)}}{\partial{y^2}}-\frac{\partial^2{g(x,y)}}{\partial{x}\partial{y}}\frac{\partial^2{g(x,y)}}{\partial{y}\partial{x}}\geq 0.
\end{align}
 
We use the following notation:
\begin{align}
&z=h_2^{-1}\left(\frac{x-h_2(y)}{1-y}\right),\quad y\leq 1\\
&s=\alpha\star\left(\frac{y}{2}+(1-y)z\right)\\
&r(x)=-\frac{h_2^\prime(x)}{h_2^{\prime\prime}(x)}=x(1-x)\ln \frac{1-x}{x}\quad  \ 0\leq x\leq \frac{1}{2}\\
&h_2^\prime(x)=\frac{1}{\ln 2}\ln\frac{1-x}{x}\quad  \ 0\leq x\leq \frac{1}{2}\label{h2prime}\\
&h_2^{\prime\prime}(x)=-\frac{1}{\ln 2}\frac{1}{x(1-x)}\quad  \ 0\leq x\leq \frac{1}{2}.
\end{align}

Taking the partial derivatives, we have 
 \begin{align}
\frac{\partial^2{g(x,y)}}{\partial{x^2}}&=\frac{(1-2\alpha)}{(1-y)\left(h_2^\prime(z)\right)^3}\left(-h_2^{\prime\prime}(z) h_2^{\prime}(s)+(1-2\alpha)(1-y)h_2^{\prime\prime}(s)h^\prime_2(z)\right)\nonumber\\
\label{convexx}&=\frac{(1-2\alpha)h_2^{\prime\prime}(z)h_2^{\prime\prime}(s)}{(1-y)\left(h_2^\prime(z)\right)^3}\left(r(s)-(1-2\alpha)(1-y)r(z)\right)\\
\frac{\partial^2{g(x,y)}}{\partial{x^2}}&\frac{\partial^2{g(x,y)}}{\partial{y^2}}-\frac{\partial^2{g(x,y)}}{\partial{x}\partial{y}}\frac{\partial^2{g(x,y)}}{\partial{y}\partial{x}}\nonumber\\&= \frac{(1-2\alpha)^2}{\left(h_2^\prime(z)\right)^3(1-y)}\left(\begin{array}{l}-\frac{(1-2\alpha)}{4}\left(-h_2^{\prime\prime}(z)h_2^\prime(s)+(1-2\alpha)(1-y)h_2^{\prime\prime}(s)h_2^\prime(z)\right)h_2^{\prime\prime}(\alpha\star\frac{y}{2})\\+\frac{h_2^{\prime\prime}(z)}{h_2^\prime(z)}h_2^{\prime\prime}(y)\left(h_2^\prime(s)\right)^2\\+(1-2\alpha)\left(-(\frac{1}{2}-z)^2h_2^{\prime\prime}(z)-(1-y)h_2^{\prime\prime}(y)\right)h_2^\prime(s)h_2^{\prime\prime}(s)\end{array}\right).\label{convexjoint}
\end{align}
We shall prove that \eqref{convexx} and \eqref{convexjoint} are non-negative for all non-negative parameters $\alpha\leq \frac{1}{2}$, $y\leq 1$, $z\leq \frac{1}{2}$.
 % Recall that $s=\alpha\star(\frac{y}{2}+(1-y)z)$. 
 We start with \eqref{convexx}. We claim that
\begin{align}
r(s)-(1-2\alpha)(1-y)r(z)
\label{rsrz}
\end{align} is non-negative because it is equal to $0$ at $z=\frac{1}{2}$ and is a non-increasing function of $z$, $0\leq z\leq \frac{1}{2}$. To see this, consider
\begin{align}
\frac{\partial}{\partial z}\left(r(s)-(1-2\alpha)(1-y)r(z)\right)=(1-2\alpha)(1-y)\left(r^\prime(s)-r^\prime(z)\right)\label{firstder}
\end{align}
where 
\begin{align}
r^\prime(x)=(1-2x)\ln\frac{1-x}{x}-1.
\end{align}
Since $r^\prime(x)$ is non-increasing in $x$ and $s\geq z$, \eqref{firstder} is non-positive.
This line of argument is very similar to \cite[Theorem 2]{Witsenhausen74} and \cite[Theorem 2]{Cheng14}.

To prove the non-negativity of \eqref{convexjoint} we proceed as follows. We have $h_2^{\prime\prime}(\alpha\star\frac{y}{2})\leq h_2^{\prime\prime}(s)$ because $h_2^{\prime\prime}(x)$ is increasing in $x$, $0\leq x\leq \frac{1}{2}$, and $\alpha\star \frac{y}{2}\leq s$. We thus have 
\begin{align}
\frac{\partial^2{g(x,y)}}{\partial{x^2}}&\frac{\partial^2{g(x,y)}}{\partial{y^2}}-\frac{\partial^2{g(x,y)}}{\partial{x}\partial{y}}\frac{\partial^2{g(x,y)}}{\partial{y}\partial{x}}\nonumber\\
&{\geq} \frac{(1-2\alpha)^2}{\left(h_2^\prime(z)\right)^3(1-y)}\left(\begin{array}{l}-\frac{(1-2\alpha)}{4}\left(-h_2^{\prime\prime}(z)h_2^\prime(s)+(1-2\alpha)(1-y)h_2^{\prime\prime}(s)h_2^\prime(z)\right)h_2^{\prime\prime}(s)\\+\frac{h_2^{\prime\prime}(z)}{h_2^\prime(z)}h_2^{\prime\prime}(y)\left(h_2^\prime(s)\right)^2\\+(1-2\alpha)\left(-(\frac{1}{2}-z)^2h_2^{\prime\prime}(z)-(1-y)h_2^{\prime\prime}(y)\right)h_2^\prime(s)h_2^{\prime\prime}(s)\end{array}\right)\nonumber\\
&=  \frac{(1-2\alpha)^4\left(h_2^{\prime\prime}(s)\right)^2h_2^{\prime\prime}(z)h_2^{\prime\prime}(y)}{\left(h_2^\prime(z)\right)^4(1-y)}\left(-\frac{1}{4}y(1-y)^2\ln \left(\frac{1-z}{z}\right)r(z)+\frac{(r(s))^2}{(1-2\alpha)^2}-(1-y)^2\frac{r(s)r(z)}{1-2\alpha}\right).\label{convexjoint2}
\end{align}
%where $(a)$ follows from \eqref{convexjoint} because $-h_2^{\prime\prime}(x)$ is decreasing in $x$, $0\leq x\leq \frac{1}{2}$, and since $\alpha\star\frac{q}{2}\leq s$.
It suffices to prove the non-negativity of 
\begin{align}
t(\alpha,y,z)=-\frac{1}{4}y(1-y)^2\ln \left(\frac{1-z}{z}\right)r(z)+\frac{(r(s))^2}{(1-2\alpha)^2}-(1-y)^2\frac{r(s)r(z)}{1-2\alpha}\label{behave}
\end{align}
for every non-negative parameter $\alpha\leq \frac{1}{2}$, $y\leq 1$, $z\leq \frac{1}{2}$. First we show that $t(\alpha,y,z)$ is non-decreasing in $\alpha$, and conclude that $t(\alpha,y,z)$ is non-negative if and only if $t(0,y,z)$ is non-negative. 
By taking the derivative of $t(\alpha,y,z)$ with respect to $\alpha$ we obtain
\begin{align}
\frac{\partial{t(\alpha,y,z)}}{\partial \alpha}=\frac{\left(2r(s)-(1-2\alpha)(1-y)^2r(z)\right)\left(2r(s)+(1-2\alpha)(1-y)(1-2z)r^\prime(s)\right)}{(1-2\alpha)^3}.
\end{align}
Both terms in the numerator are non-negative. The first term is non-negative because it is larger than \eqref{rsrz} and the second term is non-negative because it is a non-increasing function of $z$, $0\leq z\leq \frac{1}{2}$, and is equal to $0$ at $z=\frac{1}{2}$.
%$$t_1(\alpha,y,z)=2r(s)+(1-2\alpha)(1-y)(1-2z)r^\prime(s).$$  At $z=\frac{1}{2}$, we have $t_1(\alpha,y,\frac{1}{2})=0$. Moreover, $t_1(\alpha,y,z)$ is non-increasing in $z$, $0\leq z\leq \frac{1}{2}$.  One can verify this by showing that the derivative of $t_1(\alpha,y,z)$ with respect to $z$ is non-positive.

Finally, we plot $$f(y,z)=\frac{t(0,y,z)}{(1-y)^2(1-2z)^4}$$ in Fig. \ref{nonneg} and demonstrate that the function is non-negative over $0\leq y\leq 1$ and $0\leq z\leq \frac{1}{2}$. The terms in the denominator of $f(y,z)$ capture the behaviour of $t(0,y,z)$ around $z=\frac{1}{2}$ and $y=1$. $f(y,z)$ is zero at $y=0$ with a strictly positive slope at $y=0$ for all $z$, $0< z< \frac{1}{2}$.

 \begin{figure}\centering
 \includegraphics[width=.85\textwidth]{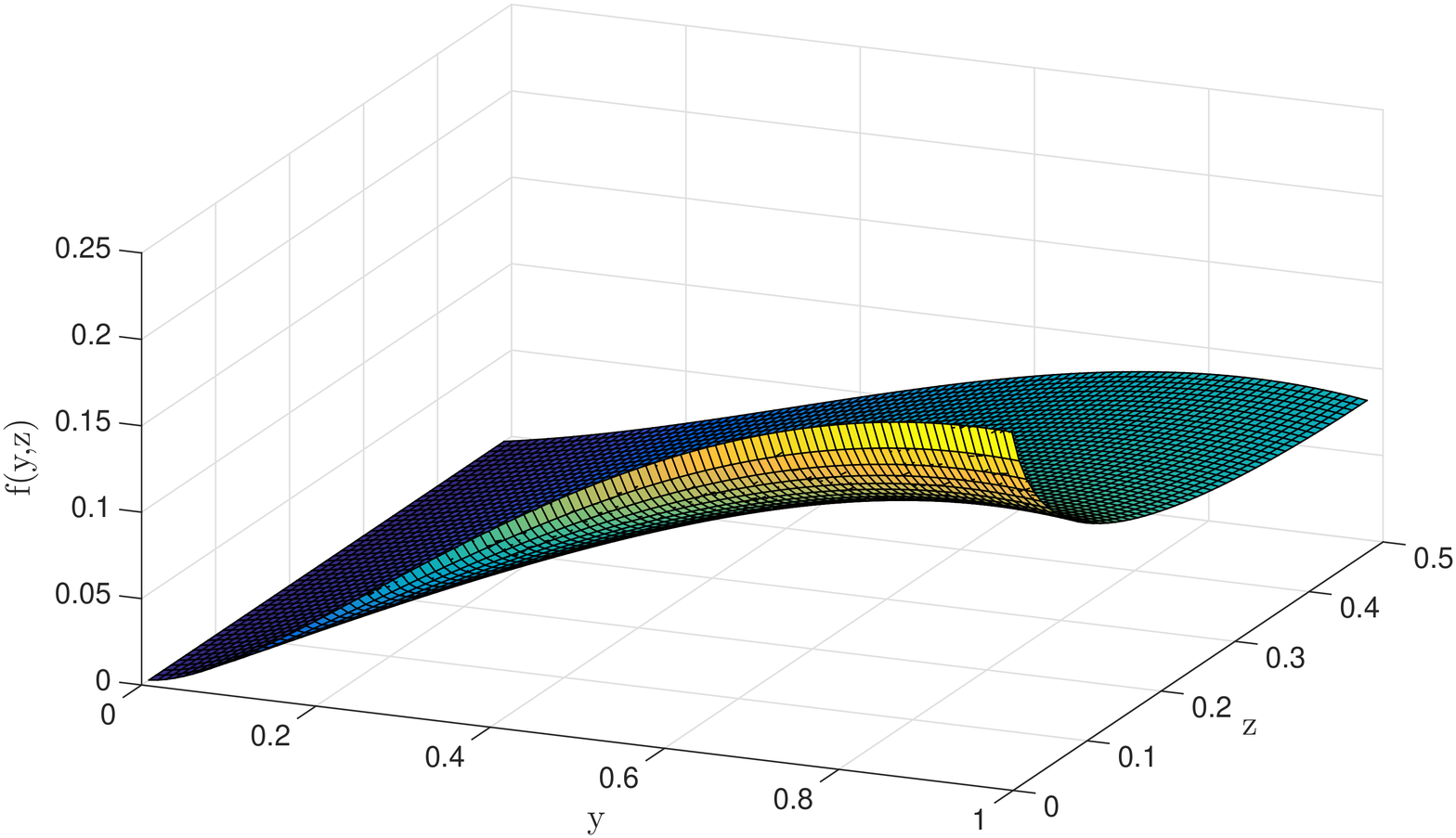}
 \caption{The function $f(y,z)$ is non-negative for $y,z$, $0\leq y\leq 1$, $0\leq z\leq \frac{1}{2}$.}
 \label{nonneg}
 \end{figure}
 
\section{Proof of Theorem \ref{thm:capbinaryadder}}
\label{last theorem}
Consider \eqref{eqlowbin1}-\eqref{eqlowbin3} and let $R_{\max}^{(l)}$ be  the maximum rate admissible. We have
\begin{align}
R_{\max}^{(l)}&=\max_{0\leq p\leq 1}\min\{g_0(C_1,C_2,p),g_2(C_1,p),g_3(p)\}\label{lb}
%R_{\max}^{u}&=\max_{0\leq q\leq \frac{1}{2}}\min_{0\leq \alpha\leq \frac{1}{2}}\min\{g_5(C,q,\alpha),g_1(C,q),g_3(p)\}.\label{ub}
\end{align} 
where %$g_3(p)$, $g_0(C_1,C_2,p)$ and $g_2(C_1,p)$ are defined by
\begin{align}
&g_0(C_1,C_2,p)=C_1+C_2-1+h_2(p)\\
&g_2(C_1,p)=C_1+h_2(p)\\
&g_3(p)=h_2(p)+1-p.
%&g_5(C,q,\alpha)= 2C-h_2\left(\alpha\star h_2^{-1}\left(R-h_2(q)+q\right)\right)-(1-q) h_2(\alpha)-q+2h_2\left((1-q)(1-\alpha)+\frac{1}{2}q\right).
\end{align}
Consider next \eqref{cap1}-\eqref{cap3}, \eqref{cap4} with $q=P_Y(1)$ as in \eqref{qpy1}. Let $R_{\max}^{(u)}$ be the maximum admissible rate. Since the function $h_2(\alpha\star\left(\frac{q}{2}+(1-q)h_2^{-1}(x)\right))$ is non-decreasing in $x$, we have
\begin{align}
R_{\max}^{(u)}\leq C_1+C_2-h_2\left(\alpha\star\left(\frac{q}{2}+(1-q)h_2^{-1}\left(\min\left(1,\frac{\left(R_{\max}^{(l)}-h_2(q)\right)^+}{1-q}\right)\right)\right)\right)-(1-q) h_2(\alpha)-q+2h_2\left(\alpha\star\frac{q}{2}\right).\label{g5}
\end{align}
So the capacity is upper bounded by
\begin{align}
%R_{\max}^{l}&=\max_{0\leq q\leq \frac{1}{2}}\min\{g_0(C,q),g_1(C,q),g_3(p)\}\label{lb}
R_{\max}^{(u)}&\leq\max_{0\leq q\leq 1}\min_{0\leq \alpha\leq \frac{1}{2}}\min\{g_1(C_1,C_2),g_2(C_1,q),g_3(q),g_5(C_1,C_2,q,\alpha)\}\label{ub}
\end{align} 
where $g_1(C_1,C_2)=C_1+C_2$ and $g_5(C_1,C_2,q,\alpha)$ is the RHS of \eqref{g5}.
Since \eqref{ub}  depends only on $q=p_{X_1X_2}(0,1)+p_{X_1X_2}(1,0)$, we may assume, without  loss of generality, that $p(x_1,x_2)$ is a doubly symmetric binary pmf with parameter $q$. Using \eqref{mutualinfoa} the bound in \eqref{g5} may be re-written as
\begin{align}
g_5(C_1,C_2,q,\alpha)&=g_0(C_1,C_2,q)+I(X_1;X_2|U)+1-h_2\left(\alpha\star\left(\frac{q}{2}+(1-q)h_2^{-1}\left(\min\left(1,\frac{\left(R_{\max}^{(l)}-h_2(q)\right)^+}{1-q}\right)\right)\right)\right)
\end{align}
where $I(X_1;X_2|U)$ is a function of $\alpha$ and $q$. We denote this conditional mutual information by $I_{\alpha,q}(X_1;X_2|U)$.

Consider \eqref{lb} and let $\eta$ be the optimizing $p$. We have $\eta\leq\frac{1}{2}$. 
Note that when $C_2\geq 1$, $g_0(C_1,C_2,p)$ is redundant in \eqref{lb} and the cut-set bound is achievable. Otherwise, $g_2(C_1,p)$ is redundant and we have one of the following cases:
\begin{itemize}
\item[($a$)] $\eta=\frac{1}{3}$ and $R_{\max}^{(l)}= g_3(\frac{1}{3})$. In this case, $R_{\max}^{(l)}=\log_2(3)$ and the cut-set bound is achievable.
\item[($b$)] $\eta=\frac{1}{2}$ and  $R_{\max}^{(l)}= g_0(C_1,C_2,\frac{1}{2})$. In this case, $R_{\max}^{(l)}=C_1+C_2$ and the cut-set bound is achievable.
\item[($c$)] $\eta$ is such that $R_{\max}^{(l)}= g_0(C_1,C_2,\eta)=g_3(\eta)$. 
We show that $R_{\max}^{(l)}$ and $R_{\max}^{(u)}$ match in this case. Here, we have
%\begin{align}
%\eta=2-C_1-C_2.\label{etaC}
%\end{align}
%and
\begin{align}
&\eta=2-C_1-C_2\label{etaC}\\
&R_{\max}^{(l)}= g_0(C_1,C_2,\eta)=g_3(\eta).\label{khasteshodam}
\end{align}
 \end{itemize}
\noindent Consider \eqref{ub} in regime ($c$) and let $\alpha_{\eta}$ be the solution of
\begin{align}
\alpha_{\eta}(1-\alpha_{\eta})=\left(\frac{{\eta}}{2(1-{\eta})}\right)^2\label{alphaeta}
\end{align}
that is less than or equal to $\frac{1}{2}$ where $\eta$ is given by \eqref{etaC}.
%\begin{align}
%R_{\max}^{(l)}= g_0(C_1,C_2,\eta)=g_3(\eta).\label{st}
%\end{align}
%This is a valid choice because $\eta\leq \frac{1}{2}$.
With this choice of $\alpha$, we study $g_5(C_1,C_2,q,\alpha_{\eta})$.
At $q=\eta$, $X_1-U-X_2$ forms a Markov chain and using \eqref{khasteshodam} we  have 
\begin{align}
g_5(C_1,C_2,\eta,\alpha_{\eta})=g_0(C_1,C_2,\eta).\label{g5crossestoo}
\end{align} 
In regime ($c$) given by \eqref{khasteshodam}, $g_3(q)$ and $g_0(C_1,C_2,q)$ cross at $q=\eta$. At this point, $g_3(q)$ is decreasing and $g_0(C_1,C_2,q)$ is increasing in $q$.  Also, $g_5(C_1,C_2,q,\alpha_\eta)$  crosses the two curves at $q=\eta$ as shown in \eqref{g5crossestoo}. Therefore, if $g_5(C_1,C_2,q,\alpha_\eta)$ is non-decreasing in $q$, $q\leq\eta$, then $\eta$ maximizes  \eqref{ub} and  $R^{(u)}_{\max}=R^{(l)}_{\max}$, see Fig. \ref{functionscross}.


\begin{figure}[t!]
\centering
\begin{tikzpicture}
\begin{axis}[scale=0.65,
width=4.5in,
height=3.5in,
scale only axis,
xmin=0,
xmax=.5,
xmajorgrids,
xlabel={$q$},
ymin=1,
ymax=1.8,
ymajorgrids,
ylabel={$R$},
axis x line*=bottom,
axis y line*=left,
legend pos=north west,
legend style={draw=none,fill=none,legend cell align=left, font=\scriptsize}
]
\addplot+[mark=none,samples=500] {1.6};
%\addplot+[mark=none,samples=500] {.8-x*ln(x)/ln(2)-(1-x)*ln(1-x)/ln(2)};

\addplot+[mark=none,samples=500] {.8+.8-1-x*ln(x)/ln(2)-(1-x)*ln(1-x)/ln(2)};
%\addlegendentry{$f_1(C,\rho)$};
\addplot+[mark=none,samples=500] {-x*ln(x)/ln(2)-(1-x)*ln(1-x)/ln(2)+1-x};
%\addlegendentry{$f_2(C,\rho)$};
%\addplot+[mark=none,samples=500] {.77+.77-1-(1-x)*(-(0.238108599560538)*ln(0.238108599560538)/ln(2)-(1-(0.238108599560538))*ln(1-(0.238108599560538))/ln(2))-x+2*(-(1/2*x+(1-x)*(1-0.238108599560538))*ln(1/2*x+(1-x)*(1-0.238108599560538))/ln(2)-(1-(1/2*x+(1-x)*(1-0.238108599560538)))*ln(1-(1/2*x+(1-x)*(1-0.238108599560538)))/ln(2))};
%\addlegendentry{$f_3(C,\rho)$};
  \addplot[color=black,solid,line width=1.0pt,]
 table[row sep=crcr]{
                       0   1.150047759582758\\
   0.010000000000000   1.166067675773409\\
   0.020000000000000   1.181735630834823\\
   0.030000000000000   1.197059885380733\\
   0.040000000000000   1.212048242897978\\
   0.050000000000000   1.226708085539892\\
   0.060000000000000   1.241046406259174\\
   0.070000000000000   1.255069837736192\\
   0.080000000000000   1.268784678492128\\
   0.090000000000000   1.282196916520977\\
   0.100000000000000   1.295312250727874\\
   0.110000000000000   1.308136110422217\\
   0.120000000000000   1.320673673080961\\
   0.130000000000000   1.332929880569494\\
   0.140000000000000   1.344909453983621\\
   0.150000000000000   1.356616907255805\\
   0.160000000000000   1.368056559651349\\
   0.170000000000000   1.379232547265126\\
   0.180000000000000   1.390148833616533\\
   0.190000000000000   1.400809219429059\\
   0.200000000000000   1.411217351671129\\
   0.210000000000000   1.421376731926374\\
   0.220000000000000   1.431290724154078\\
   0.230000000000000   1.440962561894021\\
   0.240000000000000   1.450395354964307\\
   0.250000000000000   1.459592095695662\\
   0.260000000000000   1.468555664741367\\
   0.270000000000000   1.477497285106324\\
   0.280000000000000   1.487592516807782\\
   0.290000000000000   1.497134079636488\\
   0.300000000000000   1.506143427334505\\
   0.310000000000000   1.514640860645825\\
   0.320000000000000   1.522645607976059\\
   0.330000000000000   1.530175896791584\\
   0.340000000000000   1.537249017197465\\
   0.350000000000000   1.543881378878258\\
   0.360000000000000   1.550088562381492\\
   0.370000000000000   1.555885365558830\\
   0.380000000000000   1.561285845846141\\
   0.390000000000000   1.566303358954537\\
   0.400000000000000   1.570950594454669\\
   0.410000000000000   1.576700308302004\\
   0.420000000000000   1.582251285667985\\
   0.430000000000000   1.587605115373997\\
   0.440000000000000   1.592763334517395\\
   0.450000000000000   1.597727430292778\\
   0.460000000000000   1.602498841717794\\
   0.470000000000000   1.607078961269345\\
   0.480000000000000   1.611469136435629\\
   0.490000000000000   1.615670671189026\\
   0.500000000000000   1.619684827384522\\};
%\addlegendentry{$f^\prime_4(C,\rho)$};
  \addplot[color=black,dotted,line width=.5pt,]
 table[row sep=crcr]{
 0  1.570950594454669\\
   %                    .4 1.570950594454669\\
                       .5 1.570950594454669\\
                       };
  \addplot[color=black,dotted,line width=.5pt,]
 table[row sep=crcr]{
                       .2688   1.570950594454669\\
                       .2688 0\\
                       };
                         \addplot[color=black,dotted,line width=.5pt,]
 table[row sep=crcr]{
                       .4   1.570950594454669\\
                       .4 0\\
                       };

\draw[fill=none] (axis cs:.05,1.6)  node[above right] {$g_1(C_1,C_2)$};
\draw[fill=none] (axis cs:0.015,1.32)  node[above right] {$g_3(q)$};
\draw[fill=none] (axis cs:.1,1.325)  node[above right] {\rotatebox{25}{$g_5(C_1,C_2,q,\alpha_\eta)$}};
\draw[fill=none] (axis cs:.18,1.2)  node[above right] {$g_0(C_1,C_2,q)$};
%\node at (axis cs:4,16) [pin={-10:$x^2$},inner sep=0pt] {};
%\draw[fill] (axis cs:{0.3333,1}) circle [radius=3pt] node[above left] {$\frac{1}{3}$};
\draw[fill] (axis cs:{0.4,1}) circle [radius=3pt] node[above left] {$\eta$};
\draw[fill] (axis cs:{.2688,1}) circle [radius=3pt] node[above left] {$\tilde{\eta}$};
\draw[fill] (axis cs:{0,1.570950594454669}) circle [radius=3pt] node[below right] {$R^{(l)}_{\max}$};

%\draw[fill] (axis cs:{0.48,1}) circle [radius=3pt] node[above left] {$p^{(3)}$};

\end{axis}
\end{tikzpicture}
\caption{Several functions of $q$. At $q=\eta$, the curves $g_0(C_1,C_2,q)$, $g_3(q)$, and  $g_5(C_1,C_2,q,\alpha_\eta)$ all have the same value.}
\label{functionscross}
\end{figure}

It remains to show that $g_5(C_1,C_2,q,\alpha_{\eta})$ is non-decreasing in $q$, $q\leq \eta$. $g_5(C_1,C_2,q,\alpha_{\eta})$ is continuous and piece-wise concave. We thus look at the following two regimes and prove that $g_5(C_1,C_2,q,\alpha_{\eta})$ is non-decreasing in both regimes: $q\leq \tilde{\eta}$ and $\tilde{\eta}\leq q\leq {\eta}$ where $\tilde{\eta}$ and $\eta$ are the two solutions of  $R_{\max}^{(l)}= 1+h_2(q)-q$. 
\begin{itemize}
\item $q\leq \tilde{\eta}$: Here, we have $R_{\max}^{(l)}\geq 1+h_2(q)-q$ and 
\begin{align}
g_5(C_1,C_2,q,\alpha_\eta)&=g_0(C_1,C_2,q)+I_{\alpha_\eta,q}(X_1;X_2|U)\nonumber\\
&=C_1+C_2-(1-q) h_2(\alpha_\eta)-q+2h_2\left(\alpha_\eta\star\frac{q}{2}\right)-1.
\label{balaI}
\end{align}
 The RHS of \eqref{balaI} is  concave in $q$. By showing that this function is non-decreasing at $q=\eta$ we  prove that it is non-decreasing in $q$, $q\leq \eta$. We first show that $I_{\alpha_\eta,q}(X_1;X_2|U)$ has a zero derivative at $q=\eta$: \begin{align}
\frac{\partial I_{\alpha_\eta,q}(X_1;X_2|U)}{\partial q}%&=h_2(\alpha_{\eta})-1-\log\left(\frac{1-q}{q}\right)-(1-2\alpha_\eta)\log\left( \frac{(1-q)(\alpha_{\eta})+\frac{1}{2}q}{(1-q)(1-\alpha_{\eta})+\frac{1}{2}q}\right)\\
&=h_2(\alpha_{\eta})-1-\log\left(\frac{1-q}{q}\right)-(1-2\alpha_\eta)\log\left(\frac{\alpha_{\eta}+\frac{q}{2(1-q)}}{(1-\alpha_{\eta})+\frac{q}{2(1-q)}}\right)\nonumber\\
&=h_2(\alpha_{\eta})-1-\log\left(\frac{1-q}{q}\right)-(1-2\alpha_\eta)\log\left(\frac{\alpha_{\eta}+\frac{q}{2(1-q)}}{(1-\alpha_{\eta})+\frac{q}{2(1-q)}}\times\frac{\frac{q}{2(1-q)}-\alpha_\eta}{\frac{q}{2(1-q)}-\alpha_\eta}\right)\nonumber\\
&=h_2(\alpha_{\eta})-1-\log\left(\frac{1-q}{q}\right)-(1-2\alpha_\eta)\log\left(\frac{(\frac{q}{2(1-q)})^2-\alpha^2_{\eta}}{(\frac{q}{2(1-q)})^2-\alpha_\eta(1-\alpha_{\eta})+\frac{q}{2(1-q)}(1-2\alpha_\eta)}\right).
\end{align}
We use \eqref{alphaeta} to write
\begin{align}
\left.\frac{\partial I_{\alpha_\eta,q}(X_1;X_2|U)}{\partial q}\right|_{q=\eta}&=
h_2(\alpha_{\eta})-\log\left(\frac{2(1-\eta)}{\eta}\right)-(1-2\alpha_\eta)\log\left(\frac{\alpha_\eta(1-2\alpha_\eta)}{\frac{\eta}{2(1-\eta)}(1-2\alpha_\eta)}\right)\nonumber\\
&=
h_2(\alpha_{\eta})+\frac{1}{2}\log\left(\alpha_\eta(1-\alpha_\eta)\right)-\left(1-2\alpha_\eta\right)\log\left(\alpha_\eta\right)+\left(\frac{1}{2}-\alpha_\eta\right)\log(\alpha_\eta(1-\alpha_\eta))\nonumber\\
&=0.\label{moshtaghzero}
\end{align}
At $q=\eta$, $I_{\alpha_\eta,q}(X_1;X_2|U)$ has a zero derivative  and $g_0(C_1,C_2,q)$ is non-decreasing. So the RHS of \eqref{balaI} is non-decreasing at $q=\eta$, and since it is concave, it is also non-decreasing at all $q$, $q\leq \eta$, and in particular at all $q$, $q\leq \tilde{\eta}$.
\item $\tilde{\eta}\leq q\leq {\eta}$: In this regime, we have
\begin{align}
g_5(C_1,C_2,q,\alpha_\eta)&=g_0(C_1,C_2,q)+I_{\alpha_\eta,q}(X_1;X_2|U)+1-h_2\left(\alpha_\eta\star\left(\frac{q}{2}+(1-q)h_2^{-1}\left(\frac{\left(R_{\max}^{(l)}-h_2(q)\right)^+}{1-q}\right)\right)\right)\label{khozaval2}\\
&=C_1+C_2-(1-q) h_2(\alpha)-q+2h_2\!\left(\alpha\star\frac{q}{2}\right)-h_2\!\left(\!\alpha\star\!\left(\frac{q}{2}+(1-q)h_2^{-1}\!\left(\!\frac{\left(R_{\max}^{(l)}-h_2(q)\right)^+}{1-q}\!\right)\right)\!\right).
\label{khozaval}
\end{align}
The RHS of \eqref{khozaval} is concave in $q$ (see Appendix \ref{concavity lemma}). Furthermore,  \eqref{khozaval} is non-decreasing  at $q=\eta$. To see this, one can take its derivative with respect to $q$ and evaluate it at $q=\eta-\epsilon$, $\epsilon\to 0$:
\begin{align}
\lim_{\epsilon\to 0}\left.\frac{\partial g_5(C_1,C_2,q,\alpha_\eta)}{\partial q}\right|_{q=\eta-\epsilon}&\stackrel{(a)}{=}\log\left(\frac{1-\eta}{\eta}\right)+\left.\frac{\partial I_{\alpha_\eta,q}(X_1;X_2|U)}{\partial q}\right|_{q=\eta}-(1-2\alpha_\eta)^2\left(1-h_2^{\prime}(\eta)\right)(1-\eta)\nonumber\\
&\stackrel{(b)}{=}\log\left(\frac{1-\eta}{\eta}\right)-\left(1-\left(\frac{\eta}{1-\eta}\right)^2\right)\left(1-\log\left(\frac{1-\eta}{\eta}\right)\right)(1-\eta)\nonumber\\
&=\log\left(\frac{1-\eta}{\eta}\right)-\left(1-\frac{\eta}{1-\eta}\right)\left(1-\log\left(\frac{1-\eta}{\eta}\right)\right)
\label{moshtagh}
\end{align}
where $(a)$ follows by differentiating \eqref{khozaval2} with respect to $q$ and evaluating it at $q=\eta$ and $(b)$ follows by \eqref{moshtaghzero} and \eqref{alphaeta}. The function $\log\left(x\right)-(1-\frac{1}{x})\left(1-\log(x)\right)$ is equal to $0$ at $x=1$ and is non-decreasing for $x\geq 1$; therefore, \eqref{moshtagh} is non-negative and $g_5(C_1,C_2,q,\alpha_\eta)$ is non-decreasing in $q$, $\tilde{\eta}\leq q\leq \eta$.
\end{itemize}

\ifCLASSOPTIONcaptionsoff
  \newpage
\fi

% trigger a \newpage just before the given reference
% number - used to balance the columns on the last page
% adjust value as needed - may need to be readjusted if
% the document is modified later
%\IEEEtriggeratref{8}
% The "triggered" command can be changed if desired:
%\IEEEtriggercmd{\enlargethispage{-5in}}

% references section

% can use a bibliography generated by BibTeX as a .bbl file
% BibTeX documentation can be easily obtained at:
% http://www.ctan.org/tex-archive/biblio/bibtex/contrib/doc/
% The IEEEtran BibTeX style support page is at:
% http://www.michaelshell.org/tex/ieeetran/bibtex/
%\bibliographystyle{IEEEtran}
% argument is your BibTeX string definitions and bibliography database(s)
%\bibliography{IEEEabrv,../bib/paper}
%
% <OR> manually copy in the resultant .bbl file
% set second argument of \begin to the number of references
% (used to reserve space for the reference number labels box)

% biography section
\bibliographystyle{IEEEtran}
\bibliography{bibliographyMAC}
% 
% If you have an EPS/PDF photo (graphicx package needed) extra braces are
% needed around the contents of the optional argument to biography to prevent
% the LaTeX parser from getting confused when it sees the complicated
% \includegraphics command within an optional argument. (You could create
% your own custom macro containing the \includegraphics command to make things
% simpler here.)
%\begin{IEEEbiography}[{\includegraphics[width=1in,height=1.25in,clip,keepaspectratio]{mshell}}]{Michael Shell}
% or if you just want to reserve a space for a photo:

%\begin{IEEEbiographynophoto}{Michael Shell}
%Biography text here.
%\end{IEEEbiographynophoto}

% if you will not have a photo at all:
%\begin{IEEEbiographynophoto}{John Doe}
%Biography text here.
%\end{IEEEbiographynophoto}

% insert where needed to balance the two columns on the last page with
% biographies
%\newpage

% You can push biographies down or up by placing
% a \vfill before or after them. The appropriate
% use of \vfill depends on what kind of text is
% on the last page and whether or not the columns
% are being equalized.

%\vfill

% Can be used to pull up biographies so that the bottom of the last one
% is flush with the other column.
%\enlargethispage{-5in}

% that's all folks
\end{document}